\newcommand{\emphX}[1]{\textbf{#1}}
\newcommand{\induce}[2]{\mbox{$ #1 \langle #2 \rangle$}}
\newcommand{\be}{\begin{enumerate}}
\newcommand{\ee}{\end{enumerate}}
\newcommand{\bd}{\begin{description}}
\newcommand{\ed}{\end{description}}
\newcommand{\beq}{\begin{equation}}
\newcommand{\eeq}{\end{equation}}
\newcommand{\2}{\vspace{2mm}}
\renewenvironment{proof}[1][]{\par \noindent {\bf Proof#1}.\ }{\hfill$\Box$
\par \vspace{11pt}}
\newtheorem{theorem}{Theorem}
\newtheorem{lemma}[theorem]{Lemma}
\newtheorem{proposition}[theorem]{Proposition}
\newtheorem{corollary}[theorem]{Corollary}
\newtheorem{claim}{Claim}[theorem]
\theoremstyle{definition}
\newtheorem{question}[theorem]{Question}
\newtheorem{problem}[theorem]{Problem}
\newcommand{\Gcol}{$(\delta^+\geq1, \delta^-\geq 1)$-bipartite-colouring}
\newcommand{\Gpart}{$(\delta^+\geq1, \delta^-\geq 1)$-bipartite-partition}
\begin{document}
\bibliographystyle{plain}
%\pagewiselinenumbering
%\setpagewiselinenumbers
%\modulolinenumbers[1]
%\linenumbers

\title{Bipartite spanning sub(di)graphs induced by $2$-partitions}
\author{J. Bang-Jensen$^1$, S. Bessy$^2$, F. Havet$^3$ and A. Yeo$^{1,4}$}
\date{\today}
\maketitle

\begin{center}
{\small $^1$ Department of Mathematics and Computer Science, University of Southern 
Denmark, Odense DK-5230, Denmark, Email: \verb!{jbj,yeo}@imada.sdu.dk!. Financial support: Danish research council, grant number 1323-00178B and Labex UCN@Sophia\\
$^2$ LIRMM, Universit\'e de Montpellier, Montpellier, France.
Email: \verb!stephane.bessy@lirmm.fr!. \\Financial support: OSMO project, 
 Occitanie regional council.\\
$^3$ Universit\'e C\^ote d'Azur, CNRS, I3S  and INRIA, Sophia Antipolis, France.
Email: \verb!frederic.havet@cnrs.fr!. \\
Financial support: ANR-13-BS02-0007 STINT.\\
$^4$ Department of Mathematics,  University of Johannesburg,  Auckland Park, 2006 South Africa. }
\end{center}

\begin{abstract}
 For a given $2$-partition $(V_1,V_2)$ of the vertices of a (di)graph
 $G$, we study properties of the spanning bipartite subdigraph
 $B_G(V_1,V_2)$ of $G$ induced by those arcs/edges that have one end
 in each $V_i$. We determine, for all pairs of non-negative integers
 $k_1,k_2$, the complexity of deciding whether $G$ has a 2-partition
 $(V_1,V_2)$ such that each vertex in $V_i$ has at least $k_i$
 (out-)neighbours in $V_{3-i}$. We prove that it is ${\cal
   NP}$-complete to decide whether a digraph $D$ has a 2-partition
 $(V_1,V_2)$ such that each vertex in $V_1$ has an out-neighbour in
 $V_2$ and each vertex in $V_2$ has an in-neighbour in $V_1$. The
 problem becomes polynomially solvable if we require $D$ to be
 strongly connected. We give a characterisation, based on the
 so-called strong component digraph of a non-strong digraph of the
 structure of ${\cal NP}$-complete instances in terms of their strong
 component digraph. When we want higher in-degree or out-degree
 to/from the other set the problem becomes ${\cal NP}$-complete even
 for strong digraphs. A further result is that it is ${\cal
   NP}$-complete to decide whether a given digraph $D$ has a
 $2$-partition $(V_1,V_2)$ such that $B_D(V_1,V_2)$ is strongly
 connected. This holds even if we require the input to be a highly
 connected eulerian digraph.\\

\noindent{}{\bf Keywords:} $2$-partition, minimum out-degree, spanning
bipartite subdigraph, eulerian, strong spanning subdigraph.
\end{abstract}

\section{Introduction}
%%%%%%%%%%%

A {\bf $2$-partition} of a graph or digraph $G$ is a vertex partition
$(V_1,V_2)$ of its vertex set $V(G)$.  If $(V_1, V_2)$ is a
$2$-partition of a graph (resp. digraph) $G$, the {\bf bipartite
  graph} (resp. {\bf digraph}) {\bf induced by} $(V_1,V_2)$, denoted
by $B_G(V_1,V_2)$, is the spanning bipartite graph (resp. digraph)
induced by the edges (resp. arcs) having one end in each set of the
partition.

The following result is well-known.
\begin{proposition}\label{prop:easy}
 Every undirected graph $G$ admits
a $2$-partition $(V_1,V_2)$ such that $d_{B_G(V_1,V_2)}(v)\geq d_G(v)/2$
for every vertex of $G$. 
\end{proposition}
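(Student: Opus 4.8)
The plan is to use an extremal argument on the cut size. Among all $2$-partitions of $G$, choose one, say $(V_1,V_2)$, that maximises the number of crossing edges $|E(B_G(V_1,V_2))|$; such a partition exists because $V(G)$ is finite and there are only finitely many $2$-partitions. I will show that this maximiser already satisfies the required degree condition at every vertex.

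First I would set up the local quantities at an arbitrary vertex $v$; say $v\in V_1$ without loss of generality. Let $a$ denote the number of neighbours of $v$ in $V_2$ (its crossing edges, i.e. $d_{B_G(V_1,V_2)}(v)=a$) and let $b$ denote the number of neighbours of $v$ in $V_1$ (its non-crossing edges), so that $a+b=d_G(v)$. The key computation is to track how the cut size changes if $v$ is moved from $V_1$ to $V_2$: the $a$ edges from $v$ to $V_2$ cease to cross, while the $b$ edges from $v$ to $V_1$ begin to cross, so the cut size changes by exactly $b-a$, with all other crossing edges unaffected.

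Next I would argue by contradiction. Suppose some vertex $v$ fails the condition, i.e. $a<d_G(v)/2$. Since $a+b=d_G(v)$, this gives $2a<a+b$, hence $a<b$ and therefore $b-a\geq 1>0$. Moving $v$ to the other side would then strictly increase the cut size, contradicting the maximality of $(V_1,V_2)$. Consequently $a\geq d_G(v)/2$ for every vertex $v$, which is precisely $d_{B_G(V_1,V_2)}(v)\geq d_G(v)/2$, completing the proof.

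There is no serious obstacle here; the argument is a one-step local switching on an extremal partition. The only point requiring a little care is the treatment of the factor $\tfrac12$ when $d_G(v)$ is odd: because $a$ and $b$ are integers, the inequality $a<d_G(v)/2$ is equivalent to $a<b$, so a violation always yields an integral improvement $b-a\geq 1$ in the cut, which is what makes the extremal argument bite. I would also remark that the same maximiser simultaneously handles all vertices, so no iterative adjustment is needed.
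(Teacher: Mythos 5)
Your proof is correct and is exactly the standard argument the paper has in mind: the paper states this proposition without proof as ``well-known,'' and for the closely related Theorem~\ref{SbipconG} it explicitly suggests the same device of taking a $2$-partition maximising the number of crossing edges. Your local-switching computation and the handling of the half-integrality via $a<d_G(v)/2 \Leftrightarrow a<b$ are both sound.
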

This proposition implies that every graph with no isolated vertex has
a $2$-partition $(V_1,V_2)$ such that $\delta(B_G(V_1,V_2))\geq 1$.
Consequently, one can decide in polynomial time whether a graph
has a partition such that $d_{B_G(V_1,V_2)}(v)\geq 1$ for all $v$
: if the graph has an isolated vertex, the answer is `No', otherwise
it is `Yes'.  We first study the existence of $2$-partition with some
higher degree constraints on the vertices in the bipartite graph
induced by it.  More precisely, we are interested in the following
decision problem for some fixed positive integers $k_1$ and $k_2$.

\begin{problem}[\sc $(\delta\geq k_1,\delta\geq k_2)$-bipartite-partition]
\label{prob-undir}~\\
\underline{Input}: A graph $G$.\\
\underline{Question}: Does $G$ admit a $2$-partition $(V_1,V_2)$ such that 
$d_{B_G(V_1,V_2)}(v_i)\geq k_i$ for all $v_i\in V_i$, $i\in \{1,2\}$?\\
\end{problem}

As noted above, {\sc $(\delta\geq 1,\delta\geq
  1)$-bipartite-partition} is polynomial-time solvable.  We prove in
Section~\ref{sec:undirected}, that {\sc $(\delta\geq 1,\delta\geq
  2)$-bipartite-partition} is also polynomial-time solvable, and that
{\sc $(\delta\geq k_1,\delta\geq k_2)$-bipartite-partition} is ${\cal
  NP}$-complete when $k_1+k_2\geq 4$.

\medskip

We then consider directed analogues to Problem~\ref{prob-undir}. Many
others $2$-partition problems have already been studied.  The papers
\cite{bangTCS640,bangTCS636} determined the complexity of a large
number of $2$-partition problems where we seek a $2$-partition
$(V_1,V_2)$ with specified properties for the digraphs
$\induce{D}{V_i}$ induced by this partition.  In \cite{bangTCS636} the
authors asked whether there exists a polynomial-time algorithm to
decide whether a given digraph has a $2$-partition $(V_1,V_2)$ with
$\Delta^+(\induce{D}{V_i})< \Delta^+(D)$ for $i=1,2$. This was
answered affirmatively in \cite{bangman17} where also the complexity
of deciding whether a digraph $D$ has a 2-partition $(V_1,V_2)$ so
that $\Delta^+(\induce{D}{V_i})\leq k_i$ was determined for all
non-negative integers $k_1,k_2$.

Thomassen \cite{thomassenEJC6} constructed an infinite class of
strongly connected digraphs ${\cal T}=T_1,T_2,\ldots{},T_k,\ldots{}$
with the property that for each $k$, $T_k$ is $k$-out-regular and has
no even directed cycle. As remarked by Alon in \cite{alonCPC15} this
implies that we cannot expect any directed analogues of
Proposition~\ref{prop:easy}.
\begin{proposition} \label{monochromatic}
 For every $k\geq 1$, for every $2$-partition $(V_1,V_2)$ of $T_k$,
 some vertex $v$ has all its $k$ out-neighbours in the same part as
 itself, so $d^+_{B_D(V_1,V_2)}(v)=0$.
\end{proposition}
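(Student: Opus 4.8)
The plan is to argue by contradiction, reducing the statement to the single structural property that really matters here, namely that $T_k$ has no even directed cycle (the $k$-out-regularity will only be used implicitly, to guarantee out-degrees are positive). Suppose, towards a contradiction, that there were a $2$-partition $(V_1,V_2)$ of $T_k$ in which \emph{every} vertex $v$ has at least one out-neighbour in the opposite part, i.e.\ $d^+_{B_D(V_1,V_2)}(v)\geq 1$ for all $v$. This is exactly the negation of the conclusion (``some vertex has all its out-neighbours on its own side''), so deriving a contradiction from it proves the proposition.

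First I would pass to the spanning bipartite subdigraph $B=B_{T_k}(V_1,V_2)$, whose arc set consists precisely of those arcs of $T_k$ that cross between $V_1$ and $V_2$. The contradiction hypothesis says exactly that $\delta^+(B)\geq 1$, that is, every vertex has out-degree at least $1$ \emph{inside} $B$. Next I would invoke the standard fact that any digraph with minimum out-degree at least $1$ contains a directed cycle: starting from an arbitrary vertex and repeatedly following an out-arc, finiteness of $V(T_k)$ forces a vertex to repeat, and the portion of the walk between two occurrences of a repeated vertex is a directed cycle $C$. The point is that $C$ is built entirely from arcs of $B$.

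The key observation is then that any directed cycle lying in $B$ uses only arcs crossing the partition, so it alternates between $V_1$ and $V_2$ and therefore has even length. Since $C$ is also a directed cycle of $T_k$, this contradicts Thomassen's property that $T_k$ has no even directed cycle, and the proof is complete. I do not expect a genuine obstacle here: the argument is short, and the only thing to be careful about is to extract the cycle inside $B$ (rather than inside $T_k$), so that the ``crossing'' conclusion—and hence evenness—is automatic; extracting a cycle from all of $T_k$ would lose the parity control. It is worth remarking that this argument shows more than the stated regular case: the same reasoning applies to any digraph with no even directed cycle and no vertex of out-degree $0$, confirming Alon's remark that no directed analogue of Proposition~\ref{prop:easy} can hold.
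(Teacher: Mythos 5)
Your proof is correct, and it is exactly the argument the paper has in mind: the paper states this proposition without proof, as an immediate consequence of Alon's remark, and the implicit reasoning is precisely that a counterexample partition would give the crossing subdigraph $B$ minimum out-degree at least $1$, hence a directed cycle in $B$, which alternates between the parts and is therefore even, contradicting Thomassen's property of $T_k$. Your closing remark is also accurate: the argument needs nothing beyond ``no even directed cycle and $\delta^+\geq 1$'', so the $k$-out-regularity is irrelevant except to guarantee positive out-degrees.
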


The first directed analogue to Problem~\ref{prob-undir} that we study
is the following.
\begin{problem}[\sc $(\delta^+\geq k_1,\delta^+\geq k_2)$-bipartite-partition]
\label{outdegk1k2bippart}~\\
\underline{Input}: A digraph $D$.\\
\underline{Question}: Does $D$ admit a $2$-partition $(V_1,V_2)$ such that 
$d^+_{B_D(V_1,V_2)}(v_i)\geq k_i$ for all $v_i\in V_i$, $i\in \{1,2\}$?
\end{problem}

Observe that, when $k_1=0$ (or $k_2=0$), the problem is pointless
since the partition $(V(D), \emptyset)$ is the desired partition.  We
start in Section~\ref{bipoutdegsec} by using the result of
\cite{bangman17} mentioned to prove that Problem
\ref{outdegk1k2bippart} is polynomial-time solvable when $k_1=k_2=1$
and ${\cal NP}$-complete whenever $k_1+k_2\geq 3$. Then we study the
following problem.

\begin{problem}[\sc $(\delta^+\geq k_1,\delta^-\geq k_2)$-bipartite-partition]~\\
\underline{Input}: A digraph $D$.\\ \underline{Question}: Does $D$
admit a $2$-partition $(V_1,V_2)$ such that every vertex in $V_1$ has
at least $k_1$ out-neighbours in $V_2$ and every vertex in $V_2$ has
at least $k_2$ in-neighbour in $V_1$?
\end{problem}

We show in Section~\ref{sec:inoutbip} that {\sc $(\delta^+\geq
  1,\delta^-\geq 1)$-bipartite-partition} is ${\cal NP}$-complete but
becomes polynomial-time solvable when the input must be a strong
digraph.  We also characterise the ${\cal NP}$-complete instances in
terms of their strong component digraph.  Next, in
Section~\ref{sec:5}, we show that for any pair of positive integers
$(k_1,k_2)$ such that $k_1+k_2\geq 3$, {\sc $(\delta^+\geq
  k_1,\delta^-\geq k_2)$-bipartite-partition} is ${\cal NP}$-complete even when
restricted to strong digraphs.

\medskip

It is simple matter to show that a connected graph $G$ has a
$2$-partition inducing a connected bipartite graph.  Indeed, just
consider a spanning tree and its bipartition.  One can even show that
Theorem~\ref{SbipconG} below holds\footnote{Stephan Thomass\'e private
  communication, Lyon 2015.} (just consider a $2$-partition maximizing
the number of edges between the two sets). Recall that $\lambda(G)$ is
the {\bf edge-connectivity} of $G$, that is, the minimum number of
edges whose removal from $G$ results in a disconnected graph.

\begin{theorem}
\label{SbipconG}
Every graph $G$ has a $2$-partition $(V_1,V_2)$ such that
$\lambda(B_G(V_1,V_2))\geq \lfloor{}\lambda(G)/2\rfloor$.
\end{theorem}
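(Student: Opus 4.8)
The plan is to take a $2$-partition $(V_1,V_2)$ maximizing the number of edges in $B_G(V_1,V_2)$ and show it achieves the stated edge-connectivity bound. The starting observation is that such a maximal partition is locally optimal: no vertex $v$ can have more neighbours in its own part than in the other part, for otherwise moving $v$ across the partition would strictly increase the number of crossing edges. Hence for every vertex $v$ we have $d_{B_G(V_1,V_2)}(v)\geq \lceil d_G(v)/2\rceil \geq d_G(v)/2$, which is exactly the degree conclusion of Proposition~\ref{prop:easy}.

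The main work is to upgrade this degree statement to an edge-connectivity statement. First I would set $B:=B_G(V_1,V_2)$ and $\lambda:=\lambda(G)$, and let $(X, V(G)\setminus X)$ be a minimum edge cut of $B$, so that $d_B(X):=|\{\text{edges of } B \text{ with exactly one end in } X\}| = \lambda(B)$. The goal is to show $d_B(X)\geq \lfloor \lambda/2\rfloor$. The natural strategy is to relate the cut $d_B(X)$ in $B$ to the cut $d_G(X)$ in $G$, using the crucial fact that $G$ is connected, so $d_G(X)\geq \lambda$ whenever $X$ is a proper nonempty subset of $V(G)$. I would split the edges of $G$ leaving $X$ into those that are also edges of $B$ (i.e.\ crossing the partition $(V_1,V_2)$) and those that are internal to a part; the former contribute to $d_B(X)$, and I must bound the latter.

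The key step, which I expect to be the main obstacle, is controlling the edges of $G$ crossing the cut $(X,V(G)\setminus X)$ that do \emph{not} cross $(V_1,V_2)$ and therefore are lost in $B$. These are edges inside $V_1$ or inside $V_2$ joining $X$ to its complement. The idea is to use the local-optimality/degree bound not vertex-by-vertex but aggregated over the cut: consider the four cells $X\cap V_1$, $X\cap V_2$, $(V(G)\setminus X)\cap V_1$, $(V(G)\setminus X)\cap V_2$, and count edges of $G$ between them by type. By summing the degree inequality $d_B(v)\geq d_G(v)/2$ over the vertices of $X$ (or by a parity/averaging argument on the edges incident to $X$), one relates the number of $B$-edges leaving $X$ to the total number of $G$-edges leaving $X$, and the connectivity lower bound $d_G(X)\geq \lambda$ then forces $d_B(X)\geq \lambda/2$, whence $d_B(X)\geq \lfloor\lambda/2\rfloor$ since $d_B(X)$ is an integer.

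An alternative, possibly cleaner route avoids the degree bound entirely and argues directly by a local swapping/exchange argument on the minimum cut: starting from a cut $X$ of $B$ with $d_B(X)<\lfloor\lambda/2\rfloor$, I would attempt to re-choose the partition $(V_1,V_2)$ so as to convert enough of the $G$-edges crossing $X$ into $B$-edges, contradicting either the connectivity $\lambda(G)\geq\lambda$ or the maximality of $|E(B)|$. I expect the first approach (aggregated degree counting across the cut combined with $d_G(X)\geq\lambda$) to be the more robust one, and the bulk of the writing to consist of the careful edge-counting between the four cells; the floor in the bound is exactly what the integrality of $d_B(X)$ buys us from the $\lambda/2$ estimate.
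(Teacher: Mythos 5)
Your starting point (a $2$-partition maximising the number of crossing edges) is exactly the one the paper indicates, but the key step of your plan is left as an ``idea'', and the route you declare more robust does not actually close the gap. Write $\bar X=V(G)\setminus X$ and, for disjoint vertex sets $S,T$, let $e(S,T)$ be the number of $G$-edges between them. Summing the local-optimality inequality $d_B(v)\geq d_G(v)/2$ over $v\in X$ yields $2e_B(X)+d_B(X)\geq e_G(X)+d_G(X)/2$, where $e_B(X)$ and $e_G(X)$ count edges of $B$ and of $G$ inside $X$. The error term $2e_B(X)-e_G(X)$ can be arbitrarily large and positive (e.g.\ when almost all edges inside $X$ already cross the partition), so this inequality puts no useful lower bound on $d_B(X)$. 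The underlying reason is that the vertex-by-vertex degree bound only controls how many neighbours of $v$ lie across the partition, not how many of them lie across the cut $(X,\bar X)$; the ``lost'' edges of the cut cannot be charged to individual vertices.

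What actually works is your dismissed alternative, made precise as a \emph{set} exchange rather than single-vertex moves: compare $(V_1,V_2)$ with the partition obtained by swapping the colours of \emph{all} vertices of $\bar X$ simultaneously, i.e.\ with $\bigl((X\cap V_1)\cup(\bar X\cap V_2),\,(X\cap V_2)\cup(\bar X\cap V_1)\bigr)$. The edges inside $X$ and inside $\bar X$ contribute identically to both cuts, so maximality of the original partition gives
$$e(X\cap V_1,\bar X\cap V_2)+e(X\cap V_2,\bar X\cap V_1)\ \geq\ e(X\cap V_1,\bar X\cap V_1)+e(X\cap V_2,\bar X\cap V_2),$$
i.e.\ $d_B(X)\geq d_G(X)-d_B(X)$, hence $d_B(X)\geq d_G(X)/2\geq\lambda(G)/2$ for every proper nonempty $X$. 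This holds for all such $X$ at once, so it also shows $B$ is connected when $\lambda(G)\geq 1$ and gives $\lambda(B)\geq\lceil\lambda(G)/2\rceil$, slightly more than claimed. Your write-up should make this recolouring-of-$\bar X$ comparison the main step; as it stands, the proof is not complete.
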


We thus study the directed analogues, called {\bf strong
  $2$-partitions}, which are $2$-partitions $(V_1,V_2)$ of a digraph
$D$ such that $B_D(V_1,V_2)$ is strong.  It is a well-known phenomenon
that results on edge-connectivity for undirected graphs often have a
counterpart for eulerian digraphs. Unfortunately, we show that it is
not the case for Theorem~\ref{SbipconG} : for every $r>0$, there
exists an $r$-strong eulerian digraph $D$ which has no strong
$2$-partition (Theorem~\ref{thm1}).  We then show that the following
problem is ${\cal NP}$-compete even when restricted to $r$-strong
digraphs (for some $r>0$).

\begin{problem}[\sc Strong $2$-partition]~\\
  \underline{Input}: A digraph $D$.\\
  \underline{Question}: Does $D$
admit a $2$-partition $(V_1,V_2)$ such that $B_D(V_1,V_2)$ is strong?
\end{problem}

We conclude the paper with a section presenting some remarks and open problems.

\section{Notation}
%%%%%%%%%%%%%%%%%%

Notation follows \cite{bang2009}.  We use the shorthand notation $[k]$
for the set $\{1,2,\ldots{},k\}$.  Let $D=(V,A)$ be a digraph with
vertex set $V$ and arc set $A$.

Given an arc $uv\in A$, we say that $u$ \emphX{dominates} $v$ and $v$
is \emphX{dominated} by $u$. If $uv$ or $vu$ (or both) are arcs of
$D$, then $u$ and $v$ are {\bf adjacent}.  If neither $uv$ or $vu$
exist in $D$, then $u$ and $v$ are {\bf non-adjacent}.  The {\bf
  underlying graph} of a digraph $D$, denoted by $UG(D)$, is obtained
from $D$ by suppressing the orientation of each arc and deleting
multiple copies of the same edge (coming from directed 2-cycles).  A
digraph $D$ is {\bf connected} if $UG(D)$ is a connected graph, and
the {\bf connected components} of $D$ are those of $UG(D)$.

The \emphX{subdigraph induced} by a set of vertices $X$ in a digraph
$D$, denoted by \induce{D}{X}, is the digraph with vertex set $X$ and
which contains those arcs from $D$ that have both end-vertices in
$X$. When $X$ is a subset of the vertices of $D$, we denote by $D-X$
the subdigraph $\induce{D}{V\setminus X}$. If $D'$ is a subdigraph of
$D$, for convenience we abbreviate $D-V(D')$ to $D-D'$. The
\emphX{contracted digraph} $D/X$ is obtained from $D-X$ by adding a
`new' vertex $x$ not in $V$ and by adding for every $u \in D-X$ the
arc $ux$ (resp. $xu$) if $u$ has an out-neighbour (resp. in-neighbour)
in $X$ (in $D$).

The \emphX{in-degree} (resp. \emphX{out-degree}) of $v$, denoted by
$d^-_D(v)$ (resp. $d^+_D(v)$), is the number of arcs from $V\setminus
\{v\}$ to $v$ (resp. $v$ to $V\setminus \{v\}$).  A {\bf sink} is a
vertex with out-degree $0$ and a {\bf source} is a vertex with
in-degree $0$.  The {\bf degree} of $v$, denoted by $d_D(v)$, is given
by $d_D(v)=d^+_D(v)+d^-_D(v)$.  Finally the \emphX{minimum
  out-degree}, respectively \emphX{minimum in-degree} and
\emphX{minimum degree} is denoted by $\delta^+(D)$, respectively
$\delta^-(D)$ and $\delta(D)$.  The \emphX{minimum semi-degree} of
$D$, denoted by $\delta^0(D)$, is defined as
$\delta^0(D)=\min\{\delta^+(D),\delta^-(D)\}$.  A vertex is {\bf
  isolated} if it has degree $0$.

\noindent{}A digraph is {\bf $k$-out-regular} if all its vertices have
out-degree $k$.

A {\bf $(u,v)$-path} is a directed path from $u$ to $v$.  A digraph is
{\bf strongly connected} (or {\bf strong}) if it contains a
$(u,v)$-path for every ordered pair of distinct vertices $u,v$.  A
digraph $D$ is {\bf $k$-strong} if for every set $S$ of less than $k$
vertices the digraph $D-S$ is strong.  A {\bf strong component} of a
digraph $D$ is a maximal subdigraph of $D$ which is strong. A strong
component is {\bf trivial}, if it has order $1$. An {\bf initial}
(resp. {\bf terminal}) strong component of $D$ is a strong component
$X$ with no arcs entering (resp. leaving) $X$ in $D$.

Let $D$ be a strongly connected digraph. If $S$ is a strong subdigraph
of $D$, then an {\bf $S$-handle} $H$ of $D$ is a directed walk
$(s,v_1,\ldots,v_\ell,t)$ such that:
\begin{itemize}
\item the $v_i$ are distinct and in $V(D-S)$, and
\item $s,t\in V(S)$ (with possibly $s=t$ and in this case $H$ is a
  directed cycle, otherwise it is a directed path).
\end{itemize}

The {\bf length} of a handle is the number of its arcs, here $\ell+1$.
A handle of length one is said to be {\bf trivial}. 

%Given a strongly connected digraph $D$, a {\bf handle decomposition}
%of $D$ starting at $v\in V(D)$ is a triple $(v,(H_i)_{1\le i \le
%  p},(D_i)_{0\le i \le p})$, where
%\begin{itemize}
%\item $V(D_0)=\{v\}$,
%\item for $1\le i\le p$, $H_i$ is a $D_{i-1}$-handle and $D_i
%  =D_{i-1}\cup H_i$, and
%\item $D=D_p$.
%\end{itemize}
%A handle decomposition is completely determined by $(H_i)_{1\le i \le
 % p}$, therefore we write $(H_i)_{1\le i \le p}$ instead of
% $(v,(H_i)_{1\le i \le p},(D_i)_{0\le i \le p})$.

An {\bf out-tree} rooted at the vertex $s$, also called an {\bf
  $s$-out-tree}, is a connected digraph $T^+_s$ such that
$d^-_{T^+_s}(s) =0$ and $d^-_{T^+_s}(v)=1$ for every vertex $v$
different from $s$.  Equivalently, for every $v\in V(T^+_s)$ there is a unique $(s,v)$-path in $T^+_s$.  The directional
dual notion is the one of an {\bf $s$-in-tree}, that is, a connected
digraph $T^-_s$ such that $d^+_{T^-_s}(s) =0$ and $d^+_{T^-_s}(v)=1$
for every vertex $v$ different from $s$.

An {\bf ${s}$-out-branching} (resp. {\bf $s$-in-branching}) is a
spanning $s$-out-tree (resp. $s$-in-tree).  We use the notation
$B^+_s$ (resp. $B^-_s$) to denote an $s$-out-branching (resp. an
$s$-in-branching). \\

In our ${\cal NP}$-completeness proofs we use reductions from the
well-known 3-SAT problem,  and two variants {\sc Not-All-Equal-3-SAT}, and  {\sc
  Monotone Not-all-equal-3-SAT}. In the later the
Boolean formula ${\cal F}$ consists of clauses all of
whose literals are non-negated variables. 
In 3-SAT, we want to decide whether there is a truth assignment that {\bf satisfies ${\cal F}$} that is such that every clause has a true literal.
In {\sc Not-All-Equal-3-SAT} and  {\sc
  Monotone Not-all-equal-3-SAT}, we want to decide whether there is a {\bf NAE truth assignment} , that is a truth assigment such that every clause has a true literal and a false literal.
  Those two problems are ${\cal NP}$-complete
\cite{shaeferSTOC10}.

Let ${\cal P}_1,{\cal P}_2$ be properties of vertices in a
digraph (e.g. out-degree at least $1$). Then a {\bf $({\cal P}_1,{\cal
    P}_2)$-bipartite-partition} of a graph $D$ is a $2$-partition
$(V_1,V_2)$ such that the vertices of $V_i$ have property ${\cal P}_i$
in $B_D(V_1,V_2)$. For example, a $(\delta^+\geq 1,\delta^-\geq
1)$-bipartite-partition is a $2$-partition $(V_1,V_2)$ so that in
$B_D(V_1,V_2)$ the vertices of $V_1$ have out-degree at least $1$ and
the vertices of $V_2$ have in-degree at least $1$.  We also use the
same definition for (undirected) graphs.
The {\bf $2$-colouring associated to} a $2$-partition $(V_1,V_2)$ is the $2$-colouring $c$ defined by $c(x)=i$ if $x\in V_i$. 
A {\bf $({\cal P}_1,{\cal P}_2)$-colouring} is a $2$-colouring associated to a $({\cal P}_1,{\cal
    P}_2)$-bipartite-partition.

\section{$(\delta\geq k_1,\delta\geq k_2)$-bipartite-partitions}\label{sec:undirected}
%%%%%%%%%%%%%%%%%%%%%%%%%%%%%%%%%%%%%%%%%

In this section we give a complete characterisation of the complexity
of the $(\delta\geq k_1,\delta\geq k_2)$-bipartite-partition problem
for undirected graphs.  We first list an easy consequence of
Proposition \ref{prop:easy}.

\begin{corollary}
Every graph $G$ with $\delta{}(G)\geq 1$ has a $(\delta\geq
1,\delta\geq 1)$-bipartite-partition and every graph $G$ with at least
one edge has a $(\delta\geq 0,\delta\geq 1)$-bipartite-partition.
\end{corollary}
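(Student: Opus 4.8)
The plan is to derive both statements directly from Proposition~\ref{prop:easy}, which guarantees a $2$-partition $(V_1,V_2)$ with $d_{B_G(V_1,V_2)}(v)\geq d_G(v)/2$ for every vertex $v$. The key observation is that this single partition already does most of the work; I just need to check that the degree bounds it provides are strong enough to yield the two desired properties.

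\begin{proof}
Apply Proposition~\ref{prop:easy} to obtain a $2$-partition $(V_1,V_2)$ of $G$ with $d_{B_G(V_1,V_2)}(v)\geq d_G(v)/2$ for every vertex $v$.

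Suppose first that $\delta(G)\geq 1$. Then $d_G(v)\geq 1$ for every $v\in V(G)$, so $d_{B_G(V_1,V_2)}(v)\geq 1/2$. Since $d_{B_G(V_1,V_2)}(v)$ is a non-negative integer, this forces $d_{B_G(V_1,V_2)}(v)\geq 1$ for every vertex $v$, and in particular for every vertex of $V_1$ and every vertex of $V_2$. Hence $(V_1,V_2)$ is a $(\delta\geq 1,\delta\geq 1)$-bipartite-partition.

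Now suppose $G$ has at least one edge. A $(\delta\geq 0,\delta\geq 1)$-bipartite-partition requires no constraint on $V_1$ and requires every vertex of $V_2$ to have degree at least $1$ in $B_G(V_1,V_2)$. If $\delta(G)\geq 1$ we are done by the previous paragraph. Otherwise $G$ has isolated vertices; place all of them in $V_1$. Formally, let $I$ be the set of isolated vertices of $G$ and let $G'=G-I$, so $\delta(G')\geq 1$ and $G'$ has at least one edge since $G$ does. By the first part, $G'$ has a $(\delta\geq 1,\delta\geq 1)$-bipartite-partition $(V_1',V_2')$; set $V_1=V_1'\cup I$ and $V_2=V_2'$. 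The isolated vertices contribute no edges, so $B_G(V_1,V_2)=B_{G'}(V_1',V_2')$ and every vertex of $V_2=V_2'$ still has degree at least $1$, while no constraint is imposed on $V_1$. Thus $(V_1,V_2)$ is a $(\delta\geq 0,\delta\geq 1)$-bipartite-partition.
\end{proof}

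There is essentially no obstacle here: the entire content is the integrality rounding of the bound $d_G(v)/2\geq 1/2$ up to $1$, together with the trivial device of dumping isolated vertices into the unconstrained side $V_1$ to handle the second claim. The only point requiring a moment's care is ensuring that $G'$ retains an edge after deleting the isolated vertices, which is immediate since deleting isolated vertices removes no edges.
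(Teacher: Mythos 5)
Your proof is correct and follows exactly the route the paper intends: the corollary is stated without proof as an immediate consequence of Proposition~\ref{prop:easy}, via the same integrality argument $d_{B_G(V_1,V_2)}(v)\geq d_G(v)/2\geq 1/2$ forcing degree at least $1$. The extra care you take with isolated vertices for the second claim is a reasonable (and correct) elaboration of what the paper leaves implicit.
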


This statement can be generalized to $(\delta\geq
1,\delta\geq k)$-bipartite-partitions.

\begin{theorem}\label{thm:iff2}
If $G$ is a graph with $\delta(G) \geq k$, then $G$ has a $(\delta\geq
1,\delta\geq k)$-bipartite-partition, which can be found in polynomial
time.
\end{theorem}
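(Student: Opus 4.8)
The plan is to exhibit the desired partition explicitly, rather than through any optimization or local-switching argument. I would let $S$ be a maximal independent set of $G$ (which can be computed greedily in linear time), and then set $V_2 = S$ and $V_1 = V(G)\setminus S$. The key observation driving the whole proof is that a maximal independent set encodes \emph{both} required degree conditions at once: independence will handle the strong condition on $V_2$, and maximality will handle the weak condition on $V_1$.

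First I would check the condition on $V_2$. Since $S$ is independent, no vertex of $S$ has a neighbour inside $S$, so every $v\in V_2$ has all of its $d_G(v)$ neighbours in $V_1$. As $d_G(v)\geq \delta(G)\geq k$, this gives $d_{B_G(V_1,V_2)}(v)=d_G(v)\geq k$, exactly as required. Next I would check the condition on $V_1$: if some $v\in V_1=V(G)\setminus S$ had no neighbour in $S$, then $S\cup\{v\}$ would still be independent, contradicting the maximality of $S$; hence every $v\in V_1$ has at least one neighbour in $V_2$, so $d_{B_G(V_1,V_2)}(v)\geq 1$. A short final remark confirms that $(V_1,V_2)$ is a genuine $2$-partition: since $\delta(G)\geq k\geq 1$ the graph has an edge, so $S$ cannot be all of $V(G)$, whence $V_1\neq\emptyset$, while $V_2=S\neq\emptyset$ as long as $G$ has a vertex.

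I do not expect a genuine obstacle here: the only substantive step is recognising that a maximal independent set is precisely the right object, after which both verifications are immediate and the running time is dominated by the (polynomial) computation of $S$. If there is any subtlety, it is purely conceptual, namely noticing that the asymmetry between the two thresholds ($1$ for $V_1$ versus $k$ for $V_2$) matches exactly the asymmetry between the maximality and the independence of a maximal independent set, so that no extremal, counting, or flow-based machinery is needed.
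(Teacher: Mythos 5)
Your proof is correct and is essentially identical to the paper's: both take $V_2$ to be a maximal stable (independent) set, use maximality to give every vertex of $V_1$ a neighbour in $V_2$, and use independence together with $\delta(G)\geq k$ to give every vertex of $V_2$ at least $k$ neighbours in $V_1$. No further comment is needed.
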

\begin{proof}
Let $V_2$ be any maximal stable set in $G$. That is, for every $x
\not\in V_2$ the set $V_2 \cup \{x\}$ is not stable.  This
implies that every vertex not in $V_2$ has an edge to a vertex in
$V_2$ and as $V_2$ is stable and $\delta(G) \geq k$ every vertex
in $V_2$ has at least $k$ neighbours not in $V_2$. Therefore $(V(G)
\setminus V_2,V_2)$ is the desired partition. As a maximal stable set may be computed greedily,
the partition $(V(G) \setminus V_2,V_2)$ can be found
  in polynomial time.
\end{proof}

\subsection{Solving {\sc $(\delta\geq 1,\delta\geq  2)$-bipartite-partition} in polynomial time} 
%%%%%%%%%%%%%%%%%%%%%%%%%%%%%%%%%%%%%%%%%%%%%

\begin{theorem}\label{thm:iff}
Let $G$ be a graph with $\delta(G)=1$. Let $S_1$ be the set of
vertices of degree $1$ in $G$.  Then $G$ has a $(\delta\geq
1,\delta\geq 2)$-bipartite-partition if and only if $S_1$ is a stable
set and every vertex in $N(S_1)$ has either two neighbours in $S_1$ or
at least one neighbour in $V(G)\setminus (S_1\cup N(S_1))$.
\end{theorem}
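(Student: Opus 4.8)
The plan is to prove both directions by first pinning down which vertices are \emph{forced} into which part. For necessity, suppose $(V_1,V_2)$ is a $(\delta\geq 1,\delta\geq 2)$-bipartite-partition. A vertex of degree $1$ cannot have two neighbours in the opposite part, so every vertex of $S_1$ must lie in $V_1$. If two vertices of $S_1$ were adjacent, each would have its unique neighbour inside $V_1$ and hence no neighbour in $V_2$, contradicting the $\delta\geq 1$ requirement on $V_1$; thus $S_1$ is stable. Next, each $w\in N(S_1)$ is the unique neighbour of some $u\in S_1\subseteq V_1$, and since $u$ needs a neighbour in $V_2$, we are forced to put $w\in V_2$; hence $N(S_1)\subseteq V_2$. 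Finally, each such $w\in V_2$ needs at least two neighbours in $V_1$. Its neighbours lying in $N(S_1)$ are in $V_2$ and do not help, so the two required $V_1$-neighbours come from $S_1$ and from $R:=V(G)\setminus(S_1\cup N(S_1))$. If $w$ has at most one neighbour in $S_1$, a second $V_1$-neighbour must then lie in $R$, which is exactly the stated condition; so the condition is necessary.

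For sufficiency, assume $S_1$ is stable and the degree condition on $N(S_1)$ holds, and build a partition directly. Set $S_1\subseteq V_1$ and $N(S_1)\subseteq V_2$ as forced above. Split $R$ into $A$, the vertices of $R$ having a neighbour in $N(S_1)$, and $B=R\setminus A$. Since the degree-$1$ vertices are exactly $S_1$ and $B\cap S_1=\emptyset$, every vertex of $B$ has degree at least $2$, and by definition of $B$ all its neighbours lie in $A\cup B$. Put $A$ in $V_1$. The key step is to choose a maximal stable set $I$ of $\induce{G}{B}$ and set $V_2=N(S_1)\cup I$, with $V_1=V(G)\setminus V_2$.

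I would then verify the two degree requirements. For the $\delta\geq 2$ condition on $V_2$: every $v\in I$ is, by stability of $I$, adjacent only to vertices of $B\setminus I\subseteq V_1$ together with its neighbours in $A\subseteq V_1$, so all of its (at least two) neighbours lie in $V_1$; and every $w\in N(S_1)$ has at least two neighbours in $V_1$ by hypothesis (either two neighbours in $S_1\subseteq V_1$, or one neighbour in $S_1$ together with a neighbour in $R$, which, being adjacent to $w$, lies in $A\subseteq V_1$). For the $\delta\geq 1$ condition on $V_1$: each vertex of $S_1$ has its neighbour in $N(S_1)\subseteq V_2$, each vertex of $A$ has a neighbour in $N(S_1)\subseteq V_2$ by definition of $A$, and each vertex of $B\setminus I$ has a neighbour in $I\subseteq V_2$ by maximality of $I$. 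As a maximal stable set is computed greedily, the partition is produced in polynomial time.

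The main obstacle is the sufficiency direction, and specifically the vertices of $R$ with no neighbour in $N(S_1)$, namely the set $B$: they cannot rely on $N(S_1)$ to meet their requirement, and a naive choice such as placing all of $R$ in $V_1$ fails exactly when some vertex of $B$ has all its neighbours inside $B$. Choosing a maximal stable set of $\induce{G}{B}$ resolves this cleanly, exploiting that $\delta(\induce{G}{B})\geq 2$ forces every vertex of $I$ to collect two $V_1$-neighbours, while maximality supplies every remaining vertex of $B$ with a $V_2$-neighbour.
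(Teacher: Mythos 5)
Your proof is correct, and the necessity direction coincides with the paper's argument; but your sufficiency construction is genuinely different. The paper builds the BFS layers $S_1,S_2,S_3,\dots$ out of $S_1$, $2$-colours them by parity, and then runs an iterative recolouring step (any colour-$2$ vertex with exactly one colour-$1$ neighbour is moved to colour $1$), whose correctness rests on a somewhat delicate invariant: a recoloured vertex has exactly one predecessor, no successor, and a peer that keeps colour $2$ forever. You instead give a one-shot construction: after the forced choices $S_1\subseteq V_1$ and $N(S_1)\subseteq V_2$, you put the second neighbourhood $A$ into $V_1$ and finish with a maximal stable set $I$ of \induce{G}{B}, where $B$ is everything else; since every vertex of $B$ has degree at least $2$ and all its neighbours in $A\cup B$, the $\delta\geq 2$ requirement for $I$ follows from stability and the $\delta\geq 1$ requirement for $B\setminus I$ from maximality. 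This is essentially the device of Theorem~\ref{thm:iff2} localised to \induce{G}{B}. What your version buys is a shorter verification with no termination or recolouring analysis, and it also handles transparently any component of $G$ containing no degree-$1$ vertex (such a component lies entirely in $B$), a case that the paper's layering from $S_1$ reaches only implicitly. Both constructions are clearly computable in polynomial time.
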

\begin{proof}
Suppose that $G$ has a $(\delta\geq 1,\delta\geq
2)$-bipartite-partition $(V_1,V_2)$. Necessarily, $S_1\subseteq
V_1$. Moreover, for each $v\in S_1$, its unique neighbour is in
$V_2$. Hence $S_1$ is a stable set and $N(S_1)\subseteq V_2$.  Now
every vertex in $N(S_1)$ has at least two neighbours in $V_1$. Hence
either two neighbours are in $S_1$ or at least one neighbour is in
$V_1\setminus S_1$ which is a subset of $V(G)\setminus (S_1\cup
N(S_1))$.

\medskip

Reciprocally, assume that $S_1$ is a stable set and that every vertex
in $N(S_1)$ has either two neighbours in $S_1$ or one neighbour in
$V(G)\setminus (S_1\cup N(S_1))$.  For every $i>1$, let $S_i$ be the
set of vertices not in $\bigcup_{j=1}^{i-1}S_j$ which are adjacent to
a vertex in $S_{i-1}$. Note that $S_2=N(S_1)$.  Moreover, for every
vertex $v$ in $S_i$, its {\bf predecessors} (resp. {\bf peers}, {\bf
  successors}) are its neighbours in $S_{i-1}$ (resp. $S_{i}$,
$S_{i+1}$).  By definition of the $S_i$, every vertex in
$V(G)\setminus S_1$ has at least one predecessor.

We initially colour the vertices as follows: if $v\in S_i$ and $i$ is
odd, then $v$ is coloured $1$, otherwise it is coloured $2$.  Observe
that a vertex has a colour different from that of its predecessors and
successors.  Now as long as there is a vertex $w$ coloured $2$ with
exactly one neighbour coloured $1$, we recolour $w$ with $1$.  Let $w$
be a recoloured vertex. As it is originally coloured $2$, it must be
in $S_i$ with $i$ even. Now $w$ has exactly one predecessor and no
successor.  In particular, it is not in $S_2$, by our assumption on
$N(S_1)=S_2$. Furthermore, it has degree at least $2$ (since vertices
of $S_1$ are coloured $1$), so its has at least one peer which must be
coloured $2$, and will never be recoloured because it now has at least
two neighbours (a peer and a predecessor) coloured $1$.

Let $V_1$ (resp. $V_2$) be the set of vertices coloured $1$
(resp. $2$). We claim that $(V_1, V_2)$ is a $(\delta\geq 1,\delta\geq
2)$-bipartite-partition.\\ Consider a vertex $v_1$ in $V_1$. Assume
$v_1$ is originally coloured $1$. Either it is in $S_1$ and its
neighbour is in $S_2$ and thus in $V_2$ because no vertex of $S_2$ is
recoloured, or it has a predecessor which must be in $V_2$ because
only vertices with no successors are recoloured.  If $v_1$ has been
recoloured, then as observed above it has a peer originally coloured
$2$ that is not recoloured.\\ Consider now a vertex $v_2\in V_2$. It
was originally coloured $2$ and has not been recoloured. Hence $v_2$
has at least two neighbours coloured $1$.
\end{proof}

We note that any graph with an isolated vertex does not contain a
$(\delta\geq 1,\delta\geq k)$-bipartite-partition for any $k \geq 1$.
In Theorem~\ref{thm:iff} we consider graphs with $\delta(G)=1$. The
following easy result handles the cases when $\delta(G) \geq 2$ as a
special case (when $k=2$).

\begin{corollary}
One can decide in polynomial time whether a given graph has a
$(\delta\geq 1,\delta\geq 2)$-bipartite-partition.  Moreover if such a
partition exists, it can be found in polynomial time .
\end{corollary}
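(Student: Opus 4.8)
The plan is to combine the two preceding results into a clean polynomial-time decision procedure that branches on the value of $\delta(G)$. The corollary asserts that we can decide in polynomial time whether a graph admits a $(\delta\geq 1,\delta\geq 2)$-bipartite-partition, so I would first observe that the answer is trivially ``No'' whenever $G$ has an isolated vertex, as noted in the remark preceding the statement: a vertex of degree $0$ can satisfy neither $\delta\geq 1$ nor $\delta\geq 2$ in $B_G(V_1,V_2)$, regardless of which side it lands on. This case is detected in linear time by scanning the degree sequence.

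Having disposed of the isolated-vertex case, I would split on whether $\delta(G)=1$ or $\delta(G)\geq 2$. If $\delta(G)\geq 2$, then Theorem~\ref{thm:iff2} applied with $k=2$ guarantees that $G$ always has a $(\delta\geq 1,\delta\geq 2)$-bipartite-partition, and moreover the proof of that theorem (take $V_2$ to be a maximal stable set) produces such a partition greedily in polynomial time; so the answer is ``Yes'' and a witness is readily output. If $\delta(G)=1$, then Theorem~\ref{thm:iff} supplies an explicit combinatorial characterisation: the partition exists if and only if the set $S_1$ of degree-$1$ vertices is stable and every vertex of $N(S_1)$ has either two neighbours in $S_1$ or a neighbour outside $S_1\cup N(S_1)$. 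Each of these conditions is checked by inspecting local neighbourhoods, hence in polynomial (indeed linear) time.

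For the constructive ``moreover'' part, I would note that in the $\delta(G)\geq 2$ case the maximal-stable-set construction in Theorem~\ref{thm:iff2} directly yields the partition, while in the $\delta(G)=1$ case the reconstruction argument in the reverse direction of Theorem~\ref{thm:iff}, namely the layered colouring by the sets $S_i$ followed by the recolouring of badly-covered vertices, is itself an explicit polynomial-time algorithm producing the partition whenever the characterisation is satisfied. Thus in every branch we either certify infeasibility by a locally checkable obstruction or exhibit the partition explicitly.

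I do not expect a genuine obstacle here, since this is a routine assembly of results already proved: the only point requiring a moment's care is ensuring the case analysis on $\delta(G)$ is exhaustive and that the $\delta(G)\geq 2$ branch really is subsumed by Theorem~\ref{thm:iff2} with $k=2$, which it is because $\delta(G)\geq 2$ is precisely the hypothesis $\delta(G)\geq k$ for $k=2$. The mild subtlety is simply to remember to handle $\delta(G)=0$ separately before invoking either theorem, as both of those results implicitly assume no isolated vertices.
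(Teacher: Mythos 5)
Your proposal is correct and follows essentially the same route as the paper: dispose of isolated vertices, invoke Theorem~\ref{thm:iff2} with $k=2$ when $\delta(G)\geq 2$, check the characterisation of Theorem~\ref{thm:iff} when $\delta(G)=1$, and note that both underlying proofs are constructive. The only cosmetic difference is that you explicitly cite Theorem~\ref{thm:iff2} for the $\delta(G)\geq 2$ branch, which is in fact the cleaner attribution.
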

\begin{proof}
Let $G$ be a graph and let $S_1$ be the set of vertices with degree
$1$ in $G$.  If $S_1$ has an isolated vertex then no such partition
exists, and if $S_1=\emptyset$ then the result follows from
Theorem~\ref{thm:iff}, so assume that $S_1 \not= \emptyset$ and $G$
does not contain any isolated vertices.  According to
Theorem~\ref{thm:iff}, deciding whether a graph $G$ has a
$(\delta\geq 1,\delta\geq 2)$-bipartite-partition, we need to check
that $S_1$ is a stable set, and that every vertex in $N(S_1)$ has
either two neighbours in $S_1$ or one neighbour in $V(G)\setminus
(S_1\cup N(S_1))$.  This can easily be done in polynomial time.

Moreover, since the proof of Theorem~\ref{thm:iff} is constructive,
one can find in polynomial time a $(\delta\geq 1,\delta\geq
2)$-bipartite-partition, if one exists.
\end{proof}

\subsection{${\cal NP}$-completeness of {\sc $(\delta\geq k_1,\delta\geq  k_2)$-bipartite-partition} when $k_1+k_2\geq 4$}
%%%%%%%%%%%%%%%%%%%%%%%%%%%%%%%%%%%%%%%%%%%%%%%%%%

\begin{theorem}
Let $k_1, k_2$ be integers such that $2\leq k_1\leq k_2$. It is ${\cal
  NP}$-complete to decide whether a graph $G$ has a $(\delta\geq k_1,\delta\geq  k_2)$-bipartite-partition.
\end{theorem}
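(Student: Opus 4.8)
The plan is to prove \(\mathcal{NP}\)-completeness by reduction from {\sc Monotone Not-all-equal-3-SAT}. Membership in \(\mathcal{NP}\) is immediate: given a candidate \(2\)-partition \((V_1,V_2)\), one checks in polynomial time that each \(v_i\in V_i\) has at least \(k_i\) neighbours in \(V_{3-i}\). The substance is the hardness construction, and the natural template is a variable/clause gadget in which a NAE truth assignment corresponds exactly to a valid \((\delta\geq k_1,\delta\geq k_2)\)-bipartite-partition, where the two sides \(V_1,V_2\) encode the two truth values.

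First I would build a \emph{variable gadget} for each variable \(x\). Because the partition is symmetric (there is no intrinsic ``true'' side), the two colour classes must play the roles of true and false, so a variable is represented by a vertex (or a small rigid cluster of vertices) that will be forced to lie in one class or the other. The key technical device is to attach, to each vertex that must end up on a prescribed side or must have a guaranteed number of same-class support vertices, enough pendant-type or clique-type padding so that the degree constraints \(k_1,k_2\) force the intended side. Since \(k_1,k_2\geq 2\), we have freedom to use small complete or complete-bipartite pieces whose only consistent colourings split them in a controlled way; I would design these so that a vertex representing a literal can satisfy its own degree requirement \emph{only} through padding internal to its gadget, leaving its ``interface'' edges to clause gadgets free to encode the satisfaction condition.

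Next I would build a \emph{clause gadget} for each clause \(C=\{x,y,z\}\). The goal is that \(C\) is NAE-satisfied, meaning not all three of \(x,y,z\) receive the same colour, if and only if some central clause vertex meets its degree requirement across the partition. The cleanest way is to introduce a clause vertex \(c\) adjacent to the three literal-vertices and padded so that \(c\)'s only route to achieving \(k_i\) cross-neighbours is to have at least one literal on each side; if all three literals were on \(c\)'s own side then \(c\) would fall short on the opposite side (and symmetrically if all three were opposite, by adding a twin vertex \(c'\) on the other side). Using a pair \(c,c'\) of clause vertices placed on opposite sides makes the ``not-all-equal'' condition enforceable in both directions and is well suited to the symmetric nature of the problem.

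The main obstacle, and where I expect the real work to be, is the \emph{rigidity analysis}: proving that every valid \((\delta\geq k_1,\delta\geq k_2)\)-bipartite-partition of the constructed graph actually respects the intended semantics, i.e. that the padding truly forces each variable cluster to be monochromatic and each clause gadget to certify a non-constant pattern, with no ``cheating'' colourings that satisfy the local degree bounds while violating the NAE meaning. This requires a careful case analysis of how the degree requirements propagate through the gadgets, and it is complicated by the generality \(2\leq k_1\leq k_2\) with \(k_1\neq k_2\) in general, so the padding sizes must be parametrised by \(k_1\) and \(k_2\) and verified uniformly. Once both implications (a NAE assignment yields a valid partition, and a valid partition yields a NAE assignment) are established, together with the polynomial size of the construction, the reduction is complete.
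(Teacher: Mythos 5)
Your overall strategy---reduction from a NAE-3-SAT variant, a variable gadget forcing a literal pair onto opposite sides, and a clause gadget whose central vertex/vertices can meet their degree quota only when the three literals are not all on one side---is exactly the shape of the paper's proof (which reduces from {\sc Not-all-equal-3-SAT} and uses two gadgets $X$ and $Y$ built from a common chain $X'$). However, as written the proposal has a genuine gap: it contains no actual gadget, and you explicitly defer the rigidity analysis, which is the entire mathematical content of the statement. ``Enough pendant-type or clique-type padding so that the degree constraints force the intended side'' is not a construction; for general $2\leq k_1\leq k_2$ one has to exhibit padding whose \emph{only} good colourings are the intended ones, and verify this. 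The paper does so with a chain $v - X_1 - X_2 - \{x,\bar x\} - X_4 - X_3 - z$ plus the edge $vz$, where $|X_1|=|X_4|=k_1-1$ and $|X_2|=|X_3|=k_2-1$: the degrees are calibrated so that fixing the colour of $v$ propagates a forced colouring along the whole chain, leaving only $x,\bar x$ free, and the edge $x\bar x$ then forces them to differ.

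There is also a specific design tension in your plan that signals the mechanism is not yet right. You want each literal vertex to ``satisfy its own degree requirement only through padding internal to its gadget, leaving its interface edges to clause gadgets free.'' But if the literal vertex's quota is met internally regardless of its colour, nothing forces its colour; and if its colour is forced by its own quota, then the extra clause edges (which only ever \emph{help} a vertex meet its quota) can let it flip sides and cheat. The paper resolves this by placing the forcing on the quotas of the \emph{internal neighbours} of $x$ and $\bar x$ (the sets $X_2$ and $X_4$), whose neighbourhoods are entirely inside the gadget; this is why the gadget remains rigid ``no matter what edges we add to $\{x,\bar x\}$.'' Your clause gadget idea (a pair $c,c'$ adjacent to the three literals, padded so that each needs at least one literal of each colour) is essentially the paper's $Y$ gadget and is sound in outline, but it needs the same parametrised padding and the same robustness argument. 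Until the gadgets are written down and their colourings classified, the reduction is a plan rather than a proof.
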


\begin{proof}
We reduce  {\sc Not-all-equal-3-SAT} to the problem of deciding whether a graph has a $(\delta\geq k_1,\delta\geq  k_2)$-colouring (which is equivalent to {\sc $(\delta\geq k_1,\delta\geq  k_2)$-bipartite-partition}.

A $2$-colouring is {\bf good for $X$} if every vertex of $X$ coloured $i$ has $k_i$ neighbours coloured $3-i$.   
 In particular, a colouring good for $V(G)$ is a $(\delta\geq k_1,\delta\geq  k_2)$-colouring of $G$.
   
   First we define some gadgets and
then we show how to combine them to produce the desired result.\\

Let $X'$ be the graph whose vertex set is the disjoint union of
the sets $\{v,z,x,\bar{x}\}, X_1,X_2,X_3,X_4$, where
$|X_1|=|X_4|=k_1-1$ and $|X_2|=|X_3|=k_2-1$.  The graph $X'$ has the
following edges (when we write `all edges', we mean all possible edges
between the two sets): All edges between $v$ and $X_1$, all edges
between $X_1$ and $X_2$, all edges between $X_2$ and $\{x,\bar{x}\}$,
all edges between $\{x,\bar{x}\}$ and $X_4$, all edges between $X_4$
and $X_3$, all edges between $X_3$ and $z$ and finally the edge
$vz$. Let $X$ be obtained from $X'$ by adding the edge $x\bar{x}$.  It
is easy to verify that $X$ has a $(\delta\geq k_1,\delta\geq  k_2)$-colouring and in every such
colouring of $X$ the vertices $x$ and $\bar{x}$ must get different
colours and both colourings are possible. Indeed once we fix the colour
of $v$, which must be 1 if $k_1<k_2$ and could be 1 or 2 if $k_1=k_2$,
then every other colour is fixed except for $x$ and $\bar{x}$.
Moreover, this property remains no  matter what edges we add to $\{x, \bar{x}\}$.

\medskip

Let $Y$ be the graph that we obtain from a copy of $X'$ by adding
three new vertices $\ell_1,\ell_2,\ell_3$ and all possible edges
between these and the set $\{x,\bar{x}\}$. As previously it is easy to
verify that $Y$ has a $2$-colouring goo for all vertices
except $\ell_1,\ell_2,\ell_3$ (they do not have enough neighbours in
$Y$ but will get these in the full graph we construct) and for every
such colouring at least one of $\ell_1,\ell_2,\ell_3$ is coloured $i$
for $i\in [2]$. Furthermore, every colouring of
$\{\ell_1,\ell_2,\ell_3\}$ where both colours are used can be extended
to a full $2$-colouring which is good for
$V(Y)\setminus \{\ell_1,\ell_2,\ell_3\}$

\medskip

Now let ${\cal F}$ be an instance of {\sc Not-all-equal-3-SAT} with
variables $v_1,v_2,\ldots{},v_n$ and clauses $C_1,C_2,\ldots{},C_m$.
Form a graph $G=G({\cal F})$ from ${\cal F}$ as follows: make $n$
disjoint copies $X_1,X_2,\ldots{},X_n$ of $X$ and denote the copies of
$x,\bar{x}$ in $X_i$ by $x_i,\bar{x}_i$. Then make $m$ disjoint copies
$Y_1,Y_2,\ldots{},Y_m$ of $Y$, where the $j$th copy will correspond to
the clause $C_j$. Denote the copies of $\ell_1,\ell_2,\ell_3$ in $Y_j$
by $\ell_{j,1},\ell_{j,2},\ell_{j,3}$.  Now, for each $j\in [m]$
identify the vertices $\ell_{j,1},\ell_{j,2},\ell_{j,3}$ with those
vertices from $Z=\{x_1,\bar{x}_1,\ldots{},x_n,\bar{x}_n\}$ that
correspond to the literals of $C_j$. E.g. if $C_j=(v_1\vee
\bar{v}_3\vee v_7)$ then we identify $\ell_{j,1}$ with $x_1$,
$\ell_{j,2}$ with $\bar{x}_3$ and $\ell_{j,3}$ with $x_7$. Note that
each vertex from $Z$ may be identified with many vertices in this
way.\\

We claim that $G$ has a $(\delta\geq k_1,\delta\geq  k_2)$-colouring if and only if there is NAE truth assignment for ${\cal F}$.
 Suppose first that
$c$ is a $(\delta\geq k_1,\delta\geq  k_2)$-colouring of $G$.  Let $\phi$ be the truth assignment which sets $v_i$ true precisely when
$c(x_i)=1$. By the property of the vertices $\ell_1,\ell_2,\ell_3$
(which is inherited in all the subgraphs $Y_1,\ldots{},Y_m$) and the
fact that $c$ is a $(\delta\geq k_1,\delta\geq  k_2)$-colouring implies that for each $j$ the number of vertices
from $\{\ell_{j,1},\ell_{j,2},\ell_{j,3}\}$ that have colour 1 is
either one or two. Moreover by the property of $X$ for every
$i=1,\dots ,n$ the vertices $x_i$ and $\bar{x}_i$ receive different
colours by $c$. So $\phi$ is a NAE truth assignment. \\
Conversely, if $\phi$
is a NAE truth assignment, then we first colour each $x_i$ by 1
and $\bar{x}_i$ by 2 if $\phi(v_i)$ is true and do the opposite
otherwise. If is easy to check from the definition of $X,Y$ that we can
extend this partial 2-colouring to a $(\delta\geq k_1,\delta\geq  k_2)$-colouring of $G$.
%% Note that the proof above works both when $k_1<k_2$ and when
%% $k_1=k_2$. In the first case the colouring of each copy of $X$ is
%% completely determined except the colours of $x, \bar{x}$ and the same
%% holds for the copy of $X'$ which is contained in $Y$. When $k_1=k_2$
%% the vertex $v\in X$ can be coloured both 1 and 2 and once this is done
%% the colours of all vertices of $X$ except $x, \bar{x}$ are again
%% determined.
\end{proof}

Recall Theorem~\ref{thm:iff2} which states that if $G$ is a graph with
$\delta(G) \geq k$ then $G$ has a $(\delta\geq 1,\delta\geq
k)$-bipartite-partition.  In contrast to this result we prove the
following result.

\begin{theorem}
For all integers $k \geq 3$ it is ${\cal NP}$-complete to decide
whether a graph $G$ has a $(\delta\geq 1,\delta\geq
k)$-bipartite-partition.  In fact the problem remains ${\cal
  NP}$-complete even for graphs $G$ with $\delta(G)=k-1$.
\end{theorem}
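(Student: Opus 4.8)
The plan is to establish membership in $\mathcal{NP}$ — which is immediate, since given a $2$-partition $(V_1,V_2)$ one verifies the degree conditions in $B_G(V_1,V_2)$ in polynomial time — and then to reduce 3-SAT in polynomial time to the problem of deciding whether a graph admits a $(\delta\geq 1,\delta\geq k)$-colouring, building every gadget so that the minimum degree of the resulting graph is exactly $k-1$. Throughout I will use the encoding \emph{colour $1$ = true}; the reason for this choice will become clear below. Recall from Theorem~\ref{thm:iff2} that every graph with $\delta(G)\geq k$ is colourable, so $\delta(G)=k-1$ is precisely the first interesting threshold, and it is also the value that drives all the forcing.

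The construction rests on two forcing primitives. First, \emph{any vertex of degree $k-1$ must receive colour $1$}: a colour-$2$ vertex needs $k$ neighbours of colour $1$, which a vertex of degree $k-1$ does not have. Second, one can \emph{force a chosen vertex $u$ to colour $2$} by attaching a \textbf{forcer}: a new vertex $f$ of degree exactly $k-1$ whose neighbours are $u$ together with $k-2$ vertices that are themselves forced to colour $1$ by the first primitive. Then $f$ has colour $1$, so it needs a colour-$2$ neighbour, and $u$ is its only candidate; hence $u$ has colour $2$ and must in turn acquire $k$ neighbours of colour $1$. From these primitives I obtain the two constraints I need. To force \emph{at least one vertex of a set $S$ to have colour $2$}, I add a forced colour-$1$ vertex $q$ whose only non-forced neighbours are the vertices of $S$ (padded with forced colour-$1$ vertices up to degree $k-1$): $q$ needs a colour-$2$ neighbour, which can only come from $S$. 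To force \emph{at least one vertex of $S$ to have colour $1$}, I add a forced colour-$2$ vertex $w$ adjacent to $S$ and to exactly $k-1$ further colour-$1$ vertices, supplied by its forcer and by padding; since $w$ needs $k$ neighbours of colour $1$, it must borrow at least one from $S$.

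For each variable $v_i$ I create two vertices $x_i,\bar x_i$ and force them to opposite colours by imposing both small constraints on the \emph{two-element} set $\{x_i,\bar x_i\}$: ``at least one colour $2$'' forbids both being colour $1$, and ``at least one colour $1$'' forbids both being colour $2$. For each clause $C_j=(\ell_{j,1}\vee\ell_{j,2}\vee\ell_{j,3})$ I attach a single forced colour-$2$ vertex $w_j$ realising ``at least one of the three literal vertices has colour $1$'', which under the encoding colour $1$ = true says exactly that $C_j$ has a true literal. The decisive point is that the only constraint of the first (``at least one colour $2$'') type is ever imposed on two-element sets, and a vertex of degree $k-1\geq 2$ can always be made adjacent to two prescribed vertices, whereas the three-literal requirement is always of the second (``at least one colour $1$'') type, realised by a forced colour-$2$ vertex of large degree, which works for every $k\geq 3$. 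This is exactly why the boundary value $k=3$, where a forced colour-$1$ vertex has only two neighbours, causes no difficulty.

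To finish I would prove the two primitives rigorously and then argue equivalence: from a $(\delta\geq 1,\delta\geq k)$-colouring the variable gadgets yield a consistent assignment (each $x_i$ opposite to $\bar x_i$) and the clause gadgets guarantee a true literal per clause, so $\mathcal F$ is satisfiable; conversely a satisfying assignment becomes a colouring by setting $x_i$ to colour $1$ iff $v_i$ is true, $\bar x_i$ oppositely, and extending over the gadgets. The main obstacle is the global bookkeeping rather than any single idea: every forced colour-$1$ vertex must itself have a colour-$2$ neighbour and every forced colour-$2$ vertex must actually attain its quota of $k$ colour-$1$ neighbours, so the forcers and padding cannot be inert and must be anchored to a small self-consistent core that supplies both colours. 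At the same time one must pad every vertex up to degree at least $k-1$, while keeping at least one vertex at exactly $k-1$, using edges that neither pin a literal vertex to a fixed colour nor break any primitive. Carrying out this degree accounting so that it meets the exact requirement $\delta(G)=k-1$ while leaving all gadget implications intact is the delicate, if routine, part of the argument.
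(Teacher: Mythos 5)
Your overall strategy is the paper's: a reduction from 3-SAT in which the single observation that \emph{a vertex of degree $k-1$ cannot receive colour $2$} drives every forcing, the variable gadget pins $x_i$ and $\bar x_i$ to opposite colours, and each clause is checked by a vertex forced to colour $2$ that has only $k-1$ guaranteed colour-$1$ neighbours and must borrow the $k$-th from its literal set. Your clause gadget is functionally identical to the paper's triangle $y_j,y_j',y_j''$ (the two degree-$2$ vertices $y_j',y_j''$ are your ``forcer''), and your two-constraint variable gadget plays the role of the paper's chain $A$--$\{x,\bar x\}$--$Y$--$B$ inside $G^*$. The one organisational difference is that you build a uniform construction for all $k\geq 3$, whereas the paper settles $k=3$ with a concrete gadget and then lifts to $k\geq 4$ by attaching $(k-3)$ cliques of size $k$ with pendant connectors; your uniform route is arguably cleaner if it can be completed, while the paper's bootstrap avoids parametrised gadget bookkeeping entirely.

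The genuine gap is the step you yourself defer: the ``self-consistent core.'' Every auxiliary vertex you force to colour $1$ has degree exactly $k-1$ and must still find a colour-$2$ neighbour among those $k-1$ neighbours, and you cannot give it extra edges to reach one without destroying the forcing. So the forced-colour-$1$ vertices and the forced-colour-$2$ anchors must be wired together so that each supplies the other's demand, and this is not circular-free by default: a forcer $f$ for $u$ is colour $1$ and gets its colour-$2$ neighbour from $u$ itself, but the $k-2$ padding vertices attached to $f$ get theirs from where? The paper resolves exactly this inside $G^*$: the degree-$2$ vertices $a_1,a_2$ take their colour-$2$ neighbour from whichever of $x,\bar x$ is coloured $2$, $b_1,b_2$ take theirs from whichever of $y_1,y_2$ is coloured $2$, and $y_j',y_j''$ take theirs from $y_j$. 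Until you exhibit an analogous closed accounting for your padding (and verify that the literal vertices, whichever of them is coloured $2$, actually attain $k$ colour-$1$ neighbours, which in the paper is why $x,\bar x$ are joined to both $a_1,a_2$ and $y_1,y_2$), the argument is a correct plan rather than a proof; but no idea is missing and no step fails, so the remaining work is the routine completion you describe.
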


\begin{proof} 
Reduction from 3-SAT.
$(\delta\geq 1,\delta\geq k)$-bipartite-partition.

First suppose that $k=3$. Let the gadget $G^*$ contain the vertices
$\{a_1,a_2,x,\bar{x},y_1,y_2,b_1,b_2\}$ and all edges from
$A=\{a_1,a_2\}$ to $X=\{x,\bar{x}\}$ and all edges from $X$ to
$Y=\{y_1,y_2\}$ and all edges from $Y$ to $B=\{b_1,b_2\}$.  
See Figure~\ref{figureGs}.
\begin{figure}[h!]
\begin{center}
\tikzstyle{vertexX}=[circle,draw, top color=gray!5, bottom
  color=gray!30, minimum size=16pt, scale=0.6, inner sep=0.5pt]
\tikzstyle{vertexZ}=[circle,draw, top color=white!5, bottom
  color=white!30, minimum size=4pt, scale=0.6, inner sep=0.5pt]
\tikzstyle{vertexBIG}=[ellipse, draw, scale=0.6, inner sep=3.5pt]
\begin{tikzpicture}[scale=0.5]

 %\draw (1,4) node {$H$};
 \node (a1) at (1.0,3.0) [vertexX] {$a_1$};
 \node (a2) at (1.0,1.0) [vertexX] {$a_2$};
 \node (x1) at (4.0,3.0) [vertexX] {$x$};
 \node (x2) at (4.0,1.0) [vertexX] {$\bar{x}$};
 \node (y1) at (7.0,3.0) [vertexX] {$y_1$};
 \node (y2) at (7.0,1.0) [vertexX] {$y_2$};
 \node (b1) at (10.0,3.0) [vertexX] {$b_1$};
 \node (b2) at (10.0,1.0) [vertexX] {$b_2$};

    \draw [line width=0.05cm] (a1) -- (x1);
    \draw [line width=0.05cm] (a1) -- (x2);
    \draw [line width=0.05cm] (a2) -- (x1);
    \draw [line width=0.05cm] (a2) -- (x2);

    \draw [line width=0.05cm] (y1) -- (x1);
    \draw [line width=0.05cm] (y1) -- (x2);
    \draw [line width=0.05cm] (y2) -- (x1);
    \draw [line width=0.05cm] (y2) -- (x2);

    \draw [line width=0.05cm] (y1) -- (b1);
    \draw [line width=0.05cm] (y1) -- (b2);
    \draw [line width=0.05cm] (y2) -- (b1);
    \draw [line width=0.05cm] (y2) -- (b2);

\end{tikzpicture}
\end{center}
\caption{The gadget $G^*$.}
\label{figureGs}
\end{figure}
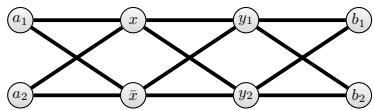

No matter
what edges we latter add to $X$ we note that the vertices in $A \cup
B$ must receive colour $1$ (as they have degree $2$) in any $(\delta\geq 1,\delta\geq 3)$-colouring. Furthermore at
least one vertex in $Y$ must get colour $2$ (as the vertices in $B$
needs a neighbour of colour $2$). Without loss of generality, assume
$y_1$ has colour $2$.  Due to the vertices in $A$ one vertex in $X$
must be coloured $2$ and due to $y_1$ one vertex in $X$ must be coloured
$1$. So the vertices in $X$ must receive different colours in any $(\delta\geq 1,\delta\geq 3)$-colouring. 
Furthermore if we do colour exactly one vertex from $X$ and
one vertex from $Y$ and all vertices in $A \cup B$ with the colour $1$
then we get a $(\delta\geq 1,\delta\geq 3)$-colouring of the gadget $G^*$.  

\smallskip

Let ${\cal F}$ be an instance of 3-SAT with variables $v_1,v_2,\ldots
,v_n$ and clauses $C_1,C_2,\ldots ,C_m$.  Form a graph $G=G({\cal F})$
from ${\cal F}$ as follows: make $n$ disjoint copies
$X_1^*,X_2^*,\ldots{},X_n^*$ of $G^*$ and denote the copies of
$x,\bar{x}$ in $X_i^*$ by $x_i,\bar{x}_i$.  Then add $m$ disjoint
copies of $3$-cycles with vertex sets $y_j,y_j',y_j''$ for $j \in
[m]$.  Now, for each $j\in [m]$ add an edge from $y_j$ to those
vertices from $Z=\{x_1,\bar{x}_1,\ldots{},x_n,\bar{x}_n\}$ that
correspond to the literals of $C_j$.  E.g. if $C_j=(v_1\vee
\bar{v}_3\vee v_7)$, then we add an edge from $y_j$ to $x_1$,
$\bar{x}_3$ and $x_7$.  As in a $(\delta\geq 1,\delta\geq 3)$-colouring $y_j'$ and $y_j''$ must
be given colour $1$ we note that there is a $(\delta\geq 1,\delta\geq 3)$-colouring of $G$ if
each $y_j$ (which has to be coloured $2$) has a neighbour in $Z$ of
colour $1$. It is now not difficult to see that $G$ has a $(\delta\geq 1,\delta\geq 3)$-colouring
if and only if ${\cal F}$ is satisfied (where the vertex $x_i$ is
given colour $1$ if the variable $v_i$ is true and otherwise
$\bar{x}_i$ is given colour $1$).  Furthermore we note that
$\delta(G')=2=k-1$ by construction.

\medskip

We now consider the case when $k \geq 4$. We will reduce from the case
when $k=3$ as follows.  Let $G$ be an instance of the case when $k=3$
and now assume that $k \geq 4$. Let $X_1,X_2,\ldots, X_{k-3}$ be $k-3$
cliques of size $k$ and let $x_i \in X_i$ be arbitrary for $i \in
[k-3]$.  Let $G'$ be the graph obtained from $G$ by adding
$X_1,X_2,\ldots ,X_{k-3}$ and the vertices
$\{y_1,y_2,\ldots,y_{k-3}\}$ to $G$ and all edges $x_iy_i$ and all
edges from $y_i$ to $V(G)$ for all $i \in [k-3]$.  Note that if $(V_1,V_2)$ is a $(\delta\geq 1,\delta\geq k)$-bipartite-partition of $G'$ the vertices in $V(X_i) \setminus \{x_i\}$ must be in $V_1$ (as they have degree $k-1$) for $i \in [k-3]$ and
therefore $x_1,x_2,\ldots,x_{k-3}$ must be in $V_ 2$ and
$y_1,y_2,\ldots,y_{k-3}$ must be in $V_1$ (as $d(x_i)=k$ for $i \in
[k-3]$).  This implies that $G$ admits a
$(\delta\geq 1,\delta\geq 3)$-bipartite-partition if and only if $G'$
admits a $(\delta\geq 1,\delta\geq k)$-bipartite-partition.  This
completes the proof as we note that $\delta(G')=k-1$ (as
$\delta(G)=2$).
\end{proof}

%%%%%%%%%%%%%%%%%%%%%%%%%%%%%%%%%%%%%%%%%%%%%%%%%%%%%%%%%%%%

\section{$(\delta^+\geq k_1,\delta^+\geq k_2)$-bipartite-partitions}\label{bipoutdegsec}
%%%%%%%%%%%%%%%%%%%%%%%%%%%%%%%%%%%%%%%%%%%%%%%%%%%%%%%%%%

We now use results from \cite{bangman17} to settle the complexity of
the $(\delta^+\geq k_1,\delta^+\geq k_2)$-bipartite-partition problem
for digraphs for all pairs of positive integers $k_1,k_2$.

The following result was proved by the authors in
\cite{bangman17}. Note that one can find an even cycle in a digraph
that has such a cycle in polynomial time \cite{robertsonAM150}.
\begin{theorem}[\cite{bangman17}]\label{thm:struc-k1-p1}
A digraph $D$  admits a $(\delta^+\geq 1,\delta^+\geq 1)$-bipartite-partition if
  and only if every non-trivial terminal strong component contains an
  even directed cycle.
The desired $2$-partition can be
constructed in polynomial time when it exists.
\end{theorem}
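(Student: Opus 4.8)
The plan is to recast the problem as a colouring question: a $(\delta^+\ge 1,\delta^+\ge 1)$-bipartite-partition is precisely a $2$-colouring $c:V(D)\to\{1,2\}$ in which every vertex has an out-neighbour of the opposite colour; call such a colouring \emph{good}. I would first record that a good colouring forces $\delta^+(D)\ge 1$, since a sink has no out-neighbour at all. A sink is exactly a trivial terminal strong component, so a digraph with a sink has no good colouring, and we may confine the substantive analysis to the non-trivial terminal strong components, as in the statement.

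For the forward (necessity) direction I would reason via a functional subdigraph. Given a good colouring $c$, choose for each vertex $v$ one out-neighbour $f(v)$ with $c(f(v))\ne c(v)$; the arcs $vf(v)$ form a subdigraph in which every vertex has out-degree exactly $1$. Let $T$ be a non-trivial terminal strong component. As no arc leaves $T$, we have $f(v)\in T$ for every $v\in T$, so $f$ restricts to a map $T\to T$. Iterating $f$ from any vertex of $T$ must eventually repeat a vertex, yielding a directed cycle all of whose arcs lie in $T$; along this cycle the colour flips at every step, so its length is even. Hence $T$ contains an even directed cycle.

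For the backward (sufficiency) direction the engine is the following lemma: \emph{a strong digraph $S$ on at least two vertices that contains an even directed cycle $C$ admits a good colouring of its own vertices.} To prove it, colour $C$ alternately, so each vertex of $C$ is satisfied by its successor; then contract $V(C)$ to a single vertex, i.e.\ consider $S/V(C)$ with new vertex $c^*$, which is strong because $S$ is. Take an in-branching of $S/V(C)$ rooted at $c^*$ and pull it back to $S$: every $v\notin V(C)$ acquires a unique out-neighbour $p(v)$ lying one step closer to $C$. Colouring the remaining vertices in order of non-decreasing distance to $C$ (so each $p(v)$ is already coloured) and setting $c(v)$ opposite to $c(p(v))$ satisfies every vertex outside $C$ as well. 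With this lemma I would process the strong components of $D$ in reverse topological order, terminal components first. Each terminal component is non-trivial (no sinks) and has an even cycle, so the lemma colours it internally. For a non-terminal component $S$, all components reachable from $S$ are already coloured; I pick $w\in S$ with an arc to an already-coloured $u\notin S$, set $c(w)\ne c(u)$, and colour the rest of $S$ from an in-branching rooted at $w$ exactly as in the lemma, so every other vertex of $S$ points to its already-coloured branching-parent. This produces a good colouring of all of $D$.

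The conceptual heart, and the step I expect to be the real obstacle, is the even-cycle dichotomy for terminal components: the functional-subdigraph argument for necessity together with its constructive converse in the lemma. This is exactly where Thomassen's even-cycle-free out-regular digraphs (Proposition~\ref{monochromatic}) obstruct a partition, which is why the characterisation cannot be purely local. The remaining ingredients are routine and polynomial: the strong components and a topological order are found in linear time, the in-branchings by a search, and an even directed cycle inside a terminal component in polynomial time by the cited algorithm, so a good colouring is produced in polynomial time whenever one exists.
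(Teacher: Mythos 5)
The paper does not prove this theorem; it is imported verbatim from \cite{bangman17}, so there is no internal proof to compare against. Taken on its own terms, your argument is correct and complete. The necessity direction via a functional subdigraph (each vertex keeps one out-arc to the opposite colour; inside a terminal component the iteration closes into a properly $2$-coloured, hence even, cycle) is sound, and your sufficiency lemma works: an alternately coloured even cycle $C$ satisfies its own vertices, $S/V(C)$ is strong so it has an in-branching rooted at the contracted vertex, and colouring outward from $C$ so that each vertex is opposite to its branching-parent satisfies everyone else; propagating through the strong components in reverse topological order, seeding each non-terminal component from an already-coloured out-neighbour, finishes the construction. All steps (strong components, branchings, and finding an even directed cycle via \cite{robertsonAM150}, as the paper itself notes) are polynomial.

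One caveat you should make explicit rather than wave at: as transcribed here the condition quantifies only over \emph{non-trivial} terminal strong components, so a digraph with a sink (which is exactly a trivial terminal component) satisfies the condition vacuously yet admits no such partition --- your own first observation. Your sufficiency argument silently assumes every terminal component is non-trivial (equivalently $\delta^{+}(D)\geq 1$), and it needs that assumption; either add it as a hypothesis or restate the condition as ``every terminal strong component contains an even directed cycle'' (a trivial component contains none), which is the formulation your proof actually establishes. With that repair the proof stands.
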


We now show that for all other positive values of $k_1,k_2\geq 1$ the
{\sc $(\delta^+\geq k_1,\delta^+\geq k_2)$-bipartite-partition} problem is
${\cal NP}$-complete. In fact, this remains true even if the input is
strong and out-regular.

\begin{theorem} \label{npc_proof1}
 Let  $k_1,k_2$ be positive integers such that $k_1+k_2\ge 3$. Then
  {\sc $(\delta^+\geq k_1,\delta^+\geq k_2)$-bipartite-partition} is
  ${\cal NP}$-complete. It remains ${\cal NP}$-complete when the input
  is required to be strongly connected and out-regular.
\end{theorem}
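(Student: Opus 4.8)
Membership in ${\cal NP}$ is immediate, since for any given $2$-partition $(V_1,V_2)$ one can count in polynomial time, for each vertex of $V_i$, its out-neighbours lying in $V_{3-i}$ and check that this number is at least $k_i$. All the work lies in the hardness, and since we want it already for strongly connected out-regular inputs, I would anchor the whole argument on a complementation identity. Let $D$ be $r$-out-regular and let $(V_1,V_2)$ be any $2$-partition. For $v\in V_i$ the $r$ out-neighbours of $v$ split between $V_i$ and $V_{3-i}$, so $d^+_{B_D(V_1,V_2)}(v)=r-d^+_{\induce{D}{V_i}}(v)$. Hence every vertex of $V_i$ has at least $k_i$ out-neighbours in $V_{3-i}$ if and only if $\Delta^+(\induce{D}{V_i})\le r-k_i$. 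In other words, on $r$-out-regular digraphs a $(\delta^+\ge k_1,\delta^+\ge k_2)$-bipartite-partition is exactly a $2$-partition with $\Delta^+(\induce{D}{V_i})\le r-k_i$ for $i\in[2]$, which is the bounded-out-degree partition problem whose complexity was determined in \cite{bangman17}.

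Building on this, my plan is to fix the target pair $(k_1,k_2)$ with $k_1,k_2\ge 1$ and $k_1+k_2\ge 3$, and to import from \cite{bangman17} a family of strongly connected $r$-out-regular instances that are hard for the bounded-out-degree partition problem with the two out-degree thresholds equal to $r-k_1$ and $r-k_2$. Via the identity above, each such instance is, verbatim, an equivalent instance of $(\delta^+\ge k_1,\delta^+\ge k_2)$-bipartite-partition, so ${\cal NP}$-hardness transfers directly and, because the instances are strong and out-regular, it transfers in the strengthened form claimed. Note that the two problems are matched only through the \emph{defects} $k_i$, measured as regularity minus threshold: the single polynomial case $k_1=k_2=1$ of Theorem~\ref{thm:struc-k1-p1} corresponds under this dictionary to the easy case of the bounded-out-degree problem, and it is disjoint from the range $k_1+k_2\ge 3$ treated here, so there is no clash. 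The general (not necessarily out-regular, not necessarily strong) statement then follows a fortiori, since strongly connected out-regular digraphs are a special case.

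The step I expect to be the main obstacle is guaranteeing, simultaneously, that the hard instances are out-regular \emph{of the precise degree $r$ dictated by the defects} $k_1,k_2$ and that they are strongly connected, because the natural reduction from a satisfiability variant produces gadgets that are neither regular nor strong. If the instances supplied by \cite{bangman17} already meet both requirements for the needed thresholds $(r-k_1,r-k_2)$, the proof is essentially the two paragraphs above. Otherwise I would build the instances directly, reducing from {\sc Monotone Not-all-equal-3-SAT}: use a variable gadget whose forced colouring encodes a truth value by the side of the partition on which a designated literal-vertex falls, use a clause gadget that is satisfiable in $B_D(V_1,V_2)$ exactly when its three literals are not all equal, then regularise every vertex to a common out-degree $r$ with padding and wire the gadgets along a directed backbone passing through a distinguished hub of each, so that the whole digraph is strong while remaining $r$-out-regular. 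The delicate point in this fallback is verifying that the connector arcs and the regularising padding neither create spurious $(\delta^+\ge k_1,\delta^+\ge k_2)$-bipartite-partitions (which would admit a false non-equal assignment) nor destroy the intended ones; this is where a case distinction between the boundary pair $\{k_1,k_2\}=\{1,2\}$ and the pairs with $\min(k_1,k_2)\ge 2$ or larger sum would, if anywhere, require separate bookkeeping.
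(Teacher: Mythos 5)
Your proposal is correct and follows essentially the same route as the paper: the paper's proof is exactly the complementation identity $d^+_{B_D(V_1,V_2)}(v)=r-d^+_{\induce{D}{V_i}}(v)$ on $r$-out-regular digraphs, combined with the fact that the hard instances of the bounded-out-degree $2$-partition problem in \cite{bangman17} are already strong and $p$-out-regular with $p$ large enough relative to the thresholds ($p\ge a_1+2$ in the equal case, $p\ge a_2+1$ otherwise), so your fallback construction is not needed. The only thing to note is that the paper makes the degree bookkeeping explicit (taking $p=k+2$ when $k_1=k_2=k$ and $p=k_2+1$ when $k_1<k_2$), which is the detail your second paragraph leaves implicit.
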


\begin{proof}
In \cite{bangman17} it was shown that deciding the existence of a
2-partition $(V_1,V_2)$ of a digraph $D$ so that
$\Delta^+(\induce{D}{V_1})\leq a_1$ and $\Delta^+(\induce{D}{V_2})\leq
a_2$ is ${\cal NP}$-complete for all $a_1,a_2$ with $\max\{a_1,a_2\}\ge
1$ even when the input is a strong out-regular digraph.  More
precisely, when $a_1=a_2$ the problem is ${\cal NP}$-complete for
strong $p$-out-regular digraphs when $p\geq a_1+2$ and when $a_1<a_2$,
the problem is ${\cal NP}$-complete for strong $p$-out-regular
digraphs when $p\geq a_2+1$.  The first of these results implies that
{\sc $(\delta^+\geq k,\delta^+\geq k)$-bipartite-partition} is
${\cal NP}$-complete for strong $(k+2)$-out-regular digraphs.  The
second result implies that {\sc $(\delta^+\geq k_1,\delta^+\geq
k_2)$-bipartite-partition problem} with $1\leq k_1<k_2$ is ${\cal
  NP}$-complete for strong $(k_2+1)$-out-regular digraphs.
\end{proof}

\section{$(\delta^+\geq 1,\delta^-\geq 1)$-bipartite-partitions}\label{sec:inoutbip}
%%%%%%%%%%%%%%%%%%%%%%%%%%%%%%%%%%%%%%%%%%%%%%%%%%%%%%%%%%%%%

We now turn to the complexity of deciding whether a given digraph has a
$(\delta^+\geq 1,\delta^-\geq 1)$-bipartite-partition. We first show
that this problem is ${\cal NP}$-complete for acyclic digraphs but
polynomial-time solvable when the input is a strong digraph. Then we
classify, in terms of the strong component digraph, those classes of
non-strong digraphs for which the problem is ${\cal NP}$-complete. For
all the remaining classes it turns out that a partition always exists.

\begin{theorem}\label{delta+-NPC}
It is ${\cal NP}$-complete to decide whether an acyclic digraph has a
$(\delta^+\geq 1,\delta^-\geq 1)$-bipartite-partition.
\end{theorem}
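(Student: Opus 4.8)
The plan is to reduce from 3-SAT; membership in $\mathcal{NP}$ is immediate, since a proposed partition can be checked vertex by vertex in polynomial time. The starting point is a structural observation that holds in \emph{any} $(\delta^+\geq 1,\delta^-\geq 1)$-bipartite-partition $(V_1,V_2)$ of an acyclic digraph: a source has no in-neighbour, so it cannot lie in $V_2$ and must be placed in $V_1$; dually every sink must be placed in $V_2$ (and an isolated vertex makes the instance a trivial no-instance). This forced placement of sources and sinks is what I would exploit to build rigid gadgets.

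For each variable $v_i$ I would introduce a source $s_i$, two \emph{literal vertices} $x_i,\bar x_i$, and a sink $t_i$, with arcs $s_i\to x_i$, $s_i\to \bar x_i$, $x_i\to t_i$ and $\bar x_i\to t_i$. Since $s_i\in V_1$ needs an out-neighbour in $V_2$ and its only out-neighbours are $x_i,\bar x_i$, at least one of the two lies in $V_2$; symmetrically, since $t_i\in V_2$ needs an in-neighbour in $V_1$ and its only in-neighbours are $x_i,\bar x_i$, at least one of them lies in $V_1$. With exactly two literal vertices this forces $x_i$ and $\bar x_i$ to receive \emph{opposite} classes, and both options are realisable. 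I would declare $v_i$ to be \textbf{true} precisely when $x_i\in V_1$. Whatever the choice, a literal vertex placed in $V_1$ always has the out-neighbour $t_i\in V_2$, and one placed in $V_2$ always has the in-neighbour $s_i\in V_1$, so its own degree constraint is met internally by the gadget.

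For each clause $C_j$ I would add a single sink $c_j$ together with an arc from each of its three literal vertices to $c_j$. Being a sink, $c_j$ is forced into $V_2$, and its sole requirement is an in-neighbour in $V_1$, i.e. at least one literal vertex of $C_j$ lying in $V_1$ --- exactly the condition that $C_j$ contains a true literal. Adding these arcs does not disturb the literal vertices, as noted above. Listing the vertices in the order (all sources, then all literal vertices, then all sinks) witnesses that the digraph is acyclic, and the construction is clearly polynomial, so $G$ admits a $(\delta^+\geq 1,\delta^-\geq 1)$-bipartite-partition if and only if $\mathcal F$ is satisfiable.

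The part requiring care --- and essentially the only content --- is verifying that the variable gadget forces the clean exclusive-or behaviour while the literal vertices simultaneously keep satisfying their \emph{own} $\delta^+$/$\delta^-$ requirements after the clause arcs are attached. The latter hinges on keeping the out-neighbourhood of $s_i$ and the in-neighbourhood of $t_i$ equal to $\{x_i,\bar x_i\}$, so that the forcing imposed by $s_i$ and $t_i$ cannot be discharged through some other neighbour.
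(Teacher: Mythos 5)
Your reduction is exactly the paper's: your variable gadget $s_i\to\{x_i,\bar x_i\}\to t_i$ is the paper's $y_i\to\{x_i,\bar x_i\}\to z_i$, the clause sinks $c_j$ and the forced placement of sources in $V_1$ and sinks in $V_2$ are identical, and the verification is the same. The proposal is correct and matches the paper's proof.
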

\begin{proof}
 Let ${\cal F}$ be an instance of 3-SAT
 with clauses $C_1,C_2,\ldots{},C_m$ over the variables
 $v_1,v_2,\ldots{},v_n$. Form a digraph $M=M({\cal F})$ as follows:
 \begin{eqnarray*}
 V(M) & = & \bigcup_{i\in [n]}  \{x_i, \bar{x}_i, y_i, z_i\} \cup \{c_j \mid j\in [m]\},\\
  E(M) & = & \bigcup_{i\in [n]} \{y_ix_i, y_i\bar{x}_i, x_iz_i, \bar{x}_iz_i\} \cup \{x_ic_j \mid x_i \mbox{ literal of } C_j\} \cup   \{\bar{x}_ic_j \mid \bar{x}_i \mbox{ literal of } C_j\}.
 \end{eqnarray*}

Observe that in $M$ the vertices $y_i$ are sources, the vertices $z_i$
are sinks, and the vertices $c_j$ are sinks too.  Hence if $(V_1,V_2)$
is a $(\delta^+\geq 1,\delta^-\geq 1)$-bipartite-partition of $M$,
then for each $i\in [n]$ we have $y_i \in V_1$ and $z_i \in V_2$, and
for each $j\in [m]$ we have $c_j\in V_2$.  Consequently, for each
$i\in [n]$, exactly one of the vertices $x_i,\bar{x}_i$ belongs to
$V_1$. Hence if we interpret $x_i\in V_1$ (resp. $x_i\in V_2)$ as
meaning the the variable $v_i$ is true (resp. false), then $M$ has a
$(\delta^+\geq 1,\delta^-\geq 1)$-bipartite-partition if and only if
${\cal F}$ is satisfiable.
\end{proof}

\subsection{ Solving {\sc $(\delta^+\geq 1,\delta^-\geq 1)$-bipartite-partition} for strong digraphs}
%%%%%%%%%%%%%%%%%%%%%%%%%%%%%%%%%%%%%%%%%

The digraph $M$ in the above proof is very far from being strong as it
has many sources and sinks. A natural question is thus to determine
the complexity of $(\delta^+\geq 1,\delta^-\geq
1)$-bipartite-partition problem when the input is restricted to be a
strong digraph. We show below that in this case the problem becomes
solvable in polynomial time.\\

%We say that a vertex $2$-colouring of a digraph $D$ is a {\bf
%  \Gcol{}} if every vertex of colour $1$ dominates a vertex of colour
% $2$ and every vertex of colour $2$ is dominated by a vertex of colour
% $1$.

For any digraph $D$ we define the following reduction rule.

\begin{description}
\item[Reduction Rule A:] If for some arc $xy \in A(D)$ we have
  $d^+(x)=d^-(y)=1$ then we reduce $D$ by deleting $x$ and $y$ and
  adding all arcs from $N^-(x)$ to $N^+(y)$ in $D$ (if an arc $uv$ is
  already present we do not add an extra copy. Similarly, if $z\in
  N^-(x)\cap N^+(y)$, then we do not add a loop at $z$.).

In this case, we say that the arc $xy$ got {\bf reduced}.
\end{description}

We call $D'$ a {\bf reduction of } $D$ if $D'$ is obtained from $D$ by
applying the reduction rule one or more times. We first prove the
following lemma.

\begin{lemma} \label{lem1}
If $D'$ is a reduction of $D$, then $D$ has a \Gcol{} if and only if
$D'$ has a \Gcol{}.  Furthermore if $D$ is strong then so is $D'$.
\end{lemma}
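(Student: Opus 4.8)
The plan is to prove the statement for a single application of Reduction Rule A and then obtain the general case by induction on the number of applications. So fix an arc $xy$ with $d^+(x)=d^-(y)=1$ and let $D'$ be the result of reducing $xy$. The crucial structural fact is that the only out-neighbour of $x$ is $y$ and the only in-neighbour of $y$ is $x$; consequently, in \emph{any} \Gpart{} $(V_1,V_2)$ of $D$ the vertices $x$ and $y$ must receive opposite colours. Indeed, if $x\in V_1$ then its unique out-neighbour $y$ must lie in $V_2$; and if $x\in V_2$ then $y$ cannot also lie in $V_2$, since its only in-neighbour $x$ would then fail to certify it, forcing $y\in V_1$. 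This restricts the extension problem to two candidate colourings of $\{x,y\}$.

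For the forward implication I would start from a \Gpart{} $(V_1,V_2)$ of $D$ and take its restriction to $V(D')$; passing from $D$ to $D'$ only adds arcs (between $N^-(x)$ and $N^+(y)$) and deletes $x,y$, so the only danger is that some surviving vertex loses the single neighbour that certified it. If $(x,y)$ is coloured $(V_1,V_2)$, one checks directly that no vertex of $V(D')$ could have used $x$ or $y$ as a certificate. If $(x,y)$ is coloured $(V_2,V_1)$, then a vertex $p\in N^-(x)\cap V_1$ that relied on $x$ can instead use a new arc to some $w\in N^+(y)\cap V_2$ (such a $w$ exists because $y\in V_1$ needs an out-certificate), and symmetrically for $p\in N^+(y)\cap V_2$; the colours guarantee $p\neq w$, so the relevant new arc is genuinely present.

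For the backward implication I would take a \Gpart{} $(V_1,V_2)$ of $D'$ and try to extend it to $D$ by colouring $\{x,y\}$ with one of the two admissible patterns. Setting $(x,y)=(V_2,V_1)$ makes every vertex of $N^-(x)\cap V_1$ (resp. $N^+(y)\cap V_2$) that used a new arc certified again by $x$ (resp. $y$), and it certifies $x$ and $y$ themselves precisely when $N^-(x)\cap V_1\neq\emptyset$ and $N^+(y)\cap V_2\neq\emptyset$. The main obstacle is exactly the case where one of these two sets is empty; there I would switch to $(x,y)=(V_1,V_2)$. The key observation is that if, say, $N^-(x)\cap V_1=\emptyset$, then every new arc leaving the deleted region originates in $V_2$, so no $V_1$-vertex ever relied on a new arc and no certificate is lost when these arcs disappear; the symmetric argument handles $N^+(y)\cap V_2=\emptyset$. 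Hence at least one of the two patterns always yields a valid \Gpart{} of $D$, closing the equivalence.

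Finally, to show that $D'$ is strong when $D$ is, I would take any two vertices $u,v\in V(D')$ and a $(u,v)$-path in $D$. Since $x$ has out-degree $1$ and $y$ has in-degree $1$, whenever such a path meets $\{x,y\}$ it must traverse a segment $a\to x\to y\to b$ with $a\in N^-(x)$, $b\in N^+(y)$, and $a\neq b$ (the vertices of a path being distinct); replacing each such segment by the arc $a\to b$, which exists in $D'$ by construction, yields a $u$–$v$ walk in $D'$ and hence a $(u,v)$-path. Induction over successive reductions then gives the lemma in full.
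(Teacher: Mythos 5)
Your proposal is correct and takes essentially the same route as the paper's proof: reduce to a single application of Rule A and induct, note that $x$ and $y$ must receive opposite colours, split into the same two cases in each direction (both $N^-(x)\cap V_1$ and $N^+(y)\cap V_2$ meeting the required colour versus one of them failing to), and restore strongness by replacing any path segment $a,x,y,b$ with the arc $ab$ of $D'$. No gaps.
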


\begin{proof}
Let $D$ and $D'$ be defined as in the statement of the lemma.  Clearly
if suffices to prove the lemma when $D'$ was obtained from $D$ by
applying the reduction rule A once, since then the claim follows by
induction. So let $xy$ be the arc that got reduced in $D$ in order to
obtain $D'$.

\medskip
First assume that $D'$ has a \Gcol{} $c$. Now consider the following
two cases.

\2

{\bf Case 1}. There exists $u \in N_D^-(x)$ and $v \in N_D^+(y)$ such
that $c(u)=1$ and $c(v)=2$.  In this case we can assign $c(x)=2$ and
$c(y)=1$ and note that $c$ is now a \Gcol{} of $D$.

\2

{\bf Case 2}. All vertices in $N_D^-(x)$ have colour $2$ or all
vertices in $N_D^+(y)$ have colour $1$.  In this case we can assign
$c(x)=1$ and $c(y)=2$ and note that $c$ is now a \Gcol{} of $D$.

\2

Therefore if $D'$ has a \Gcol{} then so does $D$.

\medskip

Now assume that $D$ has a \Gcol{}, $c$ and consider the following two
cases.

\2

{\bf Case A}. $c(x)=2$. In this case $c(y)=1$ as $y$ does not have any
in-neighbour of colour $1$.  Therefore there must be a vertex in $
N_D^-(x)$ with colour $1$ and a vertex in $ N_D^+(y)$ with colour
$2$. Therefore just restricting the $2$-colouring $c$ to $V(D')$ gives
us a \Gcol{} for $D'$.

\2

{\bf Case B}. $c(x)=1$. In this case $c(y)=2$ as $x$ needs an
out-neighbour of colour $2$.  Now restricting the $2$-colouring $c$ to
$D'$ gives us a \Gcol{} for $D'$.

\2

This completes the proof of the first part of the lemma.

\bigskip

Assume that $D$ is strong. We will show that $D'$ is also strong.  Let
$u,v \in V(D')$ be arbitrary. As $D$ is strong there exists a
$(u,v)$-path, $P$, in $D$.  If $x$ or $y$ belong to $P$, then the arc
$xy$ belongs to $P$ as in $D$ we have $d^+(x)=d^-(y)=1$.  If $x^-$ is
the predecessor of $x$ on $P$ and $y^+$ is the successor of $y$ on
$P$, then we obtain a $(u,v)$-path in $D'$ by deleting $x$ and $y$
from $P$ and adding the arc $x^- y^+$ (which by definition belongs to
$D'$).  As $u$ and $v$ were picked arbitrarily this implies that $D'$
is strong.
\end{proof}

A non-trivial {\bf out-star} (resp. {\bf in-star}) is an out-tree
(resp. in-tree) of depth $1$, that is, it consists of at least two
vertices and the root dominates (resp. is dominated by) all the other
vertices in the tree. An {\bf out-galaxy} (resp. {\bf in-galaxy}) is a
set of vertex-disjoint non-trivial out-stars (resp. in-stars).  A {\bf
  nebula} is a set of vertex-disjoint non-trivial out- or in-stars.

Every nebula has a \Gcol{} : colour with $1$ the sinks and with $2$
the sources. Consequently, if a digraph has a spanning nebula, then it
also has a \Gpart{}.  The following result is proved in
\cite{goncalvesDAM160}.

\begin{lemma}[\cite{goncalvesDAM160}] \label{lem2}
 If $D$ is a strong digraph and $D' \subseteq D$ is a strong
 subdigraph of $D$ of even order, then $D$ has a spanning out-galaxy
 and therefore also a \Gpart{}.
\end{lemma}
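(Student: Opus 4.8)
The plan is to separate the statement into two tasks: first, to show that the strong even-order subdigraph $D'$ itself has a spanning out-galaxy, and second, to extend such a galaxy to a spanning out-galaxy of the whole digraph $D$ using only the strong connectivity of $D$. Once a spanning out-galaxy of $D$ is in hand the \Gpart{} is immediate: colouring every root with $1$ and every leaf with $2$ yields a \Gcol{}, since each root dominates (hence has an out-neighbour among) its leaves in $V_2$, and each leaf is dominated by (hence has an in-neighbour among) its root in $V_1$. It is convenient to keep in mind the exact equivalence underlying this: a digraph has a spanning out-galaxy if and only if it has a \Gpart{}, the roots playing the role of $V_1$ and the leaves the role of $V_2$.

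For the first task I would argue by induction on $|V(D')|$, the base case being a digon, which is a single out-star. For the inductive step, the key device is to contract a directed \emph{odd} cycle $C$ of $D'$: since contracting a strong subdigraph of a strong digraph keeps it strong, and since $|V(C)|$ is odd, the contracted digraph $D'/V(C)$ is again strong and of even order, so by induction it has a spanning out-galaxy $F$. Let $c$ be the contracted vertex. If $c$ is a leaf of $F$, say dominated by a root $r$, the un-contraction is clean: replace the leaf $c$ by an actual vertex $x\in V(C)$ with $rx\in A(D')$, and cover the remaining $|V(C)|-1$ vertices of $C$ (an even-length directed path) by consecutive $2$-stars. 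If instead $c$ is a root of $F$ with external leaves, one must make some vertices of $C$ into roots that adopt these external leaves; here one rotates $C$ so that the single leftover vertex of the odd cycle is chosen to dominate an external leaf, covering the rest of $C$ by $2$-stars. The degenerate case where $D'$ has \emph{no} odd cycle is handled separately: then $D'$ admits a balanced $2$-colouring in which every arc is bichromatic, and one builds the galaxy directly from an even directed cycle.

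For the second task I would grow $D'$ to all of $D$ by an ear (handle) decomposition: since $D$ is strong and $D'$ is a strong subdigraph, $D$ is obtained from $D'$ by successively adding handles whose internal vertices are new, until every vertex of $D$ is covered. Each handle contributes a directed path of new internal vertices; if this path has an even number of vertices, pair them consecutively into $2$-stars; if it has an odd number, absorb the extra vertex as a leaf of a suitable already-covered root incident to the handle, using strong connectivity to guarantee such a root. Applying Reduction Rule~A (Lemma~\ref{lem1}) beforehand to eliminate arcs $xy$ with $d^+(x)=d^-(y)=1$ can be used to simplify these handles.

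The main obstacle is parity bookkeeping throughout. The hypothesis that $D'$ has \emph{even} order is exactly the slack that makes the induction of the first task go through (an odd directed cycle, such as a single directed triangle, has no spanning out-galaxy), and the delicate point in both tasks is the same: whenever an odd configuration forces one leftover vertex, one must attach it as a leaf to an already-present root that dominates it, and one has to argue that such a root can always be arranged. The root-case of the un-contraction and the odd-length handles are precisely where this requires the most care.
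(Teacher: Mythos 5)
First, note that the paper gives no proof of Lemma~\ref{lem2}: it is quoted from \cite{goncalvesDAM160}, so there is no internal argument to compare yours against. Judged on its own, your outline has a genuine gap at exactly the point you yourself flag as delicate, namely the extension step (your ``second task''). When a handle $(s,v_1,\dots,v_\ell,t)$ has an odd number of internal vertices, you propose to ``absorb the extra vertex as a leaf of a suitable already-covered root incident to the handle, using strong connectivity to guarantee such a root''. Strong connectivity guarantees that the leftover vertex has an in-neighbour among the vertices already covered, but not that this in-neighbour is, or can be made, a \emph{root}: it may be the unique leaf of a $2$-star of the current galaxy, and promoting it to a root strips its own root of its only leaf, which cascades with no termination argument. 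That this is the real difficulty is corroborated by the paper itself: Theorem~\ref{thmgoodpart} faces precisely this odd-handle obstruction and needs Reduction Rule~A together with the parity argument of Claims~\ref{claim1}--\ref{claim3} to get around it --- and even then it only delivers a \Gpart{} (not a spanning out-galaxy) and only under the escape clause ``or $D$ can be reduced''. Note also that Lemma~\ref{lem1} asserts that Rule~A preserves the existence of a \Gcol{}, not of a spanning out-galaxy, so applying it ``beforehand'' does not obviously help for the stronger conclusion you need.

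The same unresolved extension problem reappears inside your ``first task''. In the degenerate case where $D'$ has no odd directed cycle, ``building the galaxy directly from an even directed cycle'' still requires covering the vertices of $D'$ outside that cycle, which is again the handle-extension problem; and the alternative route through the bipartition of $UG(D')$ only produces a \Gpart{}, hence (via Proposition~\ref{acyclicgoodpart}) a spanning \emph{nebula}, which may contain in-stars and is therefore weaker than the claimed spanning out-galaxy. The odd-cycle contraction in the non-degenerate case is a genuinely nice idea: the leaf case of the un-contraction is correct, and the root case, while under-specified (a single ``rotation'' does not suffice when distinct external leaves of the contracted root are dominated by distinct vertices of $C$; one must choose a set of roots on $C$ dominating all external leaves and absorb each maximal run of non-roots, which does work since runs of either parity can be absorbed), is repairable. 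The extension to all of $D$ is not repaired by what you have written, and it is where the actual content of the lemma lies.
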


\begin{corollary}\label{cor1}
 Let $D$ be a strong digraph and assume that there exists an arc $xy
 \in A(D)$ and two vertex disjoint $(y,x)$-paths in $D$.  Then $D$ has
 a \Gpart{}.
\end{corollary}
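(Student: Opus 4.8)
The plan is to reduce everything to Lemma~\ref{lem2}, which guarantees that a strong digraph containing a strong subdigraph of \emph{even} order has a \Gpart{}. Since $D$ is strong by hypothesis, the entire task is to exhibit a strong subdigraph of $D$ whose order is even; the arc $xy$ together with the two vertex-disjoint $(y,x)$-paths will supply several candidates, and a parity argument will guarantee that at least one of them has even order.

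First I would set up the candidate subdigraphs. Let $P_1$ and $P_2$ be the two internally disjoint $(y,x)$-paths (they share only the endpoints $y$ and $x$). Since $xy\in A(D)$, prepending the arc $xy$ to $P_i$ yields, for each $i\in\{1,2\}$, a directed cycle $C_i$ (traversing $x\to y$ and then $y\to\cdots\to x$ along $P_i$); as a directed cycle, each $C_i$ is strong. Next set $S=C_1\cup C_2=P_1\cup P_2\cup\{xy\}$, and I would verify that $S$ is itself strong: every vertex of $S$ lies on $P_1$ or on $P_2$ and hence reaches $x$ by following that path forward; from $x$ the arc $xy$ reaches $y$; and from $y$ every vertex of $S$ is reachable along $P_1$ or $P_2$. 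Thus any vertex reaches $x$, reaches $y$, reaches any vertex, so $S$ is strong. All of $C_1$, $C_2$, $S$ are subdigraphs of $D$ since $P_1,P_2$ are paths in $D$ and $xy\in A(D)$.

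The remaining step is the parity argument, which is short but is really the crux. Because $x$ and $y$ are the only vertices common to $C_1$ and $C_2$, we have $|V(S)|=|V(C_1)|+|V(C_2)|-2$, hence $|V(S)|\equiv |V(C_1)|+|V(C_2)|\pmod 2$. If $|V(S)|$ is even, then $S$ is a strong subdigraph of even order and Lemma~\ref{lem2} applies at once. Otherwise $|V(C_1)|+|V(C_2)|$ is odd, so exactly one of $C_1,C_2$ has even order; that cycle is a strong subdigraph of even order, and Lemma~\ref{lem2} applies again. In either case $D$ contains a strong subdigraph of even order, so by Lemma~\ref{lem2} it has a spanning out-galaxy and therefore a \Gpart{}, as required. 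The only point needing care is confirming that the three orders cannot all be odd, which the identity $|V(S)|=|V(C_1)|+|V(C_2)|-2$ settles immediately.
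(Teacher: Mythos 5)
Your proof is correct and follows essentially the same route as the paper's: both form the two cycles $P_i\cup xy$, observe that the union of the two paths is a strong subdigraph of order $|V(P_1)|+|V(P_2)|-2$, and use parity to conclude that one of these three strong subdigraphs has even order so that Lemma~\ref{lem2} applies. The only cosmetic difference is that you case-split on the parity of the union first, while the paper first disposes of the case where some $P_i$ already has even order.
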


\begin{proof}
 Let $P_1$ and $P_2$ be the two vertex-disjoint $(y,x)$-paths in $D$.
 If $P_i$ ($i \in \{1,2\}$) has an even number of vertices, then, by
 Lemma~\ref{lem2}, $D$ has a \Gpart{} as $D\langle V(P_i)\rangle$ is
 strong (as $P_i \cup xy$ is a cycle) and of even order.  So we may
 assume that both $P_1$ and $P_2$ have an odd number of vertices.  Let
 $D' = D\langle V(P_1) \cup V(P_2) \rangle$ and note that $D'$ is
 strong and $|V(D')| = |V(P_1)|+|V(P_2)|-2$ is even, implying that $D$
 has a \Gpart{}, by Lemma~\ref{lem2}.
\end{proof}

\begin{theorem} \label{thmgoodpart}
If $D$ is a strong digraph, then one of the following holds.

\begin{description}
 \item[(a):] $D$ has a \Gpart{}.
 \item[(b):] $D$ is an isolated vertex.
 \item[(c):] $D$ can be reduced.
\end{description}
\end{theorem}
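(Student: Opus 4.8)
The plan is to assume that neither (b) nor (c) holds and to deduce (a): I assume $D$ is strong with $|V(D)|\ge 2$ and that Reduction Rule A cannot be applied to any arc of $D$, and I construct a \Gpart{}. By Lemma~\ref{lem2} it suffices to exhibit a strong subdigraph of $D$ of even order, and Corollary~\ref{cor1} gives a convenient sufficient condition for exactly this: an arc $xy$ together with two internally vertex-disjoint $(y,x)$-paths. So the whole proof reduces to finding either a directed cycle of even length or an arc admitting two internally disjoint return paths.

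First I would extract the combinatorial content of non-reducibility. Since Reduction Rule A does not apply, no arc $xy$ satisfies $d^+(x)=d^-(y)=1$; in particular $D$ is not a single directed cycle, since on a cycle every arc is reducible. As $D$ is strong this forces $|A(D)|\ge |V(D)|+1$, so some vertex has out-degree at least $2$ and $D$ properly contains a cycle. If $D$ happens to contain a directed cycle of even length we are done immediately by Lemma~\ref{lem2}, so the substantive case is the one in which every directed cycle of $D$ is odd (in particular $D$ has no digon).

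In this all-odd case I would start from a shortest cycle $C$, which is odd, and use that $D$ is strong with $D\neq C$ to obtain a $C$-handle $H=(s,v_1,\dots,v_\ell,t)$. The $(s,t)$-path along $C$ and the handle $H$ are two internally disjoint $(s,t)$-paths, and the $C$-path $q$ from $t$ back to $s$ produces the cycle $H\cup q$; if its length were even we would already have an even cycle, so the assumption that all cycles are odd pins down $\ell$, $|q|$ and $|C|$ modulo $2$. The key further leverage is to re-apply non-reducibility to the internal arcs $v_iv_{i+1}$ of $H$: each such arc forbids $d^+(v_i)=d^-(v_{i+1})=1$, so every internal handle vertex carries an extra arc leaving $C\cup H$, and I would use one of these branch arcs either to close a cycle of even length or to supply the missing return arc or second internally disjoint path needed to bring some arc of $C\cup H$ into the configuration of Corollary~\ref{cor1}, which then finishes the proof.

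The main obstacle is precisely the all-odd-cycle case, and within it the configuration in which the two natural cycles one produces (for instance the two cycles through a vertex of out-degree $\ge 2$) meet in a single common vertex rather than in a single common arc: then their union has odd order and shares no arc, so neither Lemma~\ref{lem2} nor Corollary~\ref{cor1} applies directly. This is the genuine parity barrier familiar from even-directed-cycle problems, and non-reducibility is exactly the hypothesis that should dissolve it, since it forbids the degree-$1$ bottlenecks (and, I expect, the single cut vertices through which all return paths would otherwise be funnelled) that make such thin odd configurations possible. I anticipate that converting this intuition into a rigorous argument will require either choosing $C$ and $H$ so as to minimise the size of the union and then exploiting minimality, or setting up an induction in which Reduction Rule A is applied to an auxiliary digraph via Lemma~\ref{lem1}; that step is where the real work lies.
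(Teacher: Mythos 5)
Your overall direction matches the paper's: reduce everything to Lemma~\ref{lem2} and Corollary~\ref{cor1}, build a handle over a cycle, and play non-reducibility against parity. But the proposal stops exactly where the proof has to happen, and the one concrete structural claim you do make is not right. Non-reducibility of an internal arc $v_iv_{i+1}$ of a handle only gives the disjunction $d^+(v_i)>1$ \emph{or} $d^-(v_{i+1})>1$; it does not give every internal handle vertex an extra arc, and the extra arcs it does provide need not ``leave $C\cup H$'' --- on the contrary, the heart of the argument is that, for a suitably chosen handle, minimality forces any such extra arc to land \emph{inside} the handle (an extra out-arc of $v_{i+1}$ must return to some $v_k$ with $k\le i$, an extra in-arc of $v_i$ must come from some $v_j$ with $j\ge i+1$), which is precisely what creates either a digon (a strong subdigraph of even order, handled by Lemma~\ref{lem2}) or the two disjoint return paths needed for Corollary~\ref{cor1}.

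The two ideas you are missing are the following. First, do not stop at one handle over a shortest cycle: iterate the handle decomposition $D_1\subseteq D_2\subseteq\cdots\subseteq D_p=D$, always adding a \emph{shortest} nontrivial handle, and argue about the \emph{last} handle $H_p=(x,v_1,\dots,v_l,y)$. Since every $D_i$ is strong and $D$ is assumed to have no \Gpart{}, Lemma~\ref{lem2} forces both $|V(D_{p-1})|$ and $|V(D_p)|$ to be odd, hence $l$ is even --- this is the parity you say you would ``pin down'' but never actually extract, and it is not available from a single handle over $C$. Second, with $l$ even, shortestness of $H_p$ gives $d^+(v_1)=d^-(v_l)=1$ and confines all extra arcs at internal vertices to the handle itself; combining non-reducibility (if $d^+(v_i)=1$ then $d^-(v_{i+1})>1$) with Corollary~\ref{cor1} (if $d^-(v_i)>1$ then $d^+(v_{i+1})=1$, else two disjoint return paths for the arc $v_iv_{i+1}$ arise) produces an alternating pattern along the handle that forces $d^-(v_l)>1$ because $l$ is even, contradicting $d^-(v_l)=1$. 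Without these two steps the ``all cycles odd'' case --- which you correctly identify as the crux --- remains open, so as it stands the proposal is a plan rather than a proof.
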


\begin{proof}
Assume the theorem is false and that none of (a), (b) or (c) hold.
That is $|V(D)| \geq 2$, $D$ cannot be reduced and $D$ does not have
a \Gpart{}.

We now build a sequence of handles as follows. Let $D_1=H_1$ be a
shortest cycle in $D$ (which exists since $D$ is strong and not an
isolated vertex) and let $i=1$.  While $V(D_i) \not= V(D)$, let
$H_{i+1}$ be a shortest non-trivial $D_i$-handle and $D_{i+1}=D_i\cup
H_{i+1}$.  Continue this until $V(D_i)=V(D)$ (which is easy seen to be
possible as $D$ is strong).
 
Let $H_p=(x,v_1,v_2, \cdots , v_l, y)$ be the last handle added in the
above process.

If $p=1$, then the shortest cycle in $D$ is a Hamilton cycle, and so
$D$ itself is this cycle. As (a) is false, we note that $D$ is not a
$2$-cycle and letting $xy$ be any arc on the cycle we note that $D$
can be reduced (as $d^+(x)=d^-(y)=1$), a contradiction.  So we may
assume that $p \geq 2$.

As all $D_i$ ($i\in [p]$) are strong and $D$ has no
\Gpart{}, both $|V(D_{p-1})|$ and $|V(D_p)|$ are odd, by
Lemma~\ref{lem2}.  As $|V(D_{p-1})| + l = |V(D_p)|$ we must therefore
have that $l$ is even.  We now prove a number of claims.

\begin{claim}\label{claim1}
$d_D^+(v_1) = 1$ and $d_D^-(v_l)=1$.
\end{claim}

\noindent {\em Proof of Claim~\ref{claim1}.} For the sake of contradiction, assume that
$d_D^+(v_1) > 1$ and let $v_1 z$ be any arc out of $v_1$ different
from $v_1v_2$. Note that $ z \not\in D_{p-1}$ by the minimality of
$H_p$. Also $z \not\in \{v_3,v_4,\ldots,v_l\}$ by the minimality of
$H_p$. As $z$ is also not $v_1$ or $v_2$, $z$ does not exist, a
contradiction.  This proves that $d_D^+(v_1) = 1$.
We can prove $d_D^-(v_l)=1$ analogously. ~\hfill $\Diamond$

\begin{claim}\label{claim2}
If $d_D^+(v_i) = 1$, then $d_D^-(v_{i+1})>1$ for all
$i=1,2\ldots, l-1$.
\end{claim}

\noindent{\em Proof of Claim~\ref{claim2}.} This follows immediately from the fact that
$D$ cannot be reduced. ~\hfill $\Diamond$

\begin{claim}\label{claim3}
If $d_D^-(v_i) > 1$, then $d_D^+(v_{i+1})=1$ for all
$i=2,3,\ldots, l-2$.
\end{claim}

\noindent{\em Proof of Claim~\ref{claim3}.} For the sake of contradiction, assume that
$d_D^-(v_i) > 1$ and $d_D^+(v_{i+1})>1$ for some $i \in \{2,3,\ldots,
l-2\}$.  Let $zv_i$ be any arc in $D$ different from
$v_{i-1}v_i$. Note that $z=v_j$ for some $j \in \{i+1,i+2,\ldots,l\}$
as otherwise there would exist a shorter handle than $H_p$.
Analogously let $v_{i+1}v_k$ be an arc out of $v_{i+1}$ different from
$v_{i+1}v_{i+2}$ and note that $k \in \{1,2,\ldots,i\}$. If $j=i+1$ or
$k=i$ then $v_i v_{i+1} v_i$ is a $2$-cycle, which is also a strong
digraph of even order, a contradiction by Lemma~\ref{lem2}.  So
$j>i+1$ and $k<i$ which implies that we get a contradiction to
Corollary~\ref{cor1} (as $v_iv_{i+1}$ is an arc and
$v_{i+1}v_{i+2}\ldots v_j v_i$ and $v_{i+1} v_k v_{k+1} \ldots v_i$
are vertex-disjoint paths).  This contradiction completes the proof of
Claim~\ref{claim3}. ~\hfill $\Diamond$

\2

By Claim~\ref{claim1}, $d_D^+(v_1)=1$.  By Claim~\ref{claim2}, $d_D^-(v_2)>1$.  By Claim~\ref{claim3},
$d_D^+(v_3)=1$.  By Claim~\ref{claim2}, $d_D^-(v_4)>1$. Continuing this, we note
that $d_D^+(v_i)=1$ for all odd $i<l$ and $d_D^-(v_i)>1$ for all even
$i \leq l$.  However, by Claim~\ref{claim1}, we note that $d_D^-(v_l)=1$, which
is the desired contradiction, as $l$ was even.
\end{proof}

\begin{corollary}
\label{strong+1-1}
We can decide in polynomial time whether a strong digraph has a
\Gpart{}.
\end{corollary}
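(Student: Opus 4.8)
The plan is to turn the structural dichotomy of Theorem~\ref{thmgoodpart} into a polynomial-time recursive algorithm. The idea is that Theorem~\ref{thmgoodpart} says that for a strong digraph $D$, exactly one of three mutually exhausting situations occurs: either $D$ already has a \Gpart{} (case (a)), or $D$ is a single vertex (case (b)), or Reduction Rule~A applies to some arc (case (c)). So I would proceed by checking, in polynomial time, which of these applies, and use Lemma~\ref{lem1} to recurse.

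First I would handle the trivial base case: if $D$ is an isolated vertex, then it has no \Gpart{} (a single vertex cannot have out-degree or in-degree at least $1$ in the induced bipartite digraph), so we answer `No'. Otherwise $|V(D)|\geq 2$, and since $D$ is strong, Theorem~\ref{thmgoodpart} guarantees that either (a) holds or (c) holds. The key observation is that we can decide whether Reduction Rule~A applies simply by scanning all arcs $xy\in A(D)$ and checking whether $d^+(x)=d^-(y)=1$; this takes polynomial time. If no such arc exists, then (c) fails, so (a) must hold and we answer `Yes'. If such an arc exists, we apply Reduction Rule~A to obtain a reduction $D'$, which by Lemma~\ref{lem1} is again strong and has a \Gpart{} if and only if $D$ does; we then recurse on $D'$.

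To bound the running time, I would note that each application of Reduction Rule~A deletes two vertices and adds no new vertices, so $|V(D')| = |V(D)| - 2$. Hence the recursion depth is at most $|V(D)|/2$, and since each step (testing for an applicable arc, performing the reduction) is polynomial, the overall algorithm runs in polynomial time. When the recursion bottoms out, $D'$ is either a single vertex (answer `No', propagated back through Lemma~\ref{lem1}) or a strong digraph with at least two vertices to which no reduction applies, forcing case (a) and answer `Yes'.

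The main subtlety, rather than obstacle, is to make explicit that Theorem~\ref{thmgoodpart} applies at every level of the recursion: this is exactly where the second assertion of Lemma~\ref{lem1} (that a reduction of a strong digraph is strong) is essential, since Theorem~\ref{thmgoodpart} has strong connectivity as a hypothesis. Once strongness is preserved, the three cases remain exhaustive throughout, and correctness of the `Yes'/`No' answers follows immediately from combining Theorem~\ref{thmgoodpart} with the equivalence in Lemma~\ref{lem1}.
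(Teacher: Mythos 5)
Your proposal is correct and follows essentially the same route as the paper: repeatedly apply Reduction Rule~A, use Lemma~\ref{lem1} to preserve strongness and the existence of a \Gpart{}, and invoke the trichotomy of Theorem~\ref{thmgoodpart} on the fully reduced digraph to read off the answer from whether a single vertex remains. Your explicit accounting of the recursion depth and of why the theorem's strongness hypothesis holds at each level is a slightly more detailed write-up of the same argument.
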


\begin{proof}
Let $D$ be a strong digraph. We continuously reduce the digraph, $D$,
until it cannot be reduced any more.  Let $D'$ be the resulting
digraph. By Lemma~\ref{lem1}, $D'$ is strong and has a \Gpart{} if and
only if $D$ does.  By Theorem~\ref{thmgoodpart}, $D'$, and therefore
$D$, has a \Gpart{} if and only if $D'$ is not a single vertex.  All
the operations can be done in polynomial time, which completes the
proof.
\end{proof}

\subsection{Classification of ${\cal NP}$-complete instances in terms of their strong component digraph}
%%%%%%%%%%%%%%%%%%%%%%%%%%%%%%%%%%%%%%%%%%%%%%%%%%%%

The {\bf strong component digraph}, denoted by $SC(D)$, of a digraph
is obtained by contracting every strong component of $D$ to a single
vertex and deleting any parallel arcs obtained in the process.  For
any acyclic digraph $H$, let ${\cal D}^c(H)$ denote the class of all
digraphs $D$ with $SC(D)=H$.

\begin{proposition}
\label{acyclicgoodpart}
A digraph $D$ has a \Gpart{} if and only if it has a spanning
nebula. Furthermore, given any \Gpart{} of $D$ we can produce a
spanning nebula in polynomial time, and vice-versa.
\end{proposition}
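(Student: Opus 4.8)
The plan is to prove the two implications separately and, in each direction, to give an explicit construction so that the polynomial-time claims follow at once.

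For the direction from a spanning nebula to a \Gpart{}, I would reuse the observation recorded just before Lemma~\ref{lem2}. Given a spanning nebula, colour each vertex according to its role in the star containing it: in an out-star give the root colour $1$ and its leaves colour $2$, and in an in-star give the root colour $2$ and its leaves colour $1$ (equivalently, every source of the nebula is coloured $1$ and every sink is coloured $2$). Then the root of an out-star has an out-neighbour of colour $2$ among its leaves, each leaf has the root as an in-neighbour of colour $1$, and the in-star case is symmetric. Since the stars are vertex-disjoint and span $V(D)$, the arcs internal to the stars already certify the required out- or in-degree in $B_D(V_1,V_2)$, and any further arcs of $D$ only help. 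Reading off $(V_1,V_2)$ from the nebula is clearly linear time.

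For the converse, suppose $(V_1,V_2)$ is a \Gpart{} and let $B=B_D(V_1,V_2)$. By definition every vertex of $V_1$ has an out-neighbour in $V_2$ and every vertex of $V_2$ has an in-neighbour in $V_1$, so no vertex is isolated in the underlying bipartite graph of $B$. I would then extract a spanning star forest of this graph; because it is bipartite with sides $V_1$ and $V_2$, every star has its centre on one side and all its leaves on the other, hence is automatically an out-star (centre in $V_1$) or an in-star (centre in $V_2$), so the forest is a spanning nebula. Concretely, compute a maximal matching $M$ of $B$ and let every unmatched vertex select one of its neighbours; by maximality of $M$ each such neighbour is matched, and by the absence of isolated vertices each unmatched vertex indeed has one. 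For a matched edge $uv$ with $u\in V_1$ and $v\in V_2$, write $L_u$ (resp.\ $L_v$) for the unmatched vertices that selected $u$ (resp.\ $v$). If both $L_u$ and $L_v$ are empty, keep $uv$ as an out-star with root $u$ and leaf $v$; if exactly one is non-empty, say $L_u$, form the out-star with root $u$ and leaves $\{v\}\cup L_u$ (symmetrically if instead $L_v$ is non-empty); and if both are non-empty, discard the edge $uv$ and form the out-star with root $u$ and leaves $L_u$ together with the in-star with root $v$ and leaves $L_v$. Every vertex then lands in exactly one star, all stars are non-trivial and vertex-disjoint, and together they span $V(D)$.

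I expect the main obstacle to be precisely this last case. When both endpoints of a matched edge have pendant unmatched vertices assigned to them, the edge $uv$ cannot be placed inside a single star without orphaning one side; the resolution is that a spanning nebula need not use every edge, so we may safely discard $uv$ and split $u$ and $v$ into two separate stars. Once this point is handled, correctness is a routine check and the whole procedure---one maximal matching computation followed by a single pass over the vertices---runs in polynomial time, yielding both the equivalence and the two effective conversions claimed.
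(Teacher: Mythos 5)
Your first direction is correct and is the same as the paper's. In the converse direction your overall strategy --- extract a spanning star forest via a maximal matching, rather than the paper's induction that peels off one star at a time --- is a genuinely different and in principle workable route, but as written it has a concrete flaw: you work in the underlying graph of $B=B_D(V_1,V_2)$, which contains an edge $\{a,b\}$ for \emph{every} arc between the two parts, including arcs directed from $V_2$ to $V_1$. Such edges cannot be used in an out-star rooted in $V_1$ or an in-star rooted in $V_2$, so your key claim that every star of the forest ``is automatically an out-star (centre in $V_1$) or an in-star (centre in $V_2$)'' is false. Concretely, take $V_1=\{u,u'',w\}$, $V_2=\{v,v''\}$ and arcs $u\to v$, $u''\to v''$, $w\to v''$, $v\to w$; this partition is a \Gpart{}. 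Your algorithm may choose the maximal matching $\{uv,\,u''v''\}$, leaving $w$ unmatched with neighbours $v$ and $v''$, and $w$ may select $v$; you then output the star on $\{v,u,w\}$ centred at $v$ as an in-star, but the only arcs present are $u\to v$ and $v\to w$, so this set of vertices carries neither an in-star nor an out-star of $D$ centred at $v$, and the output is not a nebula.

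The repair is small and brings you in line with the paper's setup: before extracting the star forest, discard all arcs from $V_2$ to $V_1$ and work in the spanning subdigraph $D'$ consisting only of the arcs from $V_1$ to $V_2$. The \Gpart{} condition says precisely that in $D'$ every vertex of $V_1$ has out-degree at least $1$ and every vertex of $V_2$ has in-degree at least $1$, so $UG(D')$ has no isolated vertex, and now every edge of that graph really is oriented from its $V_1$-end to its $V_2$-end; with this restriction your case analysis on a matched edge (including the case where both endpoints have pendant selectors, which you correctly resolve by discarding the matched edge and splitting into two stars) goes through and is clearly polynomial. For comparison, the paper instead argues by induction on the number of vertices of $D'$: it removes an arc $v_1v_2$, and when the rest is no longer a \Gpart{} it identifies a maximal set $V'_2$ whose unique in-neighbour is $v_1$ (or dually) and peels off the star with root $v_1$ and leaves $V'_2$. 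Your construction, once repaired, is arguably the more directly algorithmic of the two.
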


\begin{proof}
If $\cal F$ is a spanning nebula of $D$, then we obtain a \Gpart{} by
colouring the root of every out-star by $1$ and the leaves by $2$ and
the root of every in-star by $2$ and its leaves by $1$. Suppose
conversely that $(V_1,V_2)$ is a \Gpart{} of $D$, where $V_i$ denotes
the vertices of colour $i$ and let $D'$ be the spanning subdigraph of
$D$ induced by the arcs from $V_1$ to $V_2$. Clearly it suffices to
prove that $D'$ has a spanning nebula. We prove this by induction on
the number of vertices. If $D'$ has just two vertices $x,y$, then this
is clear so assume $|V(D')|\geq 3$. Let $v_1v_2$ be an arc of $D'$
with $v_i\in V_i$, $i=1,2$. If $(V_1\setminus\{v_1\},V_2\setminus
\{v_2\})$ is a \Gpart{} of $D'-\{v_1,v_2\}$, then we are done by
induction, so we may assume that either $v_1$ is the unique
in-neighbour of some non-empty set $V'_2\subseteq V_2$, or $v_2$ is
the unique out-neighbour of some non-empty set $V'_1\subseteq V_1$. We
choose $V'_1,V'_2$ to be maximal with the given property. By the
assumption that $(V_1\setminus \{v_1\},V_2\setminus\{v_2\})$ is not a
\Gpart{} of $D'-\{v_1,v_2\}$, we have $V'_i\neq \{v_i\}$ for $i=1$ or
$i=2$. Without loss of generality, we have
$V'_2\setminus\{v_2\}\neq\emptyset$. Now it is easy to see that
$(V_1\setminus \{v_1\},V_2\setminus V'_2)$ is a \Gpart{} of
$D'-(\{v_1\}\cup{}V'_2)$ and we are done by induction. The process
above clearly yields a polynomial-time algorithm for producing the
spanning nebula.
\end{proof}

%% An {\bf out-branching} ({\bf in-branching}) is a digraph $B_r^+$
%% ($B_r^-$) whose underlying undirected graph is tree oriented in such a
%% way that the root $r$ can reach (be reached by) every vertex by a
%% directed path. 

Let $B_r^+$ be an $r$-out-branching.  Consider the following procedure
that produces an out-galaxy: Let $v$ be a leaf at maximum depth, let
$v'$ be its parent and let $T^+_{v'}$ be the out-tree rooted at $v'$
in $B^+_r$. Then $T^+_{v'}$ is an out-star. Remove this from $B^+_r$
and continue recursively until either no vertex remains or only the
root $r$ remains. In first case, we say that $B^+_r$ is {\bf winning}
and in the second case that $B^+_r$ is {\bf losing}. Observe that if
the root $r$ dominates a leaf in $B_r^+$, then $B^+_r$ is winning.
Similarly, an in-branching is {\bf winning} (resp. {\bf losing}) if
its converse is winning (resp. losing). It is easy to check the
following.

\begin{proposition}\label{prop:winning}
Let $B_r^+$ be an $r$-out-branching.
\begin{itemize}
\item If $B_r^+$ is winning, then it has a spanning out-galaxy, and so
  a \Gpart{}.
\item If $B_r^+$ is losing, then $B^+_r-r$ has a spanning out-galaxy,
  and so a \Gpart{}.
\end{itemize}
\end{proposition}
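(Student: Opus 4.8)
The plan is to read off both claims directly from the recursive peeling procedure described just above the statement, keeping track of the invariant that whatever remains is always an out-tree rooted at $r$. First I would verify that each piece the procedure removes is a non-trivial out-star: when $v$ is a leaf of maximum depth and $v'$ is its parent, every child of $v'$ lies at the same depth as $v$, hence by maximality of the depth of $v$ is itself a leaf. Therefore the out-tree $T^+_{v'}$ rooted at $v'$ in the current branching consists exactly of $v'$ together with its children, which is a non-trivial out-star (non-trivial since $v$ is a child of $v'$).

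Next I would record the key invariant: removing $T^+_{v'}$ deletes $v'$ and all of its children, i.e. a full ``bottom'' subtree, so the digraph remaining after each step is again an out-tree rooted at $r$ (provided $v'\neq r$; and if $v'=r$ then $T^+_{v'}$ is the entire remaining tree, so removing it empties it). Consequently the out-stars produced across the whole run are pairwise vertex-disjoint, and their union is an out-galaxy. Since each step strictly decreases the number of vertices, the procedure terminates, and by construction it halts precisely when the remaining out-tree is empty or consists of the single vertex $r$.

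I would then split into the two advertised cases. If $B^+_r$ is winning, the remaining tree is empty, so the removed out-stars partition $V(B^+_r)$ and thus form a spanning out-galaxy of $B^+_r$. If $B^+_r$ is losing, the remaining tree is the single vertex $r$, so the removed out-stars partition $V(B^+_r)\setminus\{r\}$ and thus form a spanning out-galaxy of $B^+_r-r$. In either case the ``and so a \Gpart{}'' part is immediate from the earlier observation that a spanning nebula (in particular a spanning out-galaxy) yields a \Gpart{}.

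The only point that needs a moment's care is the termination dichotomy, namely why the leftover is exactly ``empty or $\{r\}$'' and never a larger tree. I would argue that the loop runs as long as a non-root vertex survives: any surviving non-root vertex lies below some maximum-depth leaf whose parent-star is still eligible for removal, so the procedure cannot halt while a non-root vertex remains; and once only $r$ (or nothing) is left there is no maximum-depth leaf with a parent to peel, so it halts. This is the main, though minor, obstacle; everything else is bookkeeping.
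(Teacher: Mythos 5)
Your argument is correct and is exactly the verification the paper has in mind: the paper states the proposition with only the remark ``It is easy to check the following,'' and your proof simply carries out that check for the peeling procedure (each removed piece is a non-trivial out-star of leaves at maximum depth, what remains is again an out-tree rooted at $r$, and the run ends with either the empty tree or $\{r\}$, giving the two cases). No gaps; the final step to a \Gpart{} correctly invokes the earlier observation that a spanning nebula yields one.
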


We sometimes use this proposition without explicitly referring to it.

\begin{theorem} \label{thm2}
Let $H$ be any connected acyclic digraph of order at least $2$. The
following now holds.
\begin{itemize}
\item[(a)] If $H$ has a \Gpart{}, then all digraphs in ${\cal D}^c(H)$
  have a \Gpart{} and we can produce such a partition in polynomial
  time.
\item[(b)] If $H$ has no \Gpart{}, then it is ${\cal NP}$-complete to
  decide whether a digraph in ${\cal D}^c(H)$ has a \Gpart{}.
\end{itemize}
\end{theorem}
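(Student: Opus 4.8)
The plan is to prove both parts through the spanning-nebula reformulation of Proposition~\ref{acyclicgoodpart}: a digraph has a \Gpart{} if and only if it has a spanning nebula. For part~(a) I would start from a fixed \Gpart{} $(U_1,U_2)$ of $H$ (equivalently a spanning nebula $\mathcal{N}_H$ of $H$) and build a spanning nebula of an arbitrary $D\in{\cal D}^c(H)$, viewing $D$ as $H$ with each vertex $h$ blown up into its strong component $S_h$. The backbone is a dichotomy on $|S_h|$: if $|S_h|$ is even then $S_h$ is a strong subdigraph of even order, so by Lemma~\ref{lem2} it already has an internal spanning out-galaxy and can be coloured self-containedly; if $|S_h|$ is odd, I would repair its single ``leftover'' vertex using an arc leaving (resp. entering) $S_h$ whose existence is guaranteed by the partition of $H$ — when $h\in U_1$ it has an out-neighbour $h'\in U_2$, hence $D$ has an arc $u\to w$ with $u\in S_h$, $w\in S_{h'}$, and dually when $h\in U_2$. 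For such a component I would invoke Proposition~\ref{prop:winning}: root an out-branching of $S_h$ at the tail $u$ of the chosen external arc; if it is winning, $S_h$ is covered internally and needs no help, and if it is losing, the galaxy covers $S_h-u$ while the leftover root $u$ is coloured $1$ and is satisfied by $u\to w$ provided $w$ receives colour $2$. The symmetric statement with in-branchings handles $h\in U_2$, and the whole construction is polynomial since branchings, galaxies and $\mathcal{N}_H$ are all computable in polynomial time.

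The main obstacle in part~(a) is exactly this last proviso: a leftover $u\in S_h$ coloured $1$ requires its external neighbour $w\in S_{h'}$ to be coloured $2$, but $w$'s colour is decided when we process $S_{h'}$, and if $S_{h'}$ is ``winning'' its internal colouring need not place colour $2$ on $w$. I would resolve this by processing the components in a topological order of $H$ rather than independently, and by insisting that whenever a component $S_{h'}$ with $h'\in U_2$ is the head of a repair arc, the landing vertex $w$ is forced to be a colour-$2$ leaf of its galaxy, using $u\to w$ as the colour-$1$ in-neighbour of $w$ — that is, realising $\{u,w\}$ as a two-vertex star and covering $S_{h'}-w$ separately. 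Showing that these forced boundary choices can always be satisfied simultaneously, so that paired leftovers receive consistent colours, is the delicate point, and I expect it to require a careful global selection of the branchings and arcs rather than a purely local argument.

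For part~(b), membership in ${\cal NP}$ is immediate. For hardness I would reduce from {\sc 3-SAT} along the lines of Theorem~\ref{delta+-NPC}, whose acyclic construction already shows that recognising a \Gpart{} is hard; the task is to realise the hard instances inside the fixed class ${\cal D}^c(H)$. Starting from any connected acyclic $H$ of order at least $2$ that has \emph{no} \Gpart{}, I would first locate the obstruction to a spanning nebula in $H$ (a vertex or arc that no spanning family of stars can cover), and then build $D$ from ${\cal F}$ by blowing up the component at this obstruction into a variable/clause gadget of the type used in Theorem~\ref{delta+-NPC}, keeping every other component trivial so that contracting the strong components returns exactly $H$, i.e. $SC(D)=H$. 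The no-nebula property of $H$ guarantees that the untouched part of $H$ cannot by itself supply a partition, so $D$ admits a \Gpart{} precisely when the gadget can, i.e. when ${\cal F}$ is satisfiable.

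The main obstacle in part~(b) is uniformity: the reduction must work for \emph{every} fixed $H$ with no \Gpart{}. I would therefore need a structural understanding of which acyclic digraphs fail to have a spanning nebula, so as to pinpoint a single component whose blow-up carries the whole encoding, while simultaneously guaranteeing that this blow-up is strongly connected and that its arcs to and from the rest of $H$ neither merge nor split strong components, so that $SC(D)=H$ is preserved. Verifying that the local solvability of the gadget, under the constraints imposed by the surrounding no-nebula digraph, is equivalent to the satisfiability of ${\cal F}$ is the part I expect to demand the most care.
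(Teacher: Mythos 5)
Your part~(a) assembles the right ingredients (Proposition~\ref{acyclicgoodpart}, Lemma~\ref{lem2}, Proposition~\ref{prop:winning}, repair arcs between components), but the coordination step you flag as ``delicate'' is a genuine gap, and your proposed fix does not close it: if you force the landing vertex $w$ to be a colour-$2$ leaf of a two-vertex star $\{u,w\}$ and then try to cover $S_{h'}-w$ ``separately'', you have destroyed both the strongness and the parity of $S_{h'}$, so $S_{h'}-w$ may itself need a repair arc and the construction cascades with no termination argument. The paper avoids this entirely by organising the induction around the \emph{stars of a spanning nebula of $H$} rather than around individual vertices: for an out-star with root $r$ and leaves $s_1,\dots,s_k$ it uses exactly one inter-component arc $uv$ with $u$ in the root component and $v$ in one leaf component $S_1$, builds a winning out-branching of $D-(S_1-v)$ rooted at $u$ (so $v$, being a sink there, is forced to colour $2$), and then finishes $S_1$ with an in-branching rooted at $v$ via the winning/losing dichotomy. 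One arc per star means there is nothing to coordinate globally.

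Part~(b) has two concrete problems. First, the gadget $M$ of Theorem~\ref{delta+-NPC} is acyclic with many sources and sinks, so it cannot be substituted for a single strong component of a digraph in ${\cal D}^c(H)$ without changing $SC(D)$; you need a \emph{strong} SAT-gadget whose correctness is driven by externally forced boundary colours (the paper's $W$, with the two special vertices $a$ and $b$, is built exactly for this). Second, ``keeping every other component trivial'' breaks the satisfiable direction: if $H$ has several vertices that no partial nebula can cover, those vertices remain uncoverable in $D$ and the reduction always outputs NO. The paper handles this by attaching a private $4$-cycle to every uncoverable vertex except one distinguished $x$, and — crucially — it never needs the ``structural understanding of which acyclic digraphs fail to have a spanning nebula'' that you correctly identify as missing from your plan: it instead takes a \emph{maximal} induced subdigraph $H'$ of $H$ admitting a \Gpart{} and uses the maximality of $H'$ to force the colours of $a$ and $b$ in any \Gcol{} of $D$, which is what makes the backward direction of the reduction go through uniformly for every $H$ without a \Gpart{}.
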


\begin{proof}
We first prove (a). Let $H$ be an acyclic digraph which has a \Gpart{}
and let ${\cal F}$ be a spanning nebula of $H$ (by
Proposition~\ref{acyclicgoodpart}). Let $D$ be any digraph such that
$SC(D)=H$. We prove by induction on the number of stars in $\cal N$
that $D$ has a \Gpart{}.

Suppose first that $\cal N$ consists of one star. Since a digraph has
a \Gpart{} if and only if its converse (obtained by reversing all
arcs) has one, we may assume w.l.o.g. that $\cal N$ consists of an
out-star $S^+_r$ with root $r$ and leaves $s_1,s_2,\ldots{},s_k$. Let
$R,S_1,\ldots{},S_k$ be the strong components of $D$ that correspond
to these vertices. Fix an arc $uv$ such that $u\in R,v\in S_1$. As all
of the digraphs $R,S_1,S_2,\ldots{},S_k$ are strong, they all have an
out-branching rooted at any prescribed vertex. In particular this
implies that $D'=D-(S_1-v)$ has an out-branching $B'$ rooted at
$u$. Since its root $u$ is adjacent to one of its leaves $v$, the
out-branching $B'$ is winning. Hence, by
Proposition~\ref{prop:winning}, $B'$ has a spanning out-galaxy, and so
$D'$ has a \Gpart{} $(V'_1, V'_2)$. Observe that necessarily $v\in
V'_2$ because it is a sink in $D'$.  Let $B''$ be an in-branching of
$S_1$ rooted at $v$.  If $B''$ is losing, then $S_1-v$ has a \Gpart{}
$(V''_1, V''_2)$, and $(V'_1\cup V''_1, V'_2\cup V''_2)$ is a \Gpart{}
of $D$.  If $B''$ is winning, then $B''$ has a \Gpart{} $(V''_1,
V''_2)$. In addition $v\in V''_2$, because it is a sink in $B''$.
Thus $(V'_1\cup V''_1, V'_2\cup V''_2)$ is a \Gpart{} of $D$.

Assume now that $\cal N$ has more than one star. Let $S$ be such a
star (out- or in-). Then it follows from the proof above that the
subdigraph of $D$ induced by the vertices of those strong components
that are contracted into $S$ in $SC(D)$ has a \Gpart{}. Now that
partition can be combined with any \Gpart{} of the digraph induced by
the remaining strong components, the existence of which follows by
induction. This completes the proof of (a).

\bigskip

We proceed to prove (b). Let $H$ be a connected acyclic digraph on at
least two vertices which has no \Gpart{}.  We construct a maximal
induced subdigraph $H'$ from $H$ which has a \Gpart{} as follows.  Let
$S$ contain all sinks in $H$ and let $H'$ be induced by $S \cup
N^-(S)$. Clearly $H'$ has a \Gpart{} (by letting the vertices in $S$
have colour $2$ and the vertices in $N^-(S)$ have colour $1$).  Now
repeatedly add vertices or a set of vertices to $H'$ such that $H'$
has a \Gpart{}. When no more vertices can be added we have our desired
$H'$.  Let $X$ be the set of vertices not in $H'$ and $X'=N^+(X)$. We
fix a \Gcol{} $c'$ of $H'$.

%\SBr{ WE DONT USE THE FOLLOWING RED LINES, REMOVE?\\
%Then $X$ is a
%stable set as otherwise we could add two vertices (and the arc
%between them) to $H'$.  Also for every \Gpart{} of $H'$ every vertex
%of $X'=N^+(X)$ must receive colour 1 since otherwise we could add a
%vertex from $X$ to $H'$, contradicting the maximality of $H'$.  Note
%also that every vertex $y\in X'$ is the unique in-neighbour with colour
%1 of some vertex in $H'-X'$ since otherwise we could recolour $y$ by 2
%and add an in-neighbour $x\in X$ of $y$ to $H'$. Fix a 2-colouring
%$c'$ of $H'$ which induces a \Gpart{} of $H'$.}

Now let $\cal F$ be an instance of 3-SAT with variables
$x_1,x_2,\ldots{},x_n$ and clauses $C_1,C_2,\ldots{},C_m$. We may
assume that $\cal F$ cannot be satisfied by setting all variables true
or all variables false. Form the digraph $W=W(\cal F)$ as follows: the
vertex set of $W$ is
$a,b,c_1,c_2,\ldots{},c_m,v_1,v_2,\ldots{},v_n$. The arc set consists
of all arcs from $a$ to $\{c_1,c_2,\ldots{},c_m\}$, all arcs from
$\{c_1,c_2,\ldots{},c_m\}$ to $b$, all arcs from $b$ to
$\{v_1,v_2,\ldots{},v_n\}$, all arcs from $\{v_1,v_2,\ldots{},v_n\}$
to $a$ and the following arcs between $\{c_1,c_2,\ldots{},c_m\}$ and
$\{v_1,v_2,\ldots{},v_n\}$: For each $j\in [m]$, and $i\in [n]$, if $C_j$
contains the literal $x_i$ we add the arc $c_jv_i$ to $W$, and if $C_j$ contains the literal $\bar{x}_i$ we add the arc $v_ic_j$ to $W$. 

\begin{claim}\label{claim4}
The digraph $W$ has a \Gcol{} $c$ where $c(a)=2$ and $c(b)=1$
if and only if $\cal F$ is satisfiable.
\end{claim}

\noindent{}{\em Proof of Claim~\ref{claim4}.} 
Assume first that the digraph $W$ has a \Gcol{} $c$ where $c(a)=2$ and $c(b)=1$.
Let $\phi$ be the truth assignment defined by $\phi(x_i)=true$ if $c(v_i)=2$ and $\phi(x_i)=false$ otherwise. 
We claim that $\phi$ satisfies $\cal F$: For
each $j\in [m]$ consider the vertex $c_j$. 
If $c(c_j) =1$, then $c_j$ has an out-neighbour
$v_q$ coloured $2$, because $c$ is a \Gcol{}.  But by construction $x_q\in C_j$, and $\phi(x_q)=true$ by definition.
Hence the clause $C_j$ is satisfied.
 Similarly, if $c(c_j) = 2$, then it has an in-neighbour  $v_p$ coloured $1$, and $\bar{x}_p \in C_j$ and $\phi(x_q)=false$.
 So the clause $C_j$ is satisfied.

 Conversely, given a truth assignment $\phi$ which satisfies ${\cal F}$, we start by
colouring $v_i$, $i\in [n]$ by 2 if $x_i$ is true and 1
otherwise. Since $\phi$ satisfies all clauses it is easy to check that we
can extend this colouring to all vertices of
$\{c_1,c_2,\ldots{},c_m\}$. As $\phi$ sets at least one variable true
and at least one false (by our assumption on $\cal F$), we can finish
the colouring by colouring $a$ by colour 2 and $b$ by colour 1. This gives the desired \Gcol{}  and
completes the proof of the claim. ~\hfill $\Diamond$

\medskip

We will now show how to form a digraph in ${\cal D}^c(H)$ which has a
\Gpart{} if and only if $\cal F$ is satisfiable. Fix a vertex $x\in X$
and an out-neighbour $y\in X'$ of $x$.  Construct the digraph $D$ from
$H$ and $W$ as follows: For every vertex of $u\in X\setminus \{x\}$,
we add three new (private) vertices $u_1,u_2,u_3$ and the arcs of the
$4$-cycle $(u_1,u_2,u_3,u,u_1)$.  Replace the vertex $y$ by a copy of
$W$ where we let every arc into $y$ in $H$ enter the vertex $a$
(eg. $xy$ becomes $xa$) and let every arc out of $y$ be incident with
$b$.

Suppose first that $\cal F$ is satisfiable. By Claim~\ref{claim4}, there exists a \Gcol{} $c$ of $W$ with $c(a)=2,c(b)=1$.  This can easily be extended
to a \Gcol{} of $D$ by letting $c(x)=1$, colouring each of the
private 4-cycles $(u_1,u_2, u_3,u,u_1)$ as
$c(u_1)=c(u_3)=1,c(u)=c(u_2)=2$ and extending this colouring to the
remaining vertices of $H'-y$ using $c'$ above.

Suppose now that $D$ has a \Gpart{} and let $c^*$ be the associated \Gcol.  By the claim above, it suffices to prove
that we must have $c^*(a)=2$ and $c^*(b)=1$.

For the sake of contradiction assume that $c^*(b)=2$.  In this case if
we restrict $c^*$ to $V(H'-y)$ and assign $c^*(x)=1$ and $c^*(y)=2$ we
get a \Gcol\ of $H' \cup \{x\}$ contradicting the fact that
$H'$ was maximal. Therefore $c^*(b)=1$. Now, for the sake of
contradiction assume that $c^*(a)=1$.  In this case if we restrict
$c^*$ to $V(H'-y) \cup \{x\}$ and assign $c^*(y)=1$ we get a \Gcol\ of $H' \cup \{x\}$ contradicting the fact that $H'$ was
maximal.  Therefore $c^*(a)=2$ and $c^*(b)=1$ and the proof is
complete.
 \end{proof}

\section{$(\delta^+\geq k_1,\delta^-\geq k_2)$-bipartite-partitions when $k_1+k_2\geq 3$}\label{sec:5}
%%%%%%%%%%%%%%%%%%%%%%%%%%%%%%%%%%%%%%%%%%%%%%

\begin{theorem} \label{thmk1}
Let $k_1 \geq 2$ be an integer.
It is ${\cal NP}$-complete to decide whether a given strong digraph
$D$ has a $(\delta^+\geq k_1,\delta^-\geq 1)$-bipartite-partition.
\end{theorem}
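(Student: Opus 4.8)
The plan is to show membership in $\mathcal{NP}$ (a $2$-partition is a polynomial certificate) and then to reduce 3-SAT to the problem, producing a \emph{strong} instance. Throughout I rely on two elementary forcing observations valid for any $(\delta^+\geq k_1,\delta^-\geq 1)$-bipartite-partition $(V_1,V_2)$: first, every vertex of out-degree at most $k_1-1$ must lie in $V_2$, since it cannot have $k_1$ out-neighbours in $V_2$; second, if every in-neighbour of a vertex $v$ lies in $V_2$, then $v\in V_1$, since a vertex of $V_2$ needs an in-neighbour in $V_1$. I will design gadgets so that these two rules pin down the status of all auxiliary vertices and leave only the literal vertices free.

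Given a 3-SAT formula $\mathcal F$ with variables $v_1,\dots,v_n$ and clauses $C_1,\dots,C_m$, I build a digraph $D$ as follows. The backbone is a directed ``spine'' cycle $q_1\to g_1\to q_2\to g_2\to\cdots\to q_n\to g_n\to q_1$. Each $g_i$ will have out-degree $1$, hence lie in $V_2$ by the first rule; consequently the spine in-arc of each $q_i$ comes from a $V_2$-vertex, and I shall route all other in-arcs of the $q_i$ from forced-$V_2$ vertices too, so that the second rule forces every $q_i\in V_1$. For each variable $v_i$ I add literal vertices $x_i,\bar x_i$, the arcs $q_i\to x_i,\ q_i\to \bar x_i$, and $k_1-2$ private ``padding'' vertices $f_{i,1},\dots,f_{i,k_1-2}$ with arcs $q_i\to f_{i,\ell}$ and $f_{i,\ell}\to q_i$. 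Then the out-neighbours of $q_i$ that are forced into $V_2$ (namely $g_i$ and the $f_{i,\ell}$) number exactly $k_1-1$; since $q_i\in V_1$ needs $k_1$ out-neighbours in $V_2$, at least one of $x_i,\bar x_i$ must be in $V_2$, i.e. \emph{at most one} literal of each variable is in $V_1$. Dually I add a vertex $w_i$ of out-degree $1$ (hence in $V_2$) whose only in-neighbours are $x_i,\bar x_i$; as $w_i$ needs an in-neighbour in $V_1$, \emph{at least one} of $x_i,\bar x_i$ is in $V_1$. Together these force \emph{exactly one} of $x_i,\bar x_i$ into $V_1$, yielding a consistent truth value $\phi(v_i)=\mathrm{true}\iff x_i\in V_1$.

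To make the literal constraint satisfiable and to encode clauses, I give each literal $k_1-1$ further private out-neighbours $G_i$ (forced into $V_2$), so that a literal placed in $V_1$ already has $k_1$ out-neighbours in $V_2$ (namely $w_i$ and the vertices of $G_i$), independently of the clauses. Finally, for each clause $C_j$ I add a vertex $c_j$ of out-degree $1$ (hence in $V_2$) whose in-neighbours are exactly the literal vertices occurring in $C_j$; since $c_j$ needs an in-neighbour in $V_1$, at least one literal of $C_j$ lies in $V_1$, i.e. is true. To close up strong connectivity I send every ``sink'' back to the spine: $w_i\to q_i$, each vertex of $G_i\to q_i$, each $f_{i,\ell}\to q_i$, and each $c_j\to q_1$. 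All of these tails are forced-$V_2$ vertices of out-degree $1$, so they neither violate the out-degree budget $\le k_1-1$ nor spoil the in-neighbour forcing of the $q_i$.

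With $D$ in hand the equivalence is routine: a satisfying assignment colours each $q_i\in V_1$, the true literal of every variable in $V_1$, and all remaining vertices in $V_2$, and one checks each constraint; conversely any valid partition forces the auxiliary vertices as above, makes exactly one literal per variable true, and satisfies every clause. For strong connectivity, the spine is a single directed cycle reaching (through the $q_i$) every literal, padding, $w_i$, $G_i$ and $c_j$, while the return arcs send each of those vertices back onto the spine; hence $D$ is strong. The main obstacle, and the only genuinely delicate point, is exactly this last step: in a \emph{strong} instance one cannot use sources to pin vertices into $V_1$, and the forced-$V_2$ vertices must keep out-degree at most $k_1-1$ (equal to $1$ when $k_1=2$), so the forcing-and-return structure has to be threaded so that every $q_i$ still receives in-arcs only from $V_2$-vertices while the whole digraph remains strongly connected. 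The spine, together with routing each sink back into the $q_i$'s by a single out-arc, is what resolves this tension.
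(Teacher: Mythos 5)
Your overall strategy is the same as the paper's: reduce from 3-SAT, force auxiliary vertices into $V_2$ via the out-degree rule (out-degree $\le k_1-1$ forces $V_2$) and force a hub vertex per variable into $V_1$ because all its in-neighbours are forced into $V_2$; then use the hub's out-degree budget to force at most one literal into $V_1$, a $V_2$-forced sink with in-neighbours $\{x_i,\bar x_i\}$ to force at least one literal into $V_1$, and out-degree-$1$ clause vertices to encode satisfaction, with all sinks returned to a spanning cycle to keep the instance strong. The paper's gadget $Q$ plays exactly the role of your $q_i$ (their $y_i$), $w_i$/$f_{i,\ell}$ (their $W_i$) and padding sets (their $Z_i$), and the cycle through the $Z_i$ is your spine.

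There is, however, one concrete flaw as written: you give \emph{each literal} $k_1-1$ \emph{private} out-neighbours $G$, whose only in-neighbour is that literal and whose only out-arc goes back to $q_i$. Each such vertex has out-degree $1$, so it is forced into $V_2$; but its unique in-neighbour is the literal, and in every candidate partition exactly one of $x_i,\bar x_i$ lies in $V_2$. The padding vertices attached to the \emph{false} literal therefore sit in $V_2$ with no in-neighbour in $V_1$, violating the $\delta^-\ge 1$ condition. As described, your digraph admits no valid partition at all, regardless of whether $\mathcal F$ is satisfiable, so the reduction collapses. The fix is small and is precisely what the paper does with its set $Z_i$: make the $k_1-1$ padding vertices \emph{shared} between $x_i$ and $\bar x_i$ (arcs from both literals into each padding vertex), so that each padding vertex always has the unique $V_1$-literal as an in-neighbour, while a literal in $V_1$ still collects $k_1-1$ out-neighbours there plus $w_i$. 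With that change (and your notation $G_i$ suggests you may even have intended it) the argument goes through; a side benefit of your $w_i$-plus-padding count is that, unlike the paper, you do not need to assume every variable occurs both positively and negatively in $\mathcal F$.
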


\begin{proof}
Reduction from 3-SAT.

Let $Q$ be the digraph whose vertex set is the disjoint union of two sets $W,Z$ of size $k_1-1$, and $\{v,\bar{v},y,z\}$ and with arc set
$\{yv,y\bar{v}\}  \cup \bigcup_{w\in W} \{wy, yw\}  \cup \bigcup_{z\in Z} \{vz,\bar{v}z,zy\}$. See Figure~\ref{figureQ}.

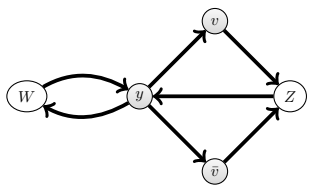
\begin{figure}[h!]
\begin{center}
\tikzstyle{vertexX}=[circle,draw, top color=gray!5, bottom color=gray!30, minimum size=16pt, scale=0.6, inner sep=0.5pt]
\tikzstyle{vertexZ}=[circle,draw, top color=white!5, bottom color=white!30, minimum size=4pt, scale=0.6, inner sep=0.5pt]
\tikzstyle{vertexBIG}=[ellipse, draw, scale=0.6, inner sep=3.5pt]
\begin{tikzpicture}[scale=0.5]

 %\draw (1,4) node {$H$};
 \node (w) at (1.0,3.0) [vertexBIG] {$W$};
 \node (y) at (4.0,3.0) [vertexX] {$y$};
 \node (v1) at (6.0,5.0) [vertexX] {$v$};
 \node (v2) at (6.0,1.0) [vertexX] {$\bar{v}$};
 \node (z) at (8.0,3.0) [vertexBIG] {$Z$};

    \draw [->, line width=0.05cm] (y) -- (v1);
    \draw [->, line width=0.05cm] (y) -- (v2);
    \draw [->, line width=0.05cm] (v1) -- (z);
    \draw [->, line width=0.05cm] (v2) -- (z);
    \draw [->, line width=0.05cm] (z) -- (y);
    \draw [->, line width=0.05cm] (w) to [out=30, in=150] (y);
    \draw [->, line width=0.05cm] (y) to [out=210, in=330] (w);

\end{tikzpicture}
\end{center}
\caption{The gadget $Q$.}
\label{figureQ}
\end{figure}

Let $\cal F$ be an instance of 3-SAT with variables
$x_1,x_2,\ldots{},x_n$ and clauses $C_1,C_2,\ldots{},C_m$. By adding
extra clauses not affecting the truth value of the original ones if
necessary, we may assume that every variable $x_i$ appears in some
clause as the literal $x_i$ and in another clause as the literal
$\bar{x}_i$.  Form the digraph $D=D(\cal F)$ as follows: take $n$
disjoint copies $Q_1,Q_2,\ldots{},Q_n$ of $Q$ and denote the sets corresponding to $W,Z$  in $Q_i$ by $W_i, Z_i$ respectively and the vertices
of $Q_i$ corresponding to  $v,\bar{v},y$ by $v_i,\bar{v}_i,y_i$ respectively. The vertices $v_i$ and
  $\bar{v_i}$ will correspond to the variable $x_i$: $v_i$ to the
  literal $x_i$ and $\bar{v}_i$ to the literal $\bar{x}_i$. Add $m$
vertices $c_1,c_2,\ldots{},c_m$, where $c_j$ corresponds to clause
$C_j$, $j\in [m]$. Add the arcs of a directed cycle with vertex set $\bigcup_{i\in[n]} Z_i$,
and the arc $c_jy_1$ for all $j\in [m]$. Finally,
for each $j\in [m]$ we add three arcs from the vertices corresponding
to the literals of $C_j$ to the vertices $c_j$. E.g. if
$C_j=(\bar{x}_1\vee \bar{x}_2\vee x_3)$ then we add the arcs
$\bar{v}_1c_j,\bar{v}_2c_j,v_3c_j$. It is easy to check that $D$ is
strong.

Assume that $(V_1,V_2)$ is a  $(\delta^+\geq k_1,\delta^-\geq 1)$-bipartite-partition  of $D$.
 Since the out-degree of each $c_j$, $j\in [m]$ is $1$,
these vertices must belong to $V_2$. Similarly, $W_i\subset
V_2$, for all $i\in [n]$. This  implies that $y_i\in V_1$
and thus $Z_i\subset V_2$ for every $i\in [n]$.
Those two facts imply that exactly one of $v_i,\bar{v}_i$
belong to $V_1$ and the other belongs to $V_2$.  
Let $\phi$ be the truth assignment defined by $\phi(x_i)=true$ if $v_i\in V_1$ and $\phi(x_i)=false$ otherwise.
One easily sees that $\phi$ satisfies $\cal F$ as every vertex $c_j$ has an in-neighbour in $V_1$ which is a vertex corresponding to a 
literal of $C_j$ which is then assigned $true$ by $\phi$.

Reciprocally, assume that there is a truth assignment $\phi$ satisfying $\cal F$.
 Let $(V_1,V_2)$ be the $2$-partition of $D$ defined by 
\begin{align*}
 V_1 &=\{y_i \mid i\in [n]\}\cup \{v_i \mid \phi(x_i)=true\} \cup  \{\bar{v}_i \mid \phi(x_i)=false\}, \mbox{and}  \\ 
 V_2 & =\{c_j \mid j\in [m]\} \cup \bigcup_{i=1}^n (W_i\cup Z_i) \cup \{v_i \mid \phi(x_i)=false\} \cup  \{\bar{v}_i \mid \phi(x_i)=true\} .
 \end{align*}
 One easily checks that $(V_1,V_2)$ is a  $(\delta^+\geq k_1,\delta^-\geq 1)$-bipartite-partition  of $D$.
In particular, since $x_i$ and $\bar{x}_i$ belong to at least one clause, $v_i$ and
  $\bar{v}_i$ have each at least one out-neighbour in $\{c_1,\dots
  ,c_m\}$, which is a subset of $V_2$.
  \end{proof}

\begin{theorem} \label{thm22}
It is ${\cal NP}$-complete to decide whether a given strong digraph
$D$ has a $(\delta^+\geq 2,\delta^-\geq 2)$-bipartite-partition.
\end{theorem}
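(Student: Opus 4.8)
The plan is to prove membership in ${\cal NP}$ (a candidate $2$-partition is a certificate that can be checked in polynomial time by verifying the out- and in-degree of each vertex in $B_D(V_1,V_2)$), and then to establish ${\cal NP}$-hardness by a reduction from 3-SAT that reuses and strengthens the gadget $Q$ from the proof of Theorem~\ref{thmk1}. As there, I would start from a formula $\cal F$ with variables $x_1,\dots,x_n$ and clauses $C_1,\dots,C_m$, assuming without loss of generality that every variable occurs both as a positive and as a negative literal. For each variable $x_i$ I take a copy $Q_i$ of $Q$ specialised to $k_1=2$, so that $Q_i$ contains the literal-vertices $v_i,\bar v_i$, the vertex $y_i$, and the singleton sets $W_i=\{w_i\}$ and $Z_i=\{z_i\}$; I add a clause vertex $c_j$ for each clause, the arc $c_jy_1$, the three arcs from the literal-vertices of $C_j$ into $c_j$, and a directed cycle through $\bigcup_i Z_i$, exactly as in Theorem~\ref{thmk1}.

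The new ingredient is forced by the requirement $\delta^-\ge 2$: every vertex forced into $V_2$ must now receive \emph{two} in-neighbours in $V_1$, whereas one sufficed in the $\delta^-\ge 1$ setting. Concretely, in $Q_i$ the vertices $w_i$, $z_i$ and whichever of $v_i,\bar v_i$ lands in $V_2$ each get only one $V_1$-in-neighbour from the old construction, and a clause vertex $c_j$ gets one $V_1$-in-neighbour precisely when some literal of $C_j$ is true. To repair this uniformly I would add a single \emph{feeder} vertex $s$, with arcs from $s$ to every $w_i$, every $z_i$, both $v_i$ and $\bar v_i$, and every $c_j$, and give $s$ in-degree exactly $1$ (a single arc, say $y_1 s$, added only to keep $D$ strong). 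Since a vertex of in-degree $1$ cannot satisfy $\delta^-\ge 2$, the vertex $s$ is forced into $V_1$; having many out-neighbours it trivially meets $\delta^+\ge 2$, and it supplies the missing second $V_1$-in-neighbour to each forced $V_2$-vertex.

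With $s$ in place the forcing argument of Theorem~\ref{thmk1} should go through essentially unchanged: $w_i$ is forced into $V_2$ by its out-degree, hence $y_i\in V_1$; the out-degree-$2$ constraint on $y_i$ forces at least one of $v_i,\bar v_i$ into $V_2$; the out-degree-$2$ constraint on $z_i$ (whose only non-$V_1$ out-neighbour is its cycle successor) forces $z_i\in V_2$; and the in-degree-$2$ constraint on $z_i$, one of whose two required $V_1$-in-neighbours is now automatically $s$, forces at least one of $v_i,\bar v_i$ into $V_1$. Together these give exactly one of $v_i,\bar v_i$ in $V_1$, recovering the truth assignment $\phi(x_i)=true \iff v_i\in V_1$. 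Finally each $c_j\in V_2$ has $s$ as one $V_1$-in-neighbour, so its second required $V_1$-in-neighbour exists if and only if some literal of $C_j$ is true; hence $D$ has a $(\delta^+\ge 2,\delta^-\ge 2)$-bipartite-partition if and only if $\cal F$ is satisfiable. I would close by checking the easy converse direction (extending a satisfying assignment to a partition) and confirming that $D$ is strong.

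I expect the main obstacle to be exactly this interplay between the two degree conditions: unlike the $\delta^-\ge 1$ case, every $V_2$-vertex now needs a guaranteed second in-neighbour in $V_1$, and the delicate point is to inject these in-neighbours (through $s$) without destroying the rigidity that forces $v_i$ and $\bar v_i$ to opposite parts, while simultaneously keeping $s$ itself pinned to $V_1$ and preserving strong connectivity of $D$.
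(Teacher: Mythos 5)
Your reduction is correct, and it is worth spelling out how it differs from the paper's. Both proofs reduce from 3-SAT using the same skeleton as Theorem~\ref{thmk1} (variable gadgets derived from $Q$, clause vertices of out-degree $1$ forced into $V_2$, a cycle through the $Z$-vertices, and the arcs $c_jy_1$); the difference lies in how each forced $V_2$-vertex acquires its \emph{second} $V_1$-in-neighbour. The paper does this locally, replacing $Q$ by a gadget $Q'$ in which the vertex $y$ is duplicated into a pair $y,y'$ that are jointly forced into $V_1$ (via a $w$ of out-degree $1$ whose only in-neighbours are $y$ and $y'$), so that $w_i$, $z_i$ and the false literal vertex each see two $V_1$-in-neighbours inside their own gadget. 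You do it globally, with a single feeder vertex $s$ of in-degree $1$ that is pinned to $V_1$ and dominates every vertex that could land in $V_2$. Your mechanism has a concrete advantage at the clause vertices: since $s$ dominates each $c_j$, the requirement $d^-_{B}(c_j)\geq 2$ reduces to ``at least one literal of $C_j$ is true,'' which is exactly 3-SAT. In the paper's construction as written, the only in-neighbours of $c_j$ are its three literal vertices, so $c_j\in V_2$ needs \emph{two} true literals, and the converse direction (building a partition from a satisfying assignment with a single true literal in some clause) does not go through without an additional arc into $c_j$ from a vertex forced into $V_1$ (e.g.\ $y'_1c_j$) --- precisely the repair your feeder vertex performs uniformly. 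Your verification of the forcing chain ($w_i\in V_2$, hence $y_i\in V_1$, hence one of $v_i,\bar v_i$ in $V_2$; $z_i\in V_2$ by its out-degree, hence one of $v_i,\bar v_i$ in $V_1$), of the converse direction, and of strong connectivity is sound, given the standing assumption that every literal occurs in some clause.
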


\begin{proof}
The proof is a reduction from 3-SAT, which is very similar to the one of Theorem~\ref{thmk1}.

Let $Q'$ be the digraph with vertex set $\{w, y, y', v,\bar{v},z\}$ and with arc set
$$\{y'y, yw,y'w,wy', yv,y\bar{v}, y'v,y'\bar{v}, y'z, zy, vz,\bar{v}z\}.$$ 
See Figure~\ref{figureQ'}.

\begin{figure}[h!]
\begin{center}
\tikzstyle{vertexX}=[circle,draw, top color=gray!5, bottom color=gray!30, minimum size=16pt, scale=0.6, inner sep=0.5pt]
\tikzstyle{vertexZ}=[circle,draw, top color=white!5, bottom color=white!30, minimum size=4pt, scale=0.6, inner sep=0.5pt]
\tikzstyle{vertexBIG}=[ellipse, draw, scale=0.6, inner sep=3.5pt]
\begin{tikzpicture}[scale=0.5]

 %\draw (1,4) node {$H$};
  \node (w) at (1.0,3.0) [vertexX] {$w$};
 \node (y) at (4.0,2.0) [vertexX] {$y$};
  \node (y') at (4.0,4.0) [vertexX] {$y'$};
 \node (v1) at (7.0,5.5) [vertexX] {$v$};
 \node (v2) at (7.0,0.5) [vertexX] {$\bar{v}$};
 \node (z) at (9.0,3.0) [vertexX] {$z$};

    \draw [->, line width=0.05cm] (y) -- (v1);
    \draw [->, line width=0.05cm] (y) -- (v2);
    \draw [->, line width=0.05cm] (y') -- (v1);
    \draw [->, line width=0.05cm] (y') -- (v2);
    \draw [->, line width=0.05cm] (y') -- (z);
    \draw [->, line width=0.05cm] (v1) -- (z);
    \draw [->, line width=0.05cm] (v2) -- (z);
    \draw [->, line width=0.05cm] (z) -- (y);
    \draw [->, line width=0.05cm] (y') -- (y);
    \draw [->, line width=0.05cm] (y) -- (w);
    \draw [->, line width=0.05cm] (y') -- (w);
    \draw [->, line width=0.05cm] (w) to [out=60, in=150] (y');

\end{tikzpicture}
\end{center}
\caption{The gadget $Q'$.}
\label{figureQ'}
\end{figure}
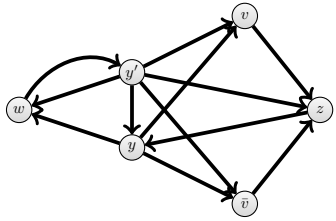

Let $\cal F$ be an instance of 3-SAT with variables
$x_1,x_2,\ldots{},x_n$ and clauses $C_1,C_2,\ldots{},C_m$. By adding
extra clauses not affecting the truth value of the original ones if
necessary, we may assume that every variable $x_i$ appears in some
clause as the literal $x_i$ and in another clause as the literal
$\bar{x}_i$.  Form the digraph $D=D(\cal F)$ as follows: take $n$
disjoint copies $Q'_1,Q'_2,\ldots{},Q'_n$ of $Q'$ and denote the vertices of $Q_i$
corresponding to  $w,y,y',v,\bar{v},z$  by $w_i, y_i, y'_i,v_i,\bar{v}_i,z_i$ respectively. The vertices $v_i$ and
  $\bar{v_i}$ will correspond to the variable $x_i$: $v_i$ to the
  literal $x_i$ and $\bar{v}_i$ to the literal $\bar{x}_i$. Add $m$
vertices $c_1,c_2,\ldots{},c_m$, where $c_j$ corresponds to clause
$C_j$, $j\in [m]$. Add the arcs of the directed cycle $z_1z_2 \ldots z_nz_1$,
and the arc $c_jy_1$ for all $j\in [m]$. Finally,
for each $j\in [m]$ we add three arcs from the vertices corresponding
to the literals of $C_j$ to the vertices $c_j$. E.g. if
$C_j=(\bar{x}_1\vee \bar{x}_2\vee x_3)$ then we add the arcs
$\bar{v}_1c_j,\bar{v}_2c_j,v_3c_j$. It is easy to check that $D$ is
strong.

Assume that $(V_1,V_2)$ is a  $(\delta^+\geq 2,\delta^-\geq 2)$-bipartite-partition  of $D$.
 Since the out-degree of each $c_j$, $j\in [m]$ is $1$,
these vertices must belong to $V_2$. Similarly, $w_i\in
V_2$, for all $i\in [n]$. This  implies that $\{y_i, y'_i\}\subset V_1$
and thus $z_i\in V_2$ for every $i\in [n]$.
Those two facts imply that exactly one of $v_i,\bar{v}_i$
belong to $V_1$ and the other belongs to $V_2$.  
Let $\phi$ be the truth assignment defined by $\phi(x_i)=true$ if $v_i\in V_1$ and $\phi(x_i)=false$ otherwise.
One easily sees that $\phi$ satisfies $\cal F$ as every vertex $c_j$ has an in-neighbour in $V_1$ which is a vertex corresponding to a 
literal of $C_j$ which is then assigned $true$ by $\phi$.

Reciprocally, assume that there is a truth assignment $\phi$ satisfying $\cal F$.
 Let $(V_1,V_2)$ be the $2$-partition of $D$ defined by 
\begin{align*}
 V_1 &=\bigcup_{i=1}^n \{y_i, y'_i\} \cup \{v_i \mid \phi(x_i)=true\} \cup  \{\bar{v}_i \mid \phi(x_i)=false\}, \mbox{and}  \\ 
 V_2 & =\{c_j \mid j\in [m]\} \cup \bigcup_{i=1}^n \{w_i, z_i\} \cup \{v_i \mid \phi(x_i)=false\} \cup  \{\bar{v}_i \mid \phi(x_i)=true\} .
 \end{align*}
 One easily checks that $(V_1,V_2)$ is a  $(\delta^+\geq 2,\delta^-\geq 2)$-bipartite-partition  of $D$.
In particular, since $x_i$ and $\bar{x}_i$ belong to at least one clause, $v_i$ and
  $\bar{v}_i$ have each at least one out-neighbour in $\{c_1,\dots
  ,c_m\}$, which is a subset of $V_2$.
  \end{proof}

\begin{corollary}
Let $k_1,k_2 \geq 1$ be positive integers such that $k_1+k_2\geq 3$.  It is
${\cal NP}$-complete to decide whether a given strong digraph $D$ has
a $(\delta^+\geq k_1,\delta^-\geq k_2)$-bipartite-partition.
\end{corollary}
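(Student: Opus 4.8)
The plan is to reduce the general case $k_1+k_2\geq 3$ (with $k_1,k_2\geq 1$) to the two hard cases already established: Theorem~\ref{thmk1}, which handles $(\delta^+\geq k_1,\delta^-\geq 1)$ for every $k_1\geq 2$, and Theorem~\ref{thm22}, which handles $(\delta^+\geq 2,\delta^-\geq 2)$. Since the two degree conditions are symmetric under reversing all arcs (a digraph $D$ has a $(\delta^+\geq k_1,\delta^-\geq k_2)$-bipartite-partition if and only if its converse has a $(\delta^+\geq k_2,\delta^-\geq k_1)$-bipartite-partition), it suffices to treat the case $k_1\geq k_2$. Under $k_1+k_2\geq 3$ and $k_1\geq k_2\geq 1$, either $k_2=1$ and $k_1\geq 2$, which is exactly Theorem~\ref{thmk1}, or $k_2\geq 2$, in which case $k_1\geq 2$ as well. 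So the only genuinely new work is to bootstrap from the base cases $(\delta^+\geq k_1,\delta^-\geq 1)$ and $(\delta^+\geq 2,\delta^-\geq 2)$ up to arbitrary $(\delta^+\geq k_1,\delta^-\geq k_2)$ with both parameters at least $2$.

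First I would dispose of the easy reductions. For $k_2=1$, $k_1\geq 2$, the result is immediate from Theorem~\ref{thmk1}, and for the reversed situation $k_1=1$, $k_2\geq 2$ we apply the converse symmetry. This leaves the main task: showing hardness of $(\delta^+\geq k_1,\delta^-\geq k_2)$-bipartite-partition on strong digraphs whenever $k_1,k_2\geq 2$. Here I would use a gadget-padding argument analogous to the one used in the proof of the undirected Theorem (the reduction from $k=3$ to $k\geq 4$ via cliques and forced vertices), but adapted to the directed in/out setting. Concretely, starting from a strong digraph $D$ known to be a hard instance for $(\delta^+\geq 2,\delta^-\geq 2)$ (Theorem~\ref{thm22}), I would attach, to each vertex, auxiliary structures whose colours are forced by degree constraints, so that every vertex placed in $V_1$ is compelled to acquire $k_1-2$ additional out-neighbours in $V_2$ and every vertex placed in $V_2$ is compelled to acquire $k_2-2$ additional in-neighbours in $V_1$, without otherwise interfering with the partition. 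The attached gadgets must be strongly connected into $D$ so that strongness is preserved, and they must have the property that in any valid $(\delta^+\geq k_1,\delta^-\geq k_2)$-colouring their internal vertices receive forced colours (forced by having exactly the threshold in/out-degree), leaving only the original vertices of $D$ free; one then checks that the augmented digraph admits the desired partition if and only if $D$ admits a $(\delta^+\geq 2,\delta^-\geq 2)$-partition.

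The main obstacle I anticipate is designing the padding gadget so that it simultaneously (i) forces the colours of all its own vertices, (ii) supplies the extra $k_1-2$ out-neighbours to each $V_1$-vertex and $k_2-2$ in-neighbours to each $V_2$-vertex of $D$ regardless of which colour that vertex ultimately gets, and (iii) keeps the whole digraph strongly connected. The cleanest route is probably to give two separate bundles per vertex---an "out-padding" bundle of $k_1-2$ forced-$V_2$ vertices that are out-dominated by any $V_1$-coloured vertex, and an "in-padding" bundle of $k_2-2$ forced-$V_1$ vertices dominating any $V_2$-coloured vertex---wired in a way that each bundle vertex already meets its own degree requirement internally, so it imposes no back-constraints. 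Verifying that these bundles do not accidentally create alternative satisfying colourings (soundness of the reduction) is the delicate point and is where I would spend most of the care; the strong-connectivity and polynomial-size checks are routine.
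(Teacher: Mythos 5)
Your high-level skeleton (handle $k_2=1$ via Theorem~\ref{thmk1}, use the converse symmetry to assume $k_1\geq k_2$, and bootstrap from Theorem~\ref{thm22} when both parameters are at least $2$) agrees with the paper's, but the bootstrapping step you propose is both different from and much heavier than what the paper actually does. The paper does not pad every vertex. It inducts on $k_1+k_2$: to reduce $(\delta^+\geq k_1-1,\delta^-\geq k_2-1)$ to $(\delta^+\geq k_1,\delta^-\geq k_2)$ it adds just \emph{two} global vertices $x_1,x_2$ to $D$, with all arcs from $x_1$ to $V(D)$, all arcs from $V(D)$ to $x_2$, and the digon $x_1x_2,x_2x_1$. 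Then $x_2$ has out-degree $1<k_1$ and is forced into $V_2$, $x_1$ has in-degree $1<k_2$ and is forced into $V_1$, every vertex of $D$ gains exactly one guaranteed out-neighbour in $V_2$ and one guaranteed in-neighbour in $V_1$, and $D'$ is trivially strong. That single observation replaces your entire per-vertex padding scheme and makes the soundness/completeness check a one-liner.

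The genuine gap in your write-up is that the padding gadget, which is the whole technical content of your reduction, is described only by its wish-list of properties and never constructed. The difficulty is concrete: a vertex that is colour-forced into $V_2$ (out-degree $<k_1$) still needs $k_2$ in-neighbours in $V_1$, and a vertex forced into $V_1$ (in-degree $<k_2$) still needs $k_1$ out-neighbours in $V_2$; with bundles of sizes $k_1-2$ and $k_2-2$ these demands cannot be met inside the bundles themselves, so your claim that "each bundle vertex already meets its own degree requirement internally" does not hold as stated. One would need a larger shared core (at least $k_2$ forced-$V_1$ vertices all dominating at least $k_1$ forced-$V_2$ vertices, with the return arcs needed for strong connectivity kept strictly below the forcing thresholds $k_2$ and $k_1$ respectively), and then an argument that attaching $v$ to parts of this core does not unforce any core vertex. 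This can be made to work, but as written the proposal asserts the gadget's existence rather than proving it, so the reduction is incomplete; the paper's two-vertex induction is the cleaner way to close it.
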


\begin{proof}
Without loss of generality we may assume that $k_1 \geq k_2$
(otherwise swap $k_1$ and $k_2$ and reverse all arcs).

We prove the result by induction on $k_1+k_2$.
If $k_2=1$, then we have the result by Theorem~\ref{thmk1}, and if  $k_1=k_2=2$, we have the result by Theorem~\ref{thm22}.

\medskip

Assume now that $k_1+k_2\geq 5$ and $k_2\geq 2$.
We give a reduction from {\sc $(\delta^+\geq k_1-1,\delta^-\geq
  k_2-1)$-bipartite-partition} which is ${\cal NP}$-complete  by the induction hypothesis.
 Let $D$ be a
digraph. We construct $D'$ from $D$ by adding two vertices $x_1,x_2$ and all arcs from $V(D)$ to $x_2$, all arcs from
$x_1$ to $V(D)$ and the two arcs $x_1x_2, x_2x_1$.  Clearly $D'$ is strong.
  
For any $(\delta^+\geq k_1,\delta^-\geq k_2)$-bipartite-partition
$(V'_1,V'_2)$ of $D'$,  $x_2$ is in  $V'_2$ because it has t
out-degree $1$, and $x_1\in V'_1$ because it has in-degree $1$. Therefore $(V'_1\setminus  \{x_1\},
V'_2\setminus \{x_2\})$ is a $(\delta^+\geq k_1-1,\delta^-\geq
k_2-1)$-bipartite-partition of $D$.  Reciprocally, if there is
$(\delta^+\geq k_1-1,\delta^-\geq k_2-1)$-bipartite-partition $(V_1,V_2)$ of
$D$, then $(V_1\cup \{x_1\}, V_2\cup \{x_2\})$ is a $(\delta^+\geq
k_1,\delta^-\geq k_2)$-bipartite-partition of $D'$.

Hence $D'$ has a $(\delta^+\geq k_1,\delta^-\geq
k_2)$-bipartite-partition if and only if $D$ has a $(\delta^+\geq
k_1-1,\delta^-\geq k_2-1)$-bipartite-partition.
\end{proof}

\section{Strong $2$-partitions}\label{strongbip}
%%%%%%%%%%%%%%%%%%%%%%%%%%%%%%%%%%%%%%%%%

Recall that a strong $2$-partition of a digraph $D$ is a partition
$(V_1,V_2)$ such that $B_D(V_1,V_2)$ is strong.

\begin{theorem} \label{thm1}
 For every $r>0$, there exists an $r$-strong eulerian digraph $D$
 which has no strong $2$-partition (that is, $D$ has no spanning
 strong bipartite subdigraph).
\end{theorem}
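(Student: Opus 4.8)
The plan is to construct, for each $r>0$, an explicit $r$-strong eulerian digraph $D$ that provably admits no strong $2$-partition. The key structural obstruction I would exploit is the following: if $(V_1,V_2)$ is a strong $2$-partition, then $B_D(V_1,V_2)$ is a spanning strong bipartite subdigraph, so in particular \emph{every} vertex must have both an out-neighbour and an in-neighbour in the opposite part. This already rules out any vertex whose entire out-neighbourhood (or in-neighbourhood) can be forced into its own part. So my first step is to design a gadget in which some distinguished vertex $v$ is adjacent only to vertices that are themselves constrained to lie on the same side as $v$, thereby killing the possibility of $v$ having a cross-arc. The eulerian and $r$-strong requirements mean I cannot simply use sources/sinks as in Theorem~\ref{delta+-NPC}; instead I must build high-connectivity gadgets that nonetheless trap certain vertices.

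The concrete construction I would try is to take a ``central'' eulerian $r$-strong digraph and attach, for each vertex, a small eulerian gadget whose purpose is to propagate a colour constraint. A clean approach is to use a digraph built from many copies of a complete-digraph-like block: take disjoint copies $K^{(1)},\dots,K^{(t)}$ of the complete digraph $\overleftrightarrow{K_{2r}}$ (which is $(2r-1)$-strong and eulerian since every vertex has in- and out-degree $2r-1$), and link them cyclically by arcs in both directions so the whole digraph $D$ is eulerian and $r$-strong. The point of using complete blocks is that within a single copy of $\overleftrightarrow{K_m}$, the induced bipartite subdigraph between $V_1\cap V(K)$ and $V_2\cap V(K)$ is itself a complete bipartite digraph, and if the block is split unevenly the ``minority'' vertices get very few cross-arcs; if a block lies entirely on one side it contributes no cross-arcs at all and becomes a set of isolated vertices in $B_D(V_1,V_2)$, destroying strong connectivity. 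So the design goal is to force, by a counting or parity argument, that in any $2$-partition at least one block is monochromatic or at least one vertex ends up with all its neighbours on its own side.

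The cleanest obstruction to aim for is a \emph{local degree count}: arrange that each vertex $v$ has out-degree exactly $d^+(v)$ and in-degree $d^-(v)$ but with the in- and out-neighbourhoods so tightly structured (e.g. $N^+(v)=N^-(v)$ sitting inside a single clique block, or a ``diamond'' gadget where two vertices have identical closed neighbourhoods) that $v$ can have a cross-out-arc only if a specific other vertex is coloured oppositely, and symmetrically a cross-in-arc only if the \emph{same} vertex is coloured the same way --- an impossible pair of demands. Concretely I would build a gadget on vertices $\{v,a,b\}$ (plus padding to reach connectivity $r$ and eulerianness) where $N^+(v)=\{a\}$ and $N^-(v)=\{a\}$ after the padding is arranged to be internal, so $v$ needs $c(a)\neq c(v)$ to have an out-cross-arc and $c(a)\neq c(v)$ to have an in-cross-arc --- that is fine --- but then arrange two such vertices sharing $a$ with \emph{conflicting} parity demands. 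I would tune the gadget so that the system of ``opposite-colour'' constraints it imposes is unsatisfiable, exactly as an odd cycle of inequality constraints is.

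The main obstacle, and where I expect to spend the real effort, is \emph{simultaneously} achieving all three requirements: (i) $r$-strong connectivity, (ii) the eulerian property (every vertex has equal in- and out-degree), and (iii) the non-existence of a strong $2$-partition. Requirements (i) and (ii) push toward dense, symmetric, highly-connected digraphs, whereas (iii) wants rigid local neighbourhood structure that blocks cross-arcs --- and these pull in opposite directions. The delicate balance is to add enough ``padding'' arcs to boost connectivity and fix degree parity \emph{without} accidentally creating the cross-arcs that a strong $2$-partition could use. I would handle this by keeping the padding arcs \emph{internal} to monochromatic-forced clusters (so they never become cross-arcs under the forced colouring) and by verifying eulerianness blockwise. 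The final verification step is to assume a strong $2$-partition exists, run the forced-colouring propagation to derive the conflicting parity constraints (or a monochromatic block), and conclude that $B_D(V_1,V_2)$ must contain a vertex with no cross-arc on one side, contradicting strong connectivity of $B_D(V_1,V_2)$.
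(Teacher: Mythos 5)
There is a genuine gap here. You have correctly isolated the local obstruction (in a strong $2$-partition every vertex needs both a cross out-arc and a cross in-arc) and your ``odd cycle of inequality constraints'' is exactly the right seed: a directed triangle $x\to y\to z\to x$ forces $c(x)\neq c(y)\neq c(z)\neq c(x)$ and settles the case $r=1$. But none of your concrete proposals for general $r$ works, and you are missing the one idea that makes the general case go through. Your cyclically-linked copies of $\overleftrightarrow{K_{2r}}$ fail outright: nothing forces any block to be monochromatic (split every block evenly and the induced bipartite digraph of each block is a complete bipartite digraph, which is strong, and the cyclic links keep the whole thing strong), so that digraph \emph{does} admit a strong $2$-partition. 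Your ``diamond/padding'' gadget is never actually constructed, and you yourself flag the tension between rigid neighbourhoods and $r$-strong eulerian padding as the unresolved main obstacle --- which it is: as soon as you blow a must-differ constraint up to connectivity $r$, it degrades from ``these two vertices get different colours'' to ``this $r$-set is not monochromatic,'' and a single such constraint is satisfiable.

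The paper resolves this with two steps you do not have. First, the blown-up triangle $G_r(X)$: classes $X,Y_X,Z_X$ of size $r$ with all arcs $X\to Y_X\to Z_X\to X$. This is $r$-strong, eulerian, and forces $X$ not to be monochromatic (if $X\subseteq V_1$ then $Y_X\subseteq V_2$, whence $Z_X$ would have to be simultaneously in $V_1$ and in $V_2$). Note that this gadget alone is \emph{not} a counterexample for $r\geq 2$: colouring each of $X,Y_X,Z_X$ with both colours yields a strong bipartite subdigraph. The second, decisive step is a pigeonhole over overlapping gadgets: take a central set $U$ of $2r-1$ vertices (with $d^+=d^-$, e.g.\ a stable set) and attach a copy of $G_r(X)$ for \emph{every} $r$-subset $X$ of $U$. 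In any $2$-partition of $2r-1$ vertices some part contains $r$ of them, hence contains some $X$ entirely, contradicting the gadget's constraint; meanwhile the union stays $r$-strong and eulerian because each gadget is internally $r$-strong and degree-balanced. Without this global pigeonhole mechanism (or a substitute for it), your plan cannot be completed for $r\geq 2$.
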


\newcommand{\SetX}[1]{\mbox{ } $\cdots$ #1 $\cdots$ \mbox{ }}
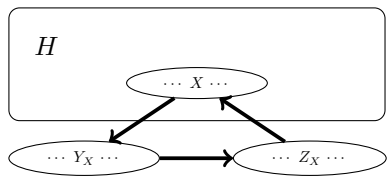
\begin{figure}[h!]
\begin{center}
\tikzstyle{vertexX}=[circle,draw, top color=gray!5, bottom color=gray!30, minimum size=16pt, scale=0.6, inner sep=0.5pt]
\tikzstyle{vertexZ}=[circle,draw, top color=white!5, bottom color=white!30, minimum size=4pt, scale=0.6, inner sep=0.5pt]
\tikzstyle{vertexBIG}=[ellipse, draw, scale=0.6, inner sep=3.5pt]
\begin{tikzpicture}[scale=0.5]

 \draw (1,4) node {$H$};
 \node (x) at (5.0,3.0) [vertexBIG] {\SetX{$X$}};
 \node (y) at (2.0,1.0) [vertexBIG] {\SetX{$Y_X$}};
 \node (z) at (8.0,1.0) [vertexBIG] {\SetX{$Z_X$}};
 \draw [rounded corners] (0.0,2.0) rectangle (10,5);
    \draw [->, line width=0.05cm] (x) -- (y);
    \draw [->, line width=0.05cm] (y) -- (z);
    \draw [->, line width=0.05cm] (z) -- (x);
\end{tikzpicture}
\end{center}
\caption{Adding the gadget $G_r(X)$.}
\label{figure1}
\end{figure}

\begin{proof}
%{\bf Proof of Theorem~\ref{thm1}.}
Let $r>0$ be an arbitrary integer, let $H$ be an arbitrary digraph on
at least $r$ vertices and let $X \subseteq V(H)$ be a subset of size
$r$.  Let $D$ be the digraph that we obtain from $H$ and $X$ by adding
two new vertex sets $Y_X$ and $Z_X$ of size $r$ to $H$ and all arcs
from $X$ to $Y_X$, all arcs from $Y_X$ to $Z_X$ and all arcs from
$Z_X$ to $X$.  The digraph induced by $X\cup Y_X\cup Z_X$ is the
gadget $G_r(X)$ and we say that the digraph $D$ is obtained from $H$
by adding the gadget $G_r(X)$ to $H$ (see Figure~\ref{figure1}).

Now the following holds:
\begin{enumerate}
  \item[(1)] If $(V_1,V_2)$ is a strong $2$-partition of $D$, then the
    vertices of $X$ cannot all belong to the same set $V_i$.
\item[(2)] For every set $S$ of at most $r-1$ vertices in $D$, we
  have that $D\langle (X \cup Y_X \cup Z_X) \setminus S\rangle$ is strong.
\end{enumerate}

Property (2) follows from the fact that the gadget $G_r(X)$ is clearly
$r$-strong. To prove (1) assume that $(V_1,V_2)$ is a strong
$2$-partition, and without loss of generality assume that $X \subseteq
V_1$.  In this case, $Y_X \subseteq V_2$ as if $y \in Y_X \cap V_1$
then $y$ has no arc into it in $B_D(V_1,V_2)$, a contradiction.
Analogously $Z_Y \subseteq V_1$ (as $Y_Z \subseteq V_2$) and $Z_Y
\subseteq V_2$ (as $X \subseteq V_1$), a contradiction.\\

Now let $U$ be any digraph on $2r-1$ vertices such that
$d^+_U(v)=d^-_U(v)$ for all vertices $v$ of $U$ (in particular, $U$
could be just a stable set). Construct $D'$ from $U$ by adding a
gadget $G_r(X)$ for each of the ${2r-1\choose r}$ subsets $X$ of $r$
vertices of $U$. By property (2) of the gadget $G_r(X)$, $D'$ is
$r$-strong and it is easy to check that is is eulerian. Furthermore,
$D'$ cannot have a strong $2$-partition, because in any $2$-partition
$(V_1,V_2)$ of $V(D')$ there will be a set, $X \subseteq V(U)$, of
size $r$ belonging to the same set $V_i$, contradicting property (1)
of the gadget $G_r(X)$.
\end{proof}

\begin{theorem}
\label{ssbseulerianNPC}
 For every fixed positive integer $r\geq 3$, it is ${\cal NP}$-complete to
 decide whether an $r$-strong eulerian digraph has a strong $2$-partition.
\end{theorem}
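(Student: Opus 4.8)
The plan is to show membership in $\mathcal{NP}$ and then give a polynomial reduction from \textsc{Monotone Not-all-equal-3-SAT}. Membership is immediate: a strong $2$-partition $(V_1,V_2)$ is a certificate, and one checks in polynomial time that $B_D(V_1,V_2)$ is strong. For the hardness I would exploit the gadget $G_r(X)$ from the proof of Theorem~\ref{thm1}: it is $r$-strong and eulerian, and in every strong $2$-partition its apex set $X$ (of size $r$) cannot be monochromatic, while conversely any non-monochromatic colouring of $X$ extends to a strong colouring of $X\cup Y_X\cup Z_X$. Since the forcing argument in Theorem~\ref{thm1} only uses that $Y_X$ receives its in-arcs solely from $X$ and $Z_X$ sends its out-arcs solely to $X$, the vertices of $X$ may carry arbitrarily many further arcs to the rest of the digraph without destroying the ``$X$ is not monochromatic'' property.

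Starting from an instance $\mathcal F$ of \textsc{Monotone Not-all-equal-3-SAT} with variables $x_1,\dots,x_n$ and clauses $C_1,\dots,C_m$, I would create one vertex $u_i$ per variable (its colour encoding the truth value of $x_i$), and for each clause $C_j=\{x_a,x_b,x_c\}$ attach a clause gadget whose role is to force the triple $\{u_a,u_b,u_c\}$ to be non-monochromatic. For $r=3$ this is exactly $G_3(\{u_a,u_b,u_c\})$, so the construction is essentially complete: a strong $2$-partition exists if and only if every clause-triple is non-monochromatic, i.e.\ iff $\mathcal F$ has a NAE assignment. For $r>3$ the apex set must have size $r$, so I would take $X_j=\{u_a,u_b,u_c,d^j_1,\dots,d^j_{r-3}\}$ with clause-private dummies $d^j_t$ and add, for each dummy, an auxiliary forcing gadget that pins $d^j_t$ to the \emph{same} colour as (say) $u_a$. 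With the dummies slaved to $u_a$, ``$X_j$ not monochromatic'' becomes equivalent to ``$\{u_a,u_b,u_c\}$ not monochromatic'', recovering the NAE constraint. Finally I would add an $r$-strong eulerian backbone on $u_1,\dots,u_n$ (connecting the dummy and auxiliary vertices into it) so that the whole digraph $D$ is $r$-strong, eulerian, and so that a NAE assignment yields a bipartite subdigraph that is not merely free of sinks and sources but genuinely strong.

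For correctness, the easy direction reads a strong $2$-partition of $D$: the property~(1)-type analysis of each clause gadget shows that every triple $\{u_a,u_b,u_c\}$ is non-monochromatic, so the colouring of the $u_i$ defines a NAE assignment of $\mathcal F$. In the reverse direction, from a NAE assignment I would colour each $u_i$ by its truth value, colour each dummy to agree with its $u_a$, and extend the colouring over every gadget and the backbone using the extendability of $G_r$; the remaining work, and the crux of the whole proof, is to verify that the resulting bipartite subdigraph is \emph{strongly connected}, not only locally sink/source-free. I would do this by exhibiting explicit directed paths, through the three layers $X\to Y_X\to Z_X\to X$ of each gadget and along the backbone, checking that every vertex can both reach and be reached from a fixed root in $B_D(V_1,V_2)$.

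The main obstacle is the tension between forcing and connectivity. Strong $2$-partition is invariant under swapping the two colour classes, so no vertex can be pinned to an absolute colour; only \emph{relative} constraints (two vertices equal, or unequal) can be enforced, and these must be realised by ``private'' bottleneck structure (as in the $Y_X,Z_X$ layers) whose very purpose is to create low-flexibility spots, exactly what high ($r$-strong) connectivity and eulerian balance resist. Designing the auxiliary equality gadget that slaves a dummy to $u_a$ while keeping the digraph $r$-strong and eulerian is therefore the delicate step for $r>3$; I expect to build it, like $G_r(X)$ itself, from complete bipartite layers of width $r$ arranged with the parity that makes a monochromatic collapse unavoidable, routed so that the only way a dummy can acquire an in-arc in $B_D(V_1,V_2)$ is to share $u_a$'s colour. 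The second, more global, obstacle is certifying strong connectivity of the induced bipartite subdigraph in the ``if'' direction, which is not implied by the local degree conditions and must be argued by explicit reachability through the gadgets and backbone.
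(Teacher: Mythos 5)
Your core idea is the right one --- reuse the gadget $G_r(X)$ of Theorem~\ref{thm1} as an ``$X$ is not monochromatic'' constraint and reduce from a not-all-equal/2-colouring problem --- and for $r=3$ your construction is essentially the paper's proof, since \textsc{Monotone Not-all-equal-3-SAT} is just 2-colourability of $3$-uniform hypergraphs. But for $r>3$ your proposal has a genuine gap that you yourself flag without closing: the auxiliary ``equality gadget'' that slaves each dummy $d^j_t$ to the colour of $u_a$ is never constructed, and it is not clear it can be built from the available primitives. The only relative constraint that $G_r(X)$ (with $|X|=r\geq 3$) gives you is ``not all of these $r$ vertices have the same colour,'' and such constraints cannot force two designated vertices to be equal (or unequal): any single NAE constraint on $\geq 3$ vertices is satisfied by colourings in which $u_a$ and $d^j_t$ agree and by colourings in which they disagree, and shrinking the apex set to size $2$ to force inequality destroys exactly the $r$-strongness you must preserve. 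So the reduction as described does not go through for $r>3$. A second, smaller gap is the unspecified ``$r$-strong eulerian backbone'': arcs added into the $Y_X$ or $Z_X$ layers would invalidate Property~(1) (whose proof needs all in-arcs of $Y_X$ to come from $X$ and all out-arcs of $Z_X$ to go to $X$), and the verification that the induced bipartite subdigraph is actually strong in the ``if'' direction is deferred rather than done.

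The paper avoids all of this by choosing a source problem whose arity already matches $r$: 2-colourability of \emph{$r$-uniform} hypergraphs, which is ${\cal NP}$-complete for every $r\geq 3$ (Lov\'asz), even for connected hypergraphs. Each hyperedge $X_i$ has size exactly $r$, so $G_r(X_i)$ is attached directly --- no dummies, no equality gadgets, no separate backbone. The only arcs of $D_{\cal H}$ are gadget arcs, so every ground vertex is automatically balanced ($r$ out-arcs to $Y_{X_i}$ and $r$ in-arcs from $Z_{X_i}$ per incident hyperedge), giving eulerianicity, and $r$-strongness follows from Property~(2) together with connectivity of $\cal H$. For the ``if'' direction, a proper 2-colouring of $\cal H$ with $p$ vertices of colour $1$ and $q$ of colour $2$ on $X_i$ is extended by colouring $p$ vertices of each of $Y_{X_i},Z_{X_i}$ with $1$ and $q$ with $2$, which makes each gadget's bipartite subdigraph strong; the gadgets overlap on the ground set, so strongness of the whole follows. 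If you want to salvage your proposal, the cleanest fix is to replace \textsc{NAE-3-SAT} by $r$-uniform hypergraph 2-colourability and drop the dummies and the backbone entirely.
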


\begin{proof}
 We prove the result by reduction from 2-colourability of
  $r$-uniform hypergraphs. In this problem, being given an $r$-uniform
  hypergraph, we want to colour its ground set with two colours such
  that no hyperedge is monochromatic. It is known that this problem is
  ${\cal NP}$-complete for $r\ge 3$~\cite{lovaszUM1973}, even restricted to
  connected hypergraphs. So let $\cal H$ be a connected hypergraph on
  ground set $V$ and with hyperedges $\{X_1,\dots ,X_m\}$. We
  construct from $\cal H$ the digraph $D_{\cal H}$ by adding to $V$
  the gadget $G_r(X_i)$ for $i=1,\dots ,m$. By Property~(2) of the
  proof of Theorem~\ref{thm1} and as ${\cal H}$ is connected, $D_{\cal
    H}$ is an $r$-strong eulerian digraph. Using Property~(1) it is
  straightforward to check that $D_{\cal H}$ has a strong $2$-partition
  if, and only if, ${\cal H}$ admits a 2-colouring. 
  \end{proof}

The above proof also works for finding spanning bipartite subgraphs
with semi-degree at least $1$, so we obtain the following.

\begin{theorem}
For every integer $r\geq 3$, it is ${\cal NP}$-complete to decide whether an
$r$-strong eulerian digraph contains a spanning bipartite digraph with
minimum semi-degree at least 1.
\end{theorem}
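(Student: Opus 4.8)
The plan is to reuse verbatim the construction of Theorem~\ref{ssbseulerianNPC}, observing that the only property of strong $2$-partitions used in that argument is the weaker consequence that $B_D(V_1,V_2)$ has minimum semi-degree at least $1$. First I would note membership in ${\cal NP}$: a $2$-partition $(V_1,V_2)$ serves as a certificate, and one checks $\delta^0(B_D(V_1,V_2))\geq 1$ in polynomial time. Before the reduction I would record the elementary equivalence that a digraph $D$ contains a spanning bipartite subdigraph with minimum semi-degree at least $1$ if and only if $D$ has a $2$-partition $(V_1,V_2)$ with $\delta^0(B_D(V_1,V_2))\geq 1$: given such a subdigraph $B$, a proper $2$-colouring of its (bipartite) underlying graph yields a $2$-partition with $B\subseteq B_D(V_1,V_2)$ (there are no isolated vertices to worry about, since $\delta^0(B)\geq 1$), and conversely $B_D(V_1,V_2)$ is itself such a subdigraph.

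Then I would take the same reduction from $2$-colourability of connected $r$-uniform hypergraphs, which is ${\cal NP}$-complete for $r\geq 3$: from $\cal H$ with ground set $V$ and hyperedges $X_1,\dots,X_m$, build $D_{\cal H}$ by adding a gadget $G_r(X_i)$ for each $i$. Exactly as in Theorem~\ref{ssbseulerianNPC}, Property~(2) together with connectedness of $\cal H$ shows that $D_{\cal H}$ is an $r$-strong eulerian digraph, so the reduction stays within the required input class and is plainly polynomial.

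The crux, and the one point that genuinely needs checking, is to re-examine Property~(1) from the proof of Theorem~\ref{thm1}. I would point out that its proof derives a contradiction purely from the existence of a gadget vertex with in-degree $0$ or out-degree $0$ in $B_D(V_1,V_2)$; it never appeals to reachability between arbitrary pairs of vertices. Concretely, if some $X_i$ were monochromatic, say $X_i\subseteq V_1$, then following the arcs $X_i\to Y_{X_i}\to Z_{X_i}\to X_i$ forces $Y_{X_i}\subseteq V_2$, then $Z_{X_i}\subseteq V_1$, and finally produces a vertex of $Z_{X_i}$ with out-degree $0$ in $B_D(V_1,V_2)$, contradicting $\delta^0\geq 1$. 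Hence Property~(1) holds for every $2$-partition with $\delta^0(B_D(V_1,V_2))\geq 1$, not only for strong $2$-partitions; this is the main obstacle, and it is really just a matter of verifying that strong connectivity was never used beyond its implication $\delta^0\geq 1$.

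With this in hand the equivalence closes immediately. In the forward direction, if $D_{\cal H}$ has a $2$-partition with $\delta^0(B_{D_{\cal H}})\geq 1$, then by the strengthened Property~(1) no $X_i$ is monochromatic, so restricting the partition to $V$ gives a proper $2$-colouring of $\cal H$. In the backward direction, if $\cal H$ is $2$-colourable, then by Theorem~\ref{ssbseulerianNPC} the digraph $D_{\cal H}$ even has a strong $2$-partition, which a fortiori satisfies $\delta^0(B_{D_{\cal H}})\geq 1$. Thus the two problems have exactly the same yes-instances under this reduction, and the ${\cal NP}$-completeness for $r$-strong eulerian digraphs follows.
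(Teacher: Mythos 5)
Your proposal is correct and follows essentially the same route as the paper, which simply observes that the reduction of Theorem~\ref{ssbseulerianNPC} "also works" here; you have merely made explicit the two points the paper leaves implicit, namely that Property~(1) of the gadget only uses $\delta^0(B_D(V_1,V_2))\geq 1$ and that a spanning bipartite subdigraph with minimum semi-degree at least $1$ is equivalent to such a $2$-partition. No changes needed.
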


\section{Remarks and open questions}
%%%%%%%%%%%%%%%%%%%

We looked at some natural properties of the spanning bipartite
subdigraphs induced by a $2$-partition. We list some further results
and open problems in that field.

\begin{theorem}[\cite{bangTCS526}]
It is ${\cal NP}$-complete to decide whether a digraph has a cycle factor in
which all cycles are even.
\end{theorem}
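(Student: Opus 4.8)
The plan is to prove the statement by exhibiting a polynomial-time reduction from \textsc{3-SAT}, since membership in $\mathcal{NP}$ is immediate: a cycle factor is a spanning subdigraph in which every vertex has in- and out-degree $1$, which can be guessed and whose cycle-lengths can be checked for evenness in polynomial time. The starting observation is that the difficulty must come entirely from the parity requirement and not from the mere existence of a cycle factor: a digraph has some cycle factor if and only if the associated bipartite graph $B(D)$ (with parts $\{v^+:v\in V\}$ and $\{v^-:v\in V\}$ and an edge $v^+u^-$ for each arc $vu$) has a perfect matching, which is decidable in polynomial time. Moreover, if $D$ is \emph{directed-bipartite}, i.e. $V(D)=A\cup B$ with every arc going between $A$ and $B$, then every directed cycle, hence every cycle of every cycle factor, has even length; so the reduction must produce digraphs that genuinely contain odd cycles, and the whole point is to force the factor to avoid all of them.

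The construction I would use builds a digraph $D=D(\mathcal F)$ from a $3$-SAT formula $\mathcal F$ with variables $x_1,\dots ,x_n$ and clauses $C_1,\dots ,C_m$. Each variable $x_i$ is represented by a \textbf{variable gadget} admitting exactly two ``routings'' of the cycle factor through it, one standing for $x_i=\mathit{true}$ and one for $x_i=\mathit{false}$; the two routings are made mutually exclusive by giving the relevant internal vertices in-degree or out-degree $1$ in $D$, exactly the forcing mechanism used repeatedly in this paper (as with the sink-vertices $c_j$ forced into $V_2$ in Theorem~\ref{delta+-NPC}). Short $2$-cycles, which are even, serve as flexible ``filler'' that can cover auxiliary vertices without changing any parity, while odd mandatory segments are inserted so that a \emph{wrong} or \emph{inconsistent} routing can only be completed by closing up a cycle of odd length. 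Each clause $C_j$ is then represented by a \textbf{clause gadget} attached to the three literal-lanes of its variables, designed so that the gadget can be incorporated into the factor using only even cycles precisely when at least one of its literals is set true; if all three literals are false, every way of covering the clause vertices is forced to create an odd cycle.

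With such gadgets in place, the two directions run as expected. Given a satisfying assignment $\phi$, I would route each variable gadget according to $\phi(x_i)$, cover the filler with $2$-cycles, and complete each clause gadget through a true literal, checking that all resulting cycles are even, so that $D$ has an even cycle factor. Conversely, from an even cycle factor of $D$ the forced in-/out-degree-$1$ vertices pin the routing of every variable gadget to one of its two legal states, which defines a truth assignment $\phi$; the evenness of the cycles covering each clause gadget then certifies that the corresponding clause has a true literal, so $\phi$ satisfies $\mathcal F$.

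The main obstacle, and where the real work lies, is the design and verification of the gadgets in the ``only if'' direction: one must guarantee that the \emph{only} completions of each gadget into a locally even collection of cycles are the two intended variable states and the intended clause coverings, ruling out parasitic routings that repair parity by borrowing arcs between gadgets or by rerouting through the filler $2$-cycles. Getting the parities to line up globally — so that a consistent, satisfying choice yields all-even cycles while any inconsistency or unsatisfied clause is forced to produce an odd cycle — is the delicate combinatorial heart of the reduction; everything else (polynomiality of the construction and any connectivity side conditions) is routine bookkeeping.
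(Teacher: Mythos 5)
There is a genuine gap: your proposal never actually constructs the variable and clause gadgets, and you say yourself that their design and verification is ``where the real work lies.'' What you have written is the generic template of a 3-SAT reduction --- two mutually exclusive routings per variable forced by degree-one vertices, clause gadgets that can be covered evenly iff a literal is true, $2$-cycles as parity-neutral filler --- but none of the claims that carry the proof are established. In particular, you give no argument that a gadget with exactly two legal routings distinguished by cycle \emph{parity} exists at all, nor that parasitic cycles weaving through several gadgets (which a cycle factor is entirely free to use, since cycles need not respect gadget boundaries) cannot restore evenness in an unsatisfying assignment. That last point is not ``routine bookkeeping'': controlling the parity of cycles that leave and re-enter gadgets is exactly the difficulty that makes even-cycle problems in digraphs hard, and it is the step a referee would demand to see. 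As it stands, the proposal asserts the existence of a reduction rather than exhibiting one.

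For context, the paper itself offers no proof of this statement; it is quoted as a known result from the reference \cite{bangTCS526}, where it is obtained in the setting of (arc-)disjoint flows in networks. So there is no in-paper argument to compare your route against, but that does not lower the bar: to count as a proof, your write-up would need explicit gadgets together with a verification of both directions, including the argument that excludes cross-gadget odd cycles.
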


\begin{corollary}
It is ${\cal NP}$-complete to decide whether a digraph $D$ has a $2$-partition
$(V_1,V_2)$ such that the bipartite digraph $B_D(V_1,V_2)$ has a
cycle-factor.
\end{corollary}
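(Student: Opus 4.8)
The plan is to observe that the decision problem in the statement is in fact equivalent, under the trivial (identity) reduction, to the problem of deciding whether $D$ has a cycle factor in which all cycles are even, whose ${\cal NP}$-completeness is the theorem just quoted. Membership in ${\cal NP}$ is immediate: a certificate consists of the $2$-partition $(V_1,V_2)$ together with a cycle factor of $B_D(V_1,V_2)$, and both can be checked in polynomial time.

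The key observation is structural: in any bipartite digraph every directed cycle alternates between the two colour classes and therefore has even length. Hence, if $(V_1,V_2)$ is a $2$-partition such that $B_D(V_1,V_2)$ has a cycle factor $\mathcal{C}$, then, since $B_D(V_1,V_2)$ is spanning, $\mathcal{C}$ is also a cycle factor of $D$, and by the above observation all of its cycles are even. This gives one direction of the equivalence.

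For the converse, I would start from an even cycle factor $\mathcal{C}$ of $D$. Since the cycles of $\mathcal{C}$ are vertex-disjoint and cover $V(D)$, I can $2$-colour each of them independently: along each even cycle $(u_1,u_2,\dots,u_{2k},u_1)$ I put the odd-indexed vertices in $V_1$ and the even-indexed ones in $V_2$. This produces a genuine $2$-partition $(V_1,V_2)$ of $D$, and every arc of every cycle of $\mathcal{C}$ now joins a vertex of $V_1$ to a vertex of $V_2$, so $\mathcal{C}$ is entirely contained in $B_D(V_1,V_2)$ and is a cycle factor of it. Thus $D$ admits the desired $2$-partition.

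Combining the two directions shows that $D$ is a yes-instance of our problem if and only if it is a yes-instance of the even-cycle-factor problem, so the identity map is a valid polynomial reduction and the problem is ${\cal NP}$-hard; together with membership in ${\cal NP}$ this yields ${\cal NP}$-completeness. There is essentially no hard step here: the only point that needs a moment's care is that the alternate colourings of the individual cycles are mutually consistent, which holds precisely because the cycles of a cycle factor are vertex-disjoint and span $D$.
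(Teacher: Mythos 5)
Your proof is correct and is precisely the (implicit) argument behind the paper's corollary, which is stated without proof: the equivalence between even cycle factors of $D$ and cycle factors of some $B_D(V_1,V_2)$, via the observation that bipartite cycles are even in one direction and alternate colouring of each cycle in the other. Nothing further is needed.
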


A {\bf total dominating set} in a graph $G=(V,E)$ is a set of vertices
$X\subseteq V$ such that every vertex of $V$ has a neighbour in $X$.
\begin{theorem}[\cite{heggernesNJC5}]
  It is ${\cal NP}$-complete to decide whether a graph $G$ has a
  $2$-partition $(V_1,V_2)$ so that $V_i$ is a total dominating set of
  $G$ for $i=1,2$.
\end{theorem}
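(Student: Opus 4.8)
The plan is to reduce from {\sc Monotone Not-all-equal-3-SAT}, which is defined and known to be ${\cal NP}$-complete in the excerpt. The starting observation is a reformulation of the target condition in the language of $2$-colourings: a $2$-partition $(V_1,V_2)$ has the property that each $V_i$ is a total dominating set if and only if, in the associated $2$-colouring $c$, \emph{every} vertex has at least one neighbour of colour $1$ and at least one neighbour of colour $2$ (equivalently, no vertex has a monochromatic neighbourhood). Membership in ${\cal NP}$ is immediate: given $(V_1,V_2)$, one checks in polynomial time that every vertex has a neighbour in $V_1$ and a neighbour in $V_2$. Note in passing that this already forces $\delta(G)\ge 2$, so the graph $G$ we build must avoid vertices of degree $0$ or $1$.

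The central tool is a \emph{stabiliser gadget}: to a vertex $u$ we attach a private $4$-cycle $u\,a\,b\,c\,u$ on three new vertices $a,b,c$. Since $a,b,c$ all have degree $2$ in the final graph, the reformulated condition forces $c(a)\neq c(c)$ (because $b$, whose only neighbours are $a$ and $c$, must see both colours) and $c(b)=3-c(u)$ (because $a$ and $c$, whose neighbours are exactly $\{u,b\}$, must each see both colours). Thus the gadget guarantees that $u$ is adjacent to one vertex of each colour, \emph{whatever} the colour of $u$ is, and it can always be completed consistently (set $c(b)=3-c(u)$ and give $a,c$ the two colours in either order). In other words, a stabilised vertex is ``free'': it imposes no constraint on its non-gadget neighbours and is automatically satisfied.

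From a monotone NAE-$3$-SAT instance ${\cal F}$ with variables $x_1,\dots,x_n$ and clauses $C_1,\dots,C_m$ I would build $G$ as follows: create a variable vertex $v_i$ for each $x_i$ and attach a private stabiliser $4$-cycle to it; create a clause vertex $c_j$ for each $C_j$ and join $c_j$ to the three variable vertices whose variables occur in $C_j$; the clause vertices receive \emph{no} stabiliser. Setting $\chi(x_i)=c(v_i)$, the only genuine constraint is then carried by the clause vertices: the unstabilised, degree-$3$ vertex $c_j$ must see both colours among its three neighbours, which is exactly the requirement that the three literals of $C_j$ are not all equal. The variable vertices are stabilised, hence automatically see both colours, so the colours of the clause vertices are irrelevant and unconstrained. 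This yields both directions of the equivalence: a NAE assignment of ${\cal F}$ gives a valid colouring (colour each $v_i$ by $\chi(x_i)$, complete the stabilisers as above, and colour the $c_j$ arbitrarily), and conversely any valid colouring makes $\chi$ a NAE assignment, since each $c_j$ is forced to be non-monochromatic on its three literal vertices.

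The main obstacle I anticipate is precisely the decoupling handled by the stabiliser gadget: because the condition is symmetric and global (both classes must dominate every vertex, including the bookkeeping vertices themselves), a naive construction creates spurious constraints when a variable vertex demands both colours among the clause vertices it meets, or when clause-vertex colours feed back into the variable vertices. Verifying that the pendant $4$-cycle genuinely frees a vertex --- forcing both colours into its neighbourhood while leaving its own colour and those of its other neighbours unconstrained, and remaining satisfiable for either colour of $u$ --- is the crux. Once this local property is established, gluing the gadgets together and checking the two directions of the equivalence is routine, and the construction is clearly polynomial in the size of ${\cal F}$.
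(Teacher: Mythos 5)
Your proposal is correct, but there is nothing in the paper to compare it against: the statement is imported verbatim from Heggernes and Telle \cite{heggernesNJC5}, cited without proof, so your argument is a self-contained substitute rather than a variant of an in-paper proof. The reformulation (both parts are total dominating sets if and only if every open neighbourhood is bichromatic) is right, and the pendant $4$-cycle $u\,a\,b\,c\,u$ does exactly what you claim: the degree-$2$ vertex $b$ forces $c(a)\neq c(c)$, each of $a,c$ forces $c(b)=3-c(u)$, so $u$ automatically sees both colours, the gadget is internally satisfied, and it is completable for either colour of $u$. With the clause vertices unstabilised, the only surviving constraints are that each $c_j$'s three variable neighbours are not monochromatic, which for a monotone formula is precisely the NAE condition, and both directions of the equivalence go through as you describe (in the forward direction the clause vertices can indeed be coloured arbitrarily, since every vertex they dominate is already handled by a stabiliser). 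One small point worth pinning down: if a clause uses fewer than three distinct variables, $c_j$ has degree at most $2$; the degenerate case of a clause on a single variable gives $c_j$ degree $1$ and hence no valid partition, which is consistent since such a clause is NAE-unsatisfiable, but it is cleaner to assume each clause contains three distinct variables. Conceptually, your reduction amounts to observing that the problem is $2$-colourability of the open-neighbourhood hypergraph, which is the same mechanism the paper itself exploits in Theorem~\ref{ssbseulerianNPC} via reduction from $2$-colourability of $r$-uniform hypergraphs, whereas Heggernes and Telle obtain the cited result inside their general framework for partitions into generalized dominating sets; your route is the more elementary and direct of the two.
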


This directly implies the following.

\begin{corollary}
  It is ${\cal NP}$-complete to decide whether a symmetric digraph $D$
  has a $2$-partition $(V_1,V_2)$ such that
  $\delta^+(\induce{D}{V_i})\geq 1$ for $i=1,2$ and
  $\delta^+(B_D(V_1,V_2))\geq 1$.
\end{corollary}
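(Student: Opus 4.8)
The plan is to observe that the stated corollary is an immediate reformulation of the Heggernes et al.\ theorem once we pass from an undirected graph to its associated symmetric digraph. Given an undirected graph $G=(V,E)$, I would let $D$ be the symmetric digraph obtained by replacing each edge $uv$ of $G$ with the two opposite arcs $uv$ and $vu$; every symmetric digraph arises in this way, so this is a polynomial-time faithful reduction between the two problems.

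Next I would translate the two degree conditions into statements about neighbours in $G$. Since $D$ is symmetric, for any set $S$ the out-degree of a vertex $v$ in $\induce{D}{S}$ equals the number of $G$-neighbours of $v$ inside $S$, and likewise the out-degree of $v$ in $B_D(V_1,V_2)$ equals the number of $G$-neighbours of $v$ in the part opposite to the one containing $v$. Hence the condition $\delta^+(\induce{D}{V_i})\geq 1$ for $i=1,2$ says precisely that every vertex has a $G$-neighbour in its own part, while $\delta^+(B_D(V_1,V_2))\geq 1$ says that every vertex has a $G$-neighbour in the other part.

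Combining these, the key step is to check that the conjunction of the two conditions holds if and only if every vertex of $G$ has a neighbour in $V_1$ and a neighbour in $V_2$, that is, if and only if both $V_1$ and $V_2$ are total dominating sets of $G$. Indeed, a vertex $v\in V_1$ has a neighbour in $V_1$ (its own part) by the first condition and a neighbour in $V_2$ (the other part) by the second, and symmetrically for $v\in V_2$; conversely the requirement that both parts be total dominating forces exactly these two neighbour relations at every vertex. With this equivalence in hand, the ${\cal NP}$-hardness is inherited directly from the cited theorem.

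There is essentially no technical obstacle: the only point requiring care is the faithful translation of the directed degree conditions into undirected neighbour conditions, and in particular verifying that ``own-part out-degree at least one'' and ``cross-part out-degree at least one'' together reproduce the two total-domination constraints rather than some weaker condition. Membership in ${\cal NP}$ is clear, since a candidate partition can be verified in polynomial time.
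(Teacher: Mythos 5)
Your proposal is correct and is exactly the argument the paper has in mind: the paper gives no explicit proof, simply asserting that the Heggernes--Telle theorem ``directly implies'' the corollary, and the intended direct implication is precisely your translation (own-part out-degree $\geq 1$ plus cross-part out-degree $\geq 1$ in the symmetric digraph is equivalent to both $V_1$ and $V_2$ being total dominating sets of the underlying graph). Nothing further is needed.
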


For any digraph $D$, it is easy to obtain a $2$-partition of $D$ such
that the bipartite digraph $B_D(V_1,V_2)$ is an eulerian
digraph. Indeed the $2$-partition $(V(D),\emptyset)$ produces a
corresponding bipartite digraph with no arcs which is then eulerian.
On the other hand, if we ask for a bipartite eulerian subdigraph with
minimum semi-degree at least $1$, a slight variation in the proof of
Theorem~\ref{ssbseulerianNPC} gives the following result.

\begin{theorem}
It is ${\cal NP}$-complete to decide whether a digraph $D$ has a
$2$-partition $(V_1,V_2)$ such that $B_D(V_1,V_2)$ is a bipartite
eulerian digraph with minimum semi-degree at least $1$.
\end{theorem}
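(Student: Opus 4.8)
The plan is to reduce from $2$-colourability of connected $r$-uniform hypergraphs (for a fixed $r\geq 3$, which is ${\cal NP}$-complete), reusing \emph{verbatim} the construction from the proof of Theorem~\ref{ssbseulerianNPC}: given a connected $r$-uniform hypergraph $\mathcal{H}$ on ground set $V$ with hyperedges $X_1,\dots,X_m$, build $D_{\mathcal{H}}$ by attaching to $V$ the gadget $G_r(X_i)$ for each $i$. Membership in ${\cal NP}$ is clear, since given a $2$-partition one checks in polynomial time that every vertex of $B_{D_{\mathcal H}}(V_1,V_2)$ is balanced, that $B_{D_{\mathcal H}}(V_1,V_2)$ is weakly connected, and that $\delta^0(B_{D_{\mathcal H}}(V_1,V_2))\geq 1$. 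The goal is to show that $D_{\mathcal{H}}$ admits a $2$-partition whose induced bipartite digraph is eulerian with minimum semi-degree at least $1$ if and only if $\mathcal{H}$ is $2$-colourable.

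For the forward direction I would argue exactly as for Property~(1) in the proof of Theorem~\ref{thm1}, observing that it uses only $\delta^0(B_{D_{\mathcal H}}(V_1,V_2))\geq 1$: if some hyperedge $X_i$ were monochromatic, say $X_i\subseteq V_1$, then successively $Y_{X_i}\subseteq V_2$ (otherwise a vertex of $Y_{X_i}$ has in-degree $0$ in the bipartite digraph), then $Z_{X_i}\subseteq V_1$, and finally the vertices of $X_i$ would have no in-neighbour in the bipartite digraph, contradicting minimum in-degree at least $1$. Hence the restriction of $(V_1,V_2)$ to $V$ is a proper $2$-colouring of $\mathcal{H}$.

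The variation lies entirely in the converse direction. Starting from a proper $2$-colouring, I would extend it to the gadget vertices in a \emph{balanced} way: writing $a_i=|X_i\cap V_1|$ (so $1\leq a_i\leq r-1$), colour the private sets $Y_{X_i}$ and $Z_{X_i}$ so that $|Y_{X_i}\cap V_1|=|Z_{X_i}\cap V_1|=a_i$. The key computation is that, restricted to a single gadget $G_r(X_i)$, every vertex then has in-degree equal to its out-degree in the induced bipartite digraph (each such degree equals $a_i$ or $r-a_i$), and all these degrees are at least $1$ because $1\leq a_i\leq r-1$. Since the arcs incident with a shared ground-set vertex $u$ that come from different gadgets are arc-disjoint, balance is additive, so each $u\in V$ is balanced overall and has positive semi-degrees, while the gadget-private vertices are balanced within their own gadget. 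Thus $B_{D_{\mathcal H}}(V_1,V_2)$ is balanced with $\delta^0\geq 1$.

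It remains to check connectivity, which is needed for the eulerian conclusion. Within one gadget the six non-empty colour classes $X_i\cap V_1,\,Y_{X_i}\cap V_2,\,Z_{X_i}\cap V_1,\,X_i\cap V_2,\,Y_{X_i}\cap V_1,\,Z_{X_i}\cap V_2$ are joined, by the retained arcs, into a blow-up of a directed $6$-cycle, hence a strong subdigraph; since $\mathcal{H}$ is connected, these strong pieces are glued together through the shared ground-set vertices, so $B_{D_{\mathcal H}}(V_1,V_2)$ is connected. A balanced connected digraph with no isolated vertex is eulerian, completing the reduction. The main obstacle I anticipate is precisely reconciling balancedness with the sharing of ground-set vertices across many gadgets; the resolution is the additivity of the balance condition over arc-disjoint gadgets, which is exactly why the per-gadget balanced extension suffices.
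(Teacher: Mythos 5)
Your construction and overall strategy coincide with the paper's: the same digraph $D_{\mathcal H}$, the same balanced extension of a proper $2$-colouring to the gadgets ($a_i$ vertices of colour $1$ in each of $Y_{X_i}$ and $Z_{X_i}$), and the same additivity/connectivity observations to conclude that the cross arcs form a connected balanced digraph with $\delta^0\geq 1$. That direction of your argument is correct and is in fact more detailed than the paper's one-line justification.

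There is, however, one step in your forward direction that does not hold as written. Having deduced $Y_{X_i}\subseteq V_2$ and $Z_{X_i}\subseteq V_1$ from a monochromatic $X_i\subseteq V_1$, you conclude that ``the vertices of $X_i$ would have no in-neighbour in the bipartite digraph.'' But a ground-set vertex $u\in X_i$ generally lies in several hyperedges, so its in-neighbours in $D_{\mathcal H}$ are $\bigcup\{Z_{X_j} : u\in X_j\}$, and a non-monochromatic $X_j\ni u$ can perfectly well supply $u$ with an in-neighbour in $V_2$. Since connectivity of $\mathcal H$ forces hyperedges to share vertices whenever $m\geq 2$, this step fails on essentially every nontrivial instance. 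The repair is immediate and uses only what you have already established: the vertices of $Z_{X_i}$ are private to gadget $i$ and their only out-neighbours lie in $X_i$; with $Z_{X_i}\subseteq V_1$ and $X_i\subseteq V_1$ each such vertex has out-degree $0$ in $B_{D_{\mathcal H}}(V_1,V_2)$, contradicting $\delta^0\geq 1$. (This is exactly how Property~(1) in the proof of Theorem~\ref{thm1} derives its contradiction, at $Z_X$ rather than at $X$.) With that correction your proof matches the paper's.
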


\begin{proof}
We use the same reduction as in the proof of
Theorem~\ref{ssbseulerianNPC} and the same gadget as in the proof of
Theorem~\ref{thm1}. From a hypergraph ${\cal H}$ with hyperedges
$X_1,\dots ,X_m$ we construct the digraph $D_{\cal H}$.  Using the
same arguments as in the proof of Theorem~\ref{thm1}, it is easy to
see that if $D_{\cal H}$ admits a $2$-partition $(V_1,V_2)$ such that
$B_D(V_1,V_2)$ is a bipartite eulerian subdigraph of $D_{\cal H}$ with
minimum semi-degree at least $1$ then no hyperedge of $\cal H$ is
totally contained in a part $V_i$ what means that $\cal H$ is
2-colourable. Conversely if $\cal H$ is 2-colourable, it is possible
to obtain a partition of $D_{\cal H}$ whose arcs going across form a
bipartite eulerian subdigraph of $D_{\cal H}$ with minimum semi-degree
at least $1$. Indeed if an hyperedge $X_i$ contains $p$ vertices of
colour 1 and $q$ vertices of colour 2, then we colour $p$ vertices of
both $Y_{X_i}$ and $Z_{X_i}$ by colour $1$ and we colour the remaining
$q$ vertices of $Y_{X_i}$ and $Z_{X_i}$ by colour $2$.  It is easy to
see now that this partition of $G_r(X_i)$ produces a spanning eulerian
subdigraph of $G_r(X_i)$, and therefore also for $D_{\cal H}$.
\end{proof}

However if we just ask for a non empty bipartite eulerian subdigraph
of a digraph, we obtain the following question.

\begin{question}
What is the complexity of deciding whether a digraph $D$ has a
$2$-partition $(V_1,V_2)$ such that $B_D(V_1,V_2)$ is an eulerian digraph
with at least one arc?
\end{question}

Notice that if we restrict ourselves to the eulerian instances, this latter
question is equivalent to the following one.

\begin{question}
What is the complexity of deciding whether an eulerian digraph $D$ has
a $2$-partition $(V_1,V_2)$ such that $\induce{D}{V_i}$ is eulerian and
non empty for $i=1,2$ ?
\end{question}

Corollary~\ref{strong+1-1} asserts that we can decide in
  polynomial time whether a strong digraph has a \Gpart{}. On the
  other hand a consequence of Claim~\ref{claim4} is that this problem
  becomes ${\cal NP}$-complete if we fix the colour of two vertices. A
  slight modification in the proof of Claim~\ref{claim4} shows that
  it is also the case if we only fix the colour of one vertex. More
  precisely we look at the following problem.

\begin{problem}[\sc $(\delta^+\geq 1,\delta^-\geq 1)$-bipartite-partition-with-a-fixed-vertex]~\\
\underline{Input}: A digraph $D$, a vertex $x$ of $D$ and a colour
$i\in \{1,2\}$.\\ 
\underline{Question}: Does $D$ admit a $(\delta^+\geq 1,\delta^-\geq 1)$-bipartite-partition
$(V_1,V_2)$ such that $x\in V_i$?
\end{problem}

\begin{theorem}
{\sc $(\delta^+\geq 1,\delta^-\geq
  1)$-bipartite-partition-with-a-fixed-vertex} is ${\cal NP}$-complete
even when restricted to strong digraphs.
\end{theorem}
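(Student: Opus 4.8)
Membership in ${\cal NP}$ is immediate: a $2$-partition $(V_1,V_2)$ with $x\in V_i$ is a certificate that can be checked in polynomial time. For hardness the plan is to reduce from 3-SAT and to reuse the strong digraph $W=W({\cal F})$ built in the proof of Theorem~\ref{thm2}, together with Claim~\ref{claim4}, which states that $W$ has a \Gcol{} $c$ with $c(a)=2$ and $c(b)=1$ if and only if ${\cal F}$ is satisfiable (I keep the harmless hypothesis that ${\cal F}$ is satisfied neither by the all-true nor by the all-false assignment). The whole issue is to replace the \emph{two} prescribed colours $c(a)=2$ and $c(b)=1$ by fixing the colour of a \emph{single} new vertex, while keeping the instance strongly connected.

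The plan is to attach to $W$ a tiny gadget that propagates one fixed colour to both $a$ and $b$. I would add two new vertices $x_0,\mu$ and the three arcs $bx_0$, $x_0\mu$ and $\mu a$, arranged so that $d^-(x_0)=1$ (its only in-neighbour is $b$), and $d^-(\mu)=d^+(\mu)=1$ (its only in-neighbour is $x_0$ and its only out-neighbour is $a$). Call the resulting digraph $D'$ and let the fixed vertex be $x_0$ with prescribed colour $2$. Since $W$ is strong and the three new arcs form the path $b\to x_0\to\mu\to a$ joining the vertices $b,a$ of $W$, the digraph $D'$ is again strong. Now suppose $(V_1,V_2)$ is a \Gcol{} of $D'$ with $c(x_0)=2$. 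As $x_0\in V_2$ and its unique in-neighbour is $b$, we must have $c(b)=1$; as the unique in-neighbour $x_0$ of $\mu$ lies in $V_2$, the vertex $\mu$ cannot lie in $V_2$, so $c(\mu)=1$; and as $\mu\in V_1$ has unique out-neighbour $a$, we must have $c(a)=2$. Thus fixing the single colour $c(x_0)=2$ rigidly forces both $c(a)=2$ and $c(b)=1$.

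The step I expect to be the real obstacle is exactly this simultaneous forcing while staying strong: pinning a vertex to colour $2$ (here $a$) is easy, but a naive chain that tries to pin a vertex to colour $1$ (here $b$) through an internal, \emph{derived} colour is bistable (the whole chain can be recoloured) and leaves $b$ free. The gadget above circumvents this by using the \emph{externally fixed} vertex $x_0$ as the colour-$2$ anchor and making $b$ its unique in-neighbour, so that $c(b)=1$ is pinned directly; the auxiliary vertex $\mu$ then converts the same anchor into a solid colour-$1$ vertex that pins $c(a)=2$. To finish the reduction, in the forward direction I would feed the now-guaranteed colours $c(a)=2$, $c(b)=1$ into the clause argument of Claim~\ref{claim4}, which refers only to the out- and in-neighbourhoods of the vertices $c_j$ (unchanged in $D'$), to read off a satisfying assignment of ${\cal F}$. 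In the converse direction a satisfying assignment gives, by Claim~\ref{claim4}, a \Gcol{} of $W$ with $c(a)=2$, $c(b)=1$, which I extend to $D'$ by setting $c(x_0)=2$ and $c(\mu)=1$; one then checks the four boundary conditions (the in-neighbour of $x_0$, the out-neighbour of $\mu$, and that $a$ and $b$ still have in $W$ a neighbour of the required colour, which follows from the not-all-equal hypothesis), completing the proof.
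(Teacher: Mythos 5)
Your proposal is correct and is essentially identical to the paper's own proof: the paper likewise augments $W$ with a directed path of two new vertices from $b$ to $a$ (its $c,d$ play the roles of your $x_0,\mu$), fixes the first new vertex to colour $2$, and uses the same unique-in-neighbour/unique-out-neighbour forcing to pin $c(b)=1$ and $c(a)=2$ before invoking Claim~\ref{claim4}. No substantive difference.
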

\begin{proof}  
We use the same reduction from 3-SAT than in the proof of
Claim~\ref{claim4} and the same gadget $W$ to encode a 3-SAT formula
$\cal F$.  We add two vertices $c$ and $d$ and the arcs $bc$, $cd$ and
$da$ to $W$ and call $W'$ the resulting digraph. We consider $W'$ as
an instance of {\sc $(\delta^+\geq 1,\delta^-\geq
  1)$-bipartite-partition-with-a-fixed-vertex} where we ask that $c\in
V_2$. Putting $c$ in $V_2$ forces $b$ and $d$ to be in $V_1$ and $a$
to be in $V_2$. So using Claim~\ref{claim4} we have that $\cal F$ is
satisfiable if and only if $D'$ is a positive instance of the problem.
\end{proof}

%% \iffalse
%% Finally the following could also be an interesting problem.

%% \begin{question}
%% What is the complexity of deciding whether a digraph $D$ has a
%% $2$-partition $(V_1,V_2)$ such that $B_D(V_1,V_2)$ is a planar non empty
%% digraph?
%% \end{question}
%% \fi

%\subsection{$(\delta^+\geq 1,\delta\geq 1)$-bipartite-partitions}
%%%%%%%%%%%%%%%%%%%%%%%%%%%%%%

Finally if we just want a 2-partition $(V_1,V_2)$ so that every vertex in
$V_1$ has an out-neighbour in $V_2$ and every vertex in $V_2$ has a
neighbour (can be out- or in-) in $V_1$, then it turns out that such a
partition always exists.

\begin{theorem}
Every digraph $D$ with $\delta(D)\geq 1$ has a $(\delta^+\geq
1,\delta\geq 1)$-bipartite-partition.
\end{theorem}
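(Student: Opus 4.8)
The plan is to reduce to the sink-free case by a preprocessing step that disposes of all vertices of out-degree $0$, and then to treat the sink-free remainder with a two-level extremal argument on the cut. First I would set aside the troublesome vertices. Let $S=\{v : d^+_D(v)=0\}$ be the set of sinks. Since $\delta(D)\geq 1$, every sink has at least one in-neighbour, and every in-neighbour of a sink has out-degree at least $1$; hence $N^-(S)\cap S=\emptyset$. I would then freeze $S\subseteq V_2$ and $T:=N^-(S)\subseteq V_1$. With this choice both frozen sets are satisfied no matter how the rest is coloured: each $t\in T$ has an out-neighbour in $S\subseteq V_2$, so it meets the $\delta^+\geq 1$ requirement, and each sink $w\in S$ has an in-neighbour in $N^-(\{w\})\subseteq T\subseteq V_1$, so it meets the $\delta\geq 1$ requirement. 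Crucially, the set $U:=V(D)\setminus(S\cup T)$ contains no sink, and the satisfaction of $S$ and $T$ does not depend on the colours chosen inside $U$.

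Next I would colour $U$ extremally while keeping $S,T$ fixed. Among all $2$-colourings of $U$ I would first maximise the number of arcs going from $V_1$ to $V_2$, and then, subject to that, minimise the number of \emph{deficient} vertices, i.e. vertices of $V_1$ having no out-neighbour in $V_2$ (observe that vertices of $T$ are never deficient). The first key step is to show that at any colouring maximising the $V_1\to V_2$ cut there is no \emph{bad} vertex, a bad vertex being one in $V_2$ with no neighbour in $V_1$. Indeed, a bad vertex $u$ must lie in $U$ (vertices of $S$ have a neighbour in $T$, and $T\subseteq V_1$); recolouring $u$ into class $1$ changes the cut by the number of out-neighbours of $u$ in $V_2$ (the number of in-neighbours of $u$ in $V_1$ being $0$, as $u$ is bad), so maximality forces this number to be $0$, whence $u$ has no out-neighbour at all and is a sink, contradicting $u\in U$. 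This already gives the $\delta\geq 1$ condition on $V_2$.

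The second key step is to show that at the lexicographic optimum there is no deficient vertex. If $v\in V_1$ were deficient, then $v\in U$ is a non-sink all of whose out-neighbours lie in $V_1$; recolouring $v$ into class $2$ changes the cut by the number of in-neighbours of $v$ in $V_1$, which maximality forces to be $0$. Thus the cut is unchanged and still maximum, while the number of deficient vertices strictly drops, since moving a vertex into $V_2$ cannot deprive any other $V_1$-vertex of a $V_2$-out-neighbour; this contradicts minimality. Hence $V_1$ meets the $\delta^+\geq 1$ condition as well, and together with the preprocessing we obtain the desired $(\delta^+\geq 1,\delta\geq 1)$-bipartite-partition.

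The main obstacle is precisely the sinks. A bare cut-maximisation cannot on its own enforce the \emph{directed} requirement on $V_1$, and in particular cannot prevent a sink whose in-neighbours all fall into $V_2$ from violating the $V_2$-condition, because such a sink can neither be moved to $V_1$ (it has no out-neighbour) nor be repaired by a single local recolouring. Freezing $S$ into $V_2$ together with the in-dominating set $T=N^-(S)$ into $V_1$ is exactly what removes this difficulty: it settles the sinks once and for all and leaves a sink-free instance on which the two recolouring moves above behave cleanly. The only remaining routine verification is that the frozen classes never interfere with these moves, which is immediate from the fact that the conditions on $S$ and $T$ hold independently of the colouring of $U$.
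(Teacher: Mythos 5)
Your proof is correct, but it follows a genuinely different route from the paper's. The paper's argument is a layering construction: it places one vertex from each terminal strong component into a first layer $X_1$, builds successive layers $X_{i+1}$ from the yet-unplaced in-neighbours of $X_i$, and colours layers by parity; every vertex outside $X_1$ then has an out-neighbour in the adjacent layer of the opposite colour, and every vertex of $X_1$ has an in-neighbour in $X_2$. Your argument instead freezes the sinks $S$ into $V_2$ and their in-neighbours $T=N^-(S)$ into $V_1$ (which is sound: $S\cap N^-(S)=\emptyset$, and both frozen sets are satisfied independently of the rest), and then runs a lexicographic local search on the sink-free remainder, first maximising the number of $V_1\to V_2$ arcs and then minimising the number of vertices of $V_1$ without a $V_2$-out-neighbour. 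Both recolouring computations check out: a $V_2$-vertex with no neighbour in $V_1$ would have to be a sink by cut-maximality, and a deficient $V_1$-vertex can be pushed into $V_2$ without changing the cut while strictly reducing the deficiency count, since enlarging $V_2$ cannot create new deficient vertices. What each approach buys: the paper's layering is shorter and immediately yields a linear-time algorithm via a reverse breadth-first search from the terminal components; your extremal argument is the natural directed refinement of the max-cut proof of Proposition~\ref{prop:easy}, is also constructive in polynomial time (each move strictly improves a bounded lexicographic potential), and isolates cleanly why sinks are the sole obstruction to a bare cut-maximisation.
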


\begin{proof}
Clearly we may assume that $D$ is connected.  Let $X_1$ contain one
vertex from each terminal component of $D$ (if $D$ is strong then $D$
itself is a terminal component and $|X_1|=1$). Note that $X_1$ is a
stable set.  Let $X_2$ be all vertices not in $X_1$ with an arc
into $X_1$.  Let $X_3$ be all vertices not in $X_1 \cup X_2$ with an
arc into $X_2$.  Let $X_4$ be all vertices not in $X_1 \cup X_2 \cup
X_3$ with an arc into $X_3$.  continue this process until some
$X_k=\emptyset$. Note that $V(D)=X_1 \cup X_2 \cup \cdots \cup
X_{k-1}$ as every vertex in $D$ has a path into a vertex from $X_1$.
Let $V_1$ contain all $X_i$ when $i$ is even and let $V_2$ contain all
$X_i$ when $i$ is odd.  Note that as every vertex in $X_{i}$ has an
arc into $X_{i-1}$ when $i>1$ and every vertex in $X_1$ has an arc
into it from $X_2$ as $D$ is connected. So the partition $(V_1,V_2)$
is a $(\delta^+\geq 1,\delta\geq 1)$-bipartite-partition of $D$.
\end{proof}

It could be interesting to extend the previous result to other
values of $k_1$ and $k_2$ or at least to determine the complexity of
finding such a partition.

\begin{question}
For any fixed pair $(k_1, k_2)$ of positive integers, what is the
complexity of deciding whether a given digraph has a $(\delta^+\geq
k_1,\delta\geq k_2)$-bipartite-partition?
\end{question}


\begin{thebibliography}{10}

\bibitem{alonCPC15}
N.~Alon.
\newblock {Splitting digraphs}.
\newblock {\em Combin. Probab. Comput.}, 15:933--937, 2006.

\bibitem{bangTCS526}
J.~Bang-Jensen and S.~Bessy.
\newblock {(Arc-)disjoint flows in networks}.
\newblock {\em Theoretical Computer Science}, 526:28--40, 2014.

\bibitem{bangman17}
J.~Bang-Jensen, S.~Bessy, F.~Havet, and A.~Yeo.
\newblock Out-degree reducing 2-partitions of digraphs, 2017.

\bibitem{bangTCS640}
J.~Bang-Jensen, N.~Cohen, and F.~Havet.
\newblock {Finding good 2-partitions of digraphs II. Enumerable properties}.
\newblock {\em Theoretical Computer Science}, 640:1--19, 2016.

\bibitem{bang2009}
J.~Bang-Jensen and G.~Gutin.
\newblock {\em {Digraphs: Theory, Algorithms and Applications, 2nd Edition}}.
\newblock Springer-Verlag, London, 2009.

\bibitem{bangTCS636}
J.~Bang-Jensen and F.~Havet.
\newblock {Finding good 2-partitions of digraphs I. Hereditary properties}.
\newblock {\em Theoretical Computer Science}, 636:85--94, 2016.

\bibitem{goncalvesDAM160}
D.~Gon\c{c}alves, F.~Havet, A.~Pinlou, and S.~Thomass\'e.
\newblock {On spanning galaxies in digraphs}.
\newblock {\em Discrete Applied Math.}, 160:744--754, 2012.

\bibitem{heggernesNJC5}
P.~Heggernes and J.A. Telle.
\newblock Partitioning graphs into generalized dominating sets.
\newblock {\em Nordic J. Comput.}, 5:128--142, 1998.

\bibitem{lovaszUM1973}
L.~Lov\`asz.
\newblock Coverings and colorings of hypergraphs.
\newblock In {\em Proc. 4th S.E. Conf. on Combinatorics, Graph Theory and
  Computing}, page 3–12. Utilitas Math., 1973.

\bibitem{robertsonAM150}
N.~Robertson, P.D. Seymour, and R.~Thomas.
\newblock {Permanents, Pfaffian orientations, and even directed circuits}.
\newblock {\em Ann. Math.}, 150:929--975, 1999.

\bibitem{shaeferSTOC10}
T.J. Schaefer.
\newblock {The complexity of satisfiability problems}.
\newblock In {\em Proceedings of the 10th Annual ACM Symposium on Theory of
  Computing (STOC 10)}, pages 216--226, New York, 1978. ACM.

\bibitem{thomassenEJC6}
C.~Thomassen.
\newblock {Even cycles in directed graphs}.
\newblock {\em Eur. J. Combin.}, 6(1):85--89, 1985.

\end{thebibliography}
\end{document}